\newtheorem{theorem}{Theorem}[section]
\newtheorem{lemma}[theorem]{Lemma}
\newtheorem{prop}[theorem]{Proposition}
\newtheorem{cor}[theorem]{Corollary}
\def\l{\lambda}
\def\a{\alpha}
\def\b{\beta}
\def\g{\gamma}
\def\d{\delta}
\def\e{\varepsilon}
\def\r{\rho}
\def\s{\sigma}
\def\t{\tau}
\def\z{\zeta}
\def\R{\mathbb{R}}
\def\S{\mathbb{S}}
\def\C{\mathbb{C}}
\def\N{\mathbb{N}}
\def\Z{\mathbb{Z}}
\def\M{\mathcal{M}}
\def\Hab{H_{\alpha, \beta}}
\def\H0b{H_{0, \beta}}
\def\Ha0{H_{\alpha, 0}}
\def\mbf2{\mathbf{2}}
\def\sk{\sin\frac{k\pi}{N}}
\def\sl{\sin\frac{l\pi}{N}}
\def\sm{\sin\frac{m\pi}{N}}
\def\1N1{1 \leq k \leq N-1}
\def\zdir{\mathcal{Z}_{Dir}}
\def\thmz{\Theta_{\mathcal{Z}}}
\def\sz{S_{\mathcal{Z}}}
\def\Habd{H_{\a,\b}^{D}}
\def\H0bd{H_{0,\b}^{D}}
\def\Qabd{Q^{D}_{\a,\b}}
\def\Qbd{Q^{D}_{0,\b}}
\def\Qad{Q^{D}_{\a,0}}
\def\Qd{Q^{D}_{0,1}}
\def\Qtoda{Q^{D}_{\a,\a^2}}
\def\fn4{\frac{N}{4}}
\begin{document}

\title{Resonant normal form for even periodic FPU chains}
\author{Andreas Henrici\footnote{Supported in part by the Swiss National Science Foundation} \and Thomas Kappeler\footnote{Supported in part by the Swiss National Science Foundation, the programme SPECT and the European Community through the FP6 Marie Curie RTN ENIGMA (MRTN-CT-2004-5652)}}

%Supported in part by the Swiss National Science Foundation, the programme SPECT and the European Community through the FP6 Marie Curie RTN ENIGMA (MRTN-CT-2004-5652). 

\maketitle

\begin{abstract}
In this paper we investigate periodic FPU chains with an even number of particles. We show that near the equilibrium point, any such chain admits a \emph{resonant} Birkhoff normal form of order four which is \emph{completely integrable} - an important fact which helps explain the numerical experiments of Fermi, Pasta, and Ulam. We analyze the moment map of the integrable approximation of an even FPU chain. Unlike in the case of odd FPU chains these integrable systems (generically) exhibit hyperbolic dynamics. As an application we prove that any FPU chain with Dirichlet boundary conditions admits a Birkhoff normal form up to order four and show that a KAM theorem applies.\footnote{2000 Mathematics Subject Classification: 37J10, 37J40, 70H08}
\end{abstract}

%\tableofcontents

\section{Introduction} \label{introduction}

In this paper we consider FPU chains with an even number $N$ of particles of equal mass, normalized to be one. Such chains have been introduced by Fermi, Pasta, and Ulam \cite{fpu}, as models to test numerically the principle of thermalization as $N$ grows larger and larger. A FPU chain consists of a string of particles moving on the line or the circle interacting only with their nearest neighbors through nonlinear springs. Its Hamiltonian is given by
\begin{equation} \label{hvgendef}
  H_V = \frac{1}{2} \sum_{n=1}^N p_n^2 + \sum_{n=1}^N V(q_n - q_{n+1}),
\end{equation}
where $V: \R \to \R$ is a smooth potential. The corresponding Hamiltonian equations read $(1 \leq n \leq N)$
\setlength\arraycolsep{2pt}{\begin{eqnarray*}
  \dot{q}_n & = & \partial_{p_n} H_V = p_n, \\
  \dot{p}_n & = & -\partial_{q_n} H_V = -V'(q_n - q_{n+1}) + V'(q_{n-1} - q_n).
\end{eqnarray*}}
Here $q_n$ denotes the displacement of the $n$'th particle from its equilibrium position and $p_n$ is its momentum. If not stated otherwise, we assume periodic boundary conditions
\begin{equation} \label{perboundcond}
  (q_{i+N}, p_{i+N}) = (q_i, p_i) \quad \forall i \in \{ 0,1 \}.
\end{equation}

Without loss of generality, the potential $V: \R \to \R$ is assumed to have a Taylor expansion at $0$ of the form
\begin{equation} \label{potentialdef}
  V(x) = \kappa \left( \frac{1}{2} x^2 - \frac{\a}{3!} x^3 + \frac{\b}{4!} x^4 + \ldots \right),
\end{equation}
where $\kappa$ is the (linear) spring constant normalized to be $1$ and $\a, \b \in \R$ are parameters measuring the strength of the nonlinear interaction. The minus sign in front of the parameter $\alpha$ in the expansion (\ref{potentialdef}) turns out to be convenient for later computations. Substituting the expression (\ref{potentialdef}) for $V$ into (\ref{hvgendef}), the corresponding expansion of $H_V$ is given by
\begin{equation} \label{hvspecialdef}
  H_V = \frac{1}{2} \sum_{n=1}^N p_n^2 + \frac{1}{2} \sum_{n=1}^N (q_{n\!+\!1} \!-\! q_n)^2 + \frac{\a}{3!} \sum_{n=1}^N (q_{n\!+\!1} \!-\! q_n)^3 + \frac{\b}{4!} \sum_{n=1}^N (q_{n\!+\!1} \!-\! q_n)^4 + \ldots.
\end{equation}
For any FPU chain, the total momentum $P = \frac{1}{N} \sum_{n=1}^N p_n$ is an integral of motion,
%\begin{displaymath}
%  \dot{P} = \frac{1}{N} \sum_{n=1}^N \dot{p}_n = \frac{1}{N} \sum_{n=1}^N (V'(q_n - q_{n+1}) - V'(q_{n-1} - q_n)) = 0,
%\end{displaymath}
and therefore the center of mass $Q = \frac{1}{N} \sum_{n=1}^N q_n$ evolves with constant velocity. Hence any FPU chain can be viewed as a family of Hamiltonian systems of $2N-2$ degrees of freedom, parametrized by the vector of initial conditions $(Q,P) \in \R^2$ with Hamiltonian independent of $Q$. In particular, for $N=2$ any FPU chain is integrable, and hence we will concentrate on the case $N \geq 3$. Further note that for any vector $(Q,P) \in \R^2$, the origin in $\R^{N-2}$ is an equilibrium point of the corresponding system. The momentum of such an equilibrium point is given by the constant vector $(p_1, \ldots, p_N) = P \, (1, \ldots, 1)$.

%FPU chains with an even number of particles typically (i.e. if $\b \neq \a^2$) do not admit a Birkhoff normal form up to order $4$ due to resonances. Our analysis of odd FPU chains leads in the case of even FPU chains to a \emph{resonant} Birkhoff normal form up to order $4$ which we use in subsequent work \cite{ahtk5} to show that their Hamiltonians truncated at fourth order are nevertheless integrable systems in the sense of Liouville.

%Let $I = (I_k)_{\1N1}$ denote the action variables $I_k = \frac{1}{2}(x_k^2 + y_k^2)$. It turns out that the Birkhoff normal form of order $2$ of $H_V$ is given by $\frac{N P^2}{2} + 2 \sum_{k=1}^{N-1} \sin \frac{k\pi}{N} I_k$ - see below. 
The frequencies $(\omega_k^0)_{\1N1}$ of the linearization of an arbitrary FPU chain at $(q,p) = (0,0)$ can easily computed to be
        \[ \omega_k^0 = 2 \sin \frac{k\pi}{N}.
\]
The corresponding resonance lattice is given by (see Appendix A of \cite{ahtk4})
\begin{displaymath}
  \left\{ l = (l_1, \ldots, l_{N-1}) \in \Z^{N-1} \Big| \sum_{k=1}^{N-1} l_k \, \sin \frac{k \pi}{N} = 0 \right\}
\end{displaymath}
and generated by the vectors $l^{(k)}$, $\1N1$, defined by $l^{(k)} = e_k - e_{N-k}$, where $e_i$, $1 \leq i \leq N-1$, denotes the standard basis in $\R^{N-1}$. %with $l^{(k)}_k=1$, $l^{(k)}_{N-k}=-1$, and $l^{(k)}_l=0$ ($l \notin \{ k,N-k \}$).

%It turns out to be convenient to introduce the variables 
For any point $(x,y) \in \R^{2N-2}$ introduce the variables $I = (I_k)_{\1N1}$, where
\begin{equation} \label{actiondef}
I_k = \frac{1}{2}(x_k^2 + y_k^2),
\end{equation}
as well as $M = (M_k)_{\1N1}$, $J = (J_k)_{\1N1}$, and $L = (L_k)_{\1N1}$. They are defined on $\R^{2N-2}$ with values in $\R^{N-1}$ and given by
\begin{equation} \label{jkmkdef}
  M_k \!=\! \frac{1}{2} \! \left( x_k y_{N\!-\!k} \!-\! x_{N\!-\!k} y_k \right); \; J_k \!=\! \frac{1}{2} \! \left( x_k x_{N\!-\!k} \!+\! y_k y_{N\!-\!k} \right); \; L_k \!=\! \frac{1}{2} \! \left( I_k - I_{N-k} \right).
\end{equation}
Note that for any $\1N1$, $(M_k, J_k, L_k) = (-M_{N-k}, J_{N-k}, -L_{N-k})$ and $I_k I_{N-k} = M_k^2 + J_k^2$, or
\begin{equation} \label{hopfintro}
  \left( \frac{I_k + I_{N-k}}{2} \right)^2 = M_k^2 + J_k^2 + L_k^2,
\end{equation}
i.e. $M_k$, $J_k$, $L_k$ are the Hopf variables expressed in $x_k$, $y_k$, $x_{N-k}$, $y_{N-k}$. They describe the image of the Hopf map from the three-dimensional sphere of radius $\frac{1}{2} (I_k + I_{N-k})$ centered at the orgin of $\R^4$. Further define the function $\Hab: \R^{N-1} \to \R$, given by
\begin{equation} \label{bnfintrotheorem}
  \Hab(I) = 2 \! \sum_{k=1}^{N-1} \! \sk \, I_k + \frac{1}{4N} \sum_{k=1}^{N-1} d_k^+ I_k^2 + \frac{\b \!-\! \a^2}{2N} \!\!\!\!\!\!\!\! \sum_{l \neq m \atop 1 \leq l,m \leq N-1} \!\!\!\!\!\!\!\! \sl \sm I_l I_m
\end{equation}
where
\begin{equation} \label{ckabdef}
d_k^+ \equiv d_k^+(\a,\b) := \a^2 + (\b-\a^2) \sin^2 \frac{k \pi}{N}
\end{equation}
and let
\begin{equation} \label{rabdef}
  R_{\a,\b}(J,M) := \frac{\b - \a^2}{4N} \left( R(J,M) + R_\frac{N}{4}(J,M) \right)
\end{equation}
where
\begin{equation} \label{rdef}
R(J,M) = 4 \sum_{1 \leq k < \frac{N}{4}} \sin \frac{2 k \pi}{N} \left( J_k J_{\frac{N}{2} - k} - M_k M_{\frac{N}{2} - k} \right)
\end{equation}
and
\begin{equation} \label{rn4def}
  R_\frac{N}{4}(J,M) = \left\{ \begin{array}{ll}
J_\frac{N}{4}^2 - M_\frac{N}{4}^2 & \quad \textrm{if } \frac{N}{4} \in \N \\
0 & \quad \textrm{otherwise.} \end{array} \right.
\end{equation}
Note that for $\a, \b \in \R$ with $\b = \a^2$ - referred to as Toda case -, $R_{\a,\b}$ vanishes.

In \cite{ahtk4} we proved that even FPU chains admit a resonant normal form near the equilibrium point.

\begin{theorem} \label{bnffputheoremeven}
%Let $\a, \b \in \R$ with $(\a,\b) \neq (0,0)$. 
If $N \! \geq \! 4$ is even, there are canonical coordinates $(x_k, y_k)_{1 \! \leq \! k \leq \! N-1}$ so that the Hamiltonian of \emph{any} FPU chain, when expressed in these coordinates, takes the form $H_V^{trunc}(I,J,M) + O(|(x,y)|^5)$ where
\begin{equation} \label{evenfpuformula}
H_V^{trunc} = \frac{N P^2}{2} + \Hab(I) - R_{\a,\b}(J,M)
\end{equation}
with $\a, \b \in \R$ as in (\ref{potentialdef}) and where $\Hab(I)$ and $R_{\a,\b}(J,M)$ are given by (\ref{bnfintrotheorem}) and (\ref{rabdef}), respectively.
\end{theorem}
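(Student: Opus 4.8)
The plan is to compute the Birkhoff normal form of $H_V$ up to order four, retaining only the monomials resonant with respect to the linear frequencies $\omega_k^0 = 2\sk$. First I would diagonalize the quadratic part: passing to the discrete Fourier transform adapted to the periodic boundary conditions (\ref{perboundcond}) and splitting off the center-of-mass mode $(Q,P)$ (which contributes exactly the $\frac{NP^2}{2}$ term, since $\frac12\sum_n p_n^2 = \frac{NP^2}{2} + \frac12\sum_n \tilde p_n^2$), one obtains canonical coordinates $(x_k,y_k)_{\1N1}$ in which $H_2 = 2\sum_{k=1}^{N-1}\sk\,I_k$ with $I_k$ as in (\ref{actiondef}). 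Because $q_{n+1}-q_n$ transforms with the factor $e^{2\pi i k/N}-1$ of modulus $\omega_k^0$, substituting into (\ref{hvspecialdef}) expresses the cubic and quartic parts of $H_V$ as sums of mode products subject to the selection rule $\pm k_1\pm\dots\pm k_j\equiv 0\ (\mathrm{mod}\ N)$ for $j=3,4$, which comes from $\sum_n e^{2\pi i(k_1+\dots+k_j)n/N}=N\delta_{(\cdot)\equiv 0}$.

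Next I would carry out the order-by-order normalization by a Lie transform. At third order one checks from the explicit frequencies $2\sk$ (strictly concave in $k$ on $[0,N/2]$) that there is no resonance $\e_1\omega_{k_1}^0+\e_2\omega_{k_2}^0+\e_3\omega_{k_3}^0=0$ compatible with the selection rule; hence the entire cubic part is removable by a symplectic change of variables generated by the $F_3$ solving the homological equation $\{F_3,H_2\}=H_3$. The crucial point is that this transformation feeds back into fourth order through the term $\tfrac12\{F_3,H_3\}$, which is proportional to $\a^2$. Combining it with the genuine quartic term proportional to $\b$ is what produces the combinations $\b-\a^2$ and $d_k^+=\a^2+(\b-\a^2)\ssk$ in (\ref{bnfintrotheorem})--(\ref{ckabdef}).

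Finally I would identify the resonant fourth-order monomials, i.e. those $z^a\bar z^b$ with $a-b$ lying in the resonance lattice generated by the $e_k-e_{N-k}$. The balanced products ($a=b$) give the action part of $\Hab(I)$, namely the $I_k^2$ and $I_lI_m$ terms. The pairing $\omega_k^0=\omega_{N-k}^0$ together with the selection rule then forces the only surviving off-diagonal resonances to couple the modes $k,N-k$ with $N/2-k,N/2+k$ (the equation $2(k+l)\equiv 0\ (\mathrm{mod}\ N)$ selects $l=N/2-k$, which requires $N$ even); expressed in the Hopf variables $J_k,M_k$ of (\ref{jkmkdef}) these are exactly the terms $J_kJ_{\frac{N}{2}-k}-M_kM_{\frac{N}{2}-k}$ of $R(J,M)$ in (\ref{rdef}), with the self-paired case $k=\frac{N}{2}-k$, i.e. $k=\frac{N}{4}$, producing $R_{\frac{N}{4}}$ as in (\ref{rn4def}). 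Collecting the surviving terms and reducing the trigonometric coefficient sums to the closed forms $d_k^+$ and $\sl\sm$ yields (\ref{evenfpuformula}).

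I expect the main obstacle to be the fourth-order bookkeeping: correctly computing the feedback term $\tfrac12\{F_3,H_3\}$ and adding it to the quartic term, then reducing the resulting multiple trigonometric sums to the closed coefficients $d_k^+$ and $\sl\sm$ while isolating the even-$N$-specific resonances responsible for $R_{\a,\b}$. Keeping consistent track of the two mode-pairings $k\leftrightarrow N-k$ and $(N/2-k)\leftrightarrow(N/2+k)$ throughout, and verifying that all remaining off-diagonal terms reassemble into the Hopf combinations $J_kJ_{\frac{N}{2}-k}-M_kM_{\frac{N}{2}-k}$, is where the computation is most delicate and error-prone.
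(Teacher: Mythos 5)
Your outline is essentially the authors' own argument, with one caveat about provenance: the present paper does not prove Theorem~\ref{bnffputheoremeven} at all but quotes it from the companion work \cite{ahtk4}, and the proof given there follows exactly the route you describe --- Fourier diagonalization with the center-of-mass split $\frac{1}{2}\sum_n p_n^2 = \frac{NP^2}{2}+\frac{1}{2}\sum_n(p_n-P)^2$, removal of the cubic part (no third-order resonances, as the resonance lattice generated by the $e_k-e_{N-k}$ contains only vectors with even coefficient sum), the feedback term $\frac{1}{2}\{F_3,H_3\}\propto\a^2$ producing the combinations $\b-\a^2$ and $d_k^+$, and the momentum condition $2(k+l)\equiv 0 \pmod N$ isolating the even-$N$ resonances that assemble into $J_kJ_{\frac{N}{2}-k}-M_kM_{\frac{N}{2}-k}$ and $R_{\frac{N}{4}}$. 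I see no gap in the strategy; what remains is the quartic bookkeeping that you yourself identify as the laborious part, which is precisely what is carried out in \cite{ahtk4}.
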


%\emph{Applications:} In the case where $N$ is odd, Theorems \ref{bnffputheorem} and \ref{bnfproperties} allow to apply for any given $\a \in \R$ the classical KAM theorem (see e.g. \cite{poeschelkam}) near the equilibrium point to the FPU chain with Hamiltonian $H_V$ for a real analytic potential $V = \frac{1}{2} x^2 + \frac{\a}{3!} x^3 + \frac{\b}{4!} x^4 + \ldots$ with $\b \in \R \setminus \{ \b_1(\a), \ldots, \b_{N-1}(\a) \}$. Moreover, as for any given $\a \in \R \setminus \{ 0 \}$, $\Qab$ is positive definite for $\b_1(\a) < \b < \b_2(\a)$, one can apply Nekhoroshev's theorem (see e.g. \cite{poeschelnekh2}) to the FPU chain with Hamiltonian $H_V$ for $V$ with such $\b$'s. These perturbation results confirm long standing conjectures - see e.g. \cite{beiz}.

Note that in the case $\b \neq \a^2$, the Hamiltonian $H_V$ cannot be transformed into Birkhoff normal form up to order $4$ due to resonances. Nevertheless, the Hamiltonian truncated at order $4$, $H_V^{trunc}$, 
%\begin{equation} \label{fpuhamtrunc}
%H_V^{trunc} = \frac{N P^2}{2} + \Hab(I) - R_{\a,\b}(J,M),
%\end{equation}
can be proved to be completely integrable. The form of the resonance lattice introduced above suggests that $I_k + I_{N-k}$ ($1 \leq k \leq \frac{N}{2}$) are integrals of $H_V^{trunc}$ in involution. To find the remaining commuting integrals we express $\Hab(I)$ in terms of $I_k + I_{N-k} \; (1 \leq k \leq \frac{N}{2})$ and a remainder term,
\begin{equation} \label{habintrepr}
\Hab(I) = H^{(2)}(I) + H^{(4)}_{\a,\b}(I) + \frac{1}{2N} \sum_{k=1}^{\frac{N}{2}-1} d_k^- I_k I_{N-k}
\end{equation}
where
  \[ H^{(2)}(I) = 2 \sum_{k=1}^{\frac{N}{2}-1} \sk (I_k + I_{N-k}) + 2 I_\frac{N}{2},
\]
\begin{eqnarray*}
  H^{(4)}_{\a,\b}(I) & = & \frac{1}{4N} \sum_{k=1}^{\frac{N}{2}-1} d_k^+ (I_k+I_{N-k})^2 + \frac{\b}{4N} I_\frac{N}{2}^2 \\
&& \quad + \frac{\b - \a^2}{N} I_\frac{N}{2} \sum_{k=1}^{\frac{N}{2}-1} \sk (I_k+I_{N-k}) \\
&& \quad + \frac{\b - \a^2}{2N} \sum_{1 \leq k,l < \frac{N}{2} \atop k \neq l} \sk \sl (I_k + I_{N-k}) (I_l + I_{N-l}),
\end{eqnarray*}
and
  \[ d_k^- \equiv d_k^-(\a,\b) := -\a^2 + (\b-\a^2) \sin^2 \frac{k \pi}{N}.
\]
By (\ref{hopfintro}), one has $I_k I_{N-k} = J_k^2 + M_k^2$ for any $1 \leq k \leq \frac{N}{2}-1$ so that the remainder term $\frac{1}{2N} \sum_{k=1}^{\frac{N}{2}-1} d_k^- I_k I_{N-k}$ in (\ref{habintrepr}) can be written as
  \[ \frac{1}{2N} \left( \sum_{1 \leq k < \frac{N}{4}} \left( d_k^- (J_k^2 + M_k^2) + d_{\tilde{k}}^- (J_{\tilde{k}}^2 + M_{\tilde{k}}^2) \right) + d_{\frac{N}{4}}^- \left( J_{\frac{N}{4}}^2 + M_{\frac{N}{4}}^2 \right) \right),
\]
where the latter term is defined to be $0$ if $\frac{N}{4} \notin \Z$ and $\tilde{k} \equiv \tilde{k}(k) = \frac{N}{2}-k$. Combined with the expression (\ref{rabdef}) for $R_{\a,\b}(J,M)$ the Hamiltonian $H_V^{trunc}$ in (\ref{evenfpuformula}) then takes the form
\begin{equation} \label{h4truncrepr}
  H_V^{trunc} = \frac{N P^2}{2} + H^{(2)}(I) + H^{(4)}_{\a,\b}(I) + \frac{1}{2N} \sum_{1 \leq k \leq \fn4} K_k(I,J,M)
\end{equation}
where for $1 \leq k < \frac{N}{4}$
%\begin{eqnarray}
%  K_k(I,J,M) & := & 2 d_k^- I_k I_{N-k} + 2 d_{\frac{N}{2}-k}^- I_{\frac{N}{2}-k} I_{\frac{N}{2}+k} \nonumber\\
%&& - 4(\b - \a^2) \sin \frac{2 k \pi}{N} (J_k J_{\frac{N}{2}-k} - M_k M_{\frac{N}{2}-k}) \label{klformula} 
%\end{eqnarray}
%and where $K_\frac{N}{4}(I,J,M) = 0$ if $\frac{N}{4} \notin \N$ and otherwise
%\begin{equation} \label{kn4formula}
%K_\frac{N}{4}(I,J,M) := (-3\a^2 + \b) I_{\frac{N}{4}} I_{\frac{3N}{4}} - (\b - \a^2) (J_\frac{N}{4}^2 - M_\frac{N}{4}^2).
%\end{equation}
%
%Note that by (\ref{hopfintro}), $I_k I_{N-k} = J_k^2 + M_k^2$ for any $1 \leq k \leq \frac{N}{4}$. Hence $K_k$ and $K_\frac{N}{4}$ are given by
\begin{equation}
  K_k(I,J,M) = d_k^- (J_k^2 \!+\! M_k^2) + d_{\tilde{k}}^- (J_{\tilde{k}}^2 \!+\! M_{\tilde{k}}^2) - 2(\b \!-\! \a^2) \sin \frac{2 k \pi}{N} (J_k J_{\tilde{k}} \!-\! M_k M_{\tilde{k}}) \label{klformula2}
\end{equation}
%\begin{eqnarray}
%  K_k(I,J,M) & = & - (d_k^- (J_k^2 + M_k^2) + d_{\tilde{k}}^- (J_{\tilde{k}}^2 + M_{\tilde{k}}^2)) \nonumber\\
%  && \quad + 2(\b + \a^2) \sin \frac{2 k \pi}{N} (J_k J_{\tilde{k}} - M_k M_{\tilde{k}}) \label{klformula2}
%\end{eqnarray}
and
\begin{equation} \label{kn4formula2}
  K_{\frac{N}{4}}(J,M) = \left\{ \begin{array}{ll}
-\a^2 J_\frac{N}{4}^2 + (\b - 2\a^2) M_\frac{N}{4}^2 & \quad \textrm{if } N \equiv 0 \textrm{ mod } 4 \\
0 & \quad \textrm{otherwise.} \end{array} \right.
\end{equation}

\begin{theorem} \label{fpuliouvint}
Let $N \geq 4$ be an even integer. Then the truncated FPU Hamiltonian $H_V^{trunc}$ given by (\ref{evenfpuformula}) is completely integrable. If the expansion of $V$ in (\ref{potentialdef}) satisfies $(\a,\b) \neq (0,0)$, then the following $N-1$ quantities are functionally independent integrals in involution:
  \[ (I_k + I_{N-k})_{1 \leq k \leq \frac{N}{2}}, \; (I_k + I_{\frac{N}{2}+k})_{1 \leq k < \frac{N}{4}}, \; (K_k)_{1 \leq k \leq \frac{N}{4}}.
\]
\end{theorem}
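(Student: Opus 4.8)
The plan is to verify directly that the $N-1$ listed functions Poisson-commute with one another and with $H_V^{trunc}$, and then that their differentials are generically linearly independent; since $N-1$ is exactly the number of degrees of freedom of the system on $\R^{2N-2}$, this suffices for complete integrability. The whole argument rests on a decomposition of the index set into symplectically orthogonal blocks.

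First I would record the relevant Poisson brackets. Since $I_j=\frac12(x_j^2+y_j^2)$ depends only on $(x_j,y_j)$, one has $\{I_j,I_{j'}\}=0$ for all $j,j'$; consequently $H^{(2)}$, $H^{(4)}_{\a,\b}$ and all the sums $I_k+I_{N-k}$, $I_k+I_{\frac N2+k}$, being functions of the $I$'s alone, commute pairwise. A direct computation in the coordinates $(x,y)$ gives $\{J_k,M_k\}=L_k$, $\{M_k,L_k\}=J_k$, $\{L_k,J_k\}=M_k$ together with $\{I_k,J_k\}=M_k$, $\{I_k,M_k\}=-J_k$, whence $I_k+I_{N-k}$ is the Casimir of the $k$-th Hopf triple: $\{I_k+I_{N-k},J_k\}=\{I_k+I_{N-k},M_k\}=0$. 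Finally, any two functions depending on disjoint sets of indices Poisson-commute.

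Next I would introduce the blocks. For $1\le k<\frac N4$ set $B_k=\{k,\,N-k,\,\frac N2-k,\,\frac N2+k\}$; together with the singleton $\{\frac N2\}$ and, when $N\equiv0\bmod4$, the pair $\{\frac N4,\frac{3N}4\}$, these sets partition $\{1,\dots,N-1\}$, and functions supported on different blocks commute automatically. Here $K_k$ and the three sums $I_k+I_{N-k}$, $I_{\frac N2-k}+I_{\frac N2+k}$, $I_k+I_{\frac N2+k}$ are supported on $B_k$, $I_{\frac N2}$ on $\{\frac N2\}$, and $K_{\frac N4}$, $I_{\frac N4}+I_{\frac{3N}4}$ on $\{\frac N4,\frac{3N}4\}$. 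Involution and the integral property now reduce to a computation inside a single block. The only nontrivial bracket is that of $I_k+I_{\frac N2+k}$ with the cross term of $K_k$: writing $\tilde k=\frac N2-k$ and using the relations above one gets $\{I_k,\,J_kJ_{\tilde k}-M_kM_{\tilde k}\}=M_kJ_{\tilde k}+J_kM_{\tilde k}$ and $\{I_{\frac N2+k},\,J_kJ_{\tilde k}-M_kM_{\tilde k}\}=-(M_kJ_{\tilde k}+J_kM_{\tilde k})$, so the two contributions cancel. Everything else is immediate: the Casimir property gives $\{I_k+I_{N-k},K_k\}=\{I_{\frac N2-k}+I_{\frac N2+k},K_k\}=0$, disjointness of blocks handles distinct $K$'s and integrals attached to different blocks, and since $H^{(2)}$ and $H^{(4)}_{\a,\b}$ are functions of the sums $I_k+I_{N-k}$ and of $I_{\frac N2}$ alone, the representation (\ref{h4truncrepr}) shows that every listed quantity commutes with $H_V^{trunc}$.

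The main work is functional independence, and I expect it to be the real obstacle, because the structure of the integrals degenerates in the Toda case. Since integrals from different blocks involve disjoint coordinates, the full Jacobian is block-diagonal and it is enough to argue block by block. On a generic block $B_k$ I would reduce by the two Casimirs $P_1=I_k+I_{N-k}$ and $P_2=I_{\frac N2-k}+I_{\frac N2+k}$, confining the motion to a product of two Hopf spheres; there $I_k+I_{\frac N2+k}$ equals $\frac{P_1+P_2}2+(L_k-L_{\tilde k})$, and after setting $J=\sqrt{r^2-L^2}\cos\theta$, $M=\sqrt{r^2-L^2}\sin\theta$ the cross term of $K_k$ becomes proportional to $\cos(\theta_k+\theta_{\tilde k})$. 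When $\b\neq\a^2$ the coefficient $(\b-\a^2)\sin\frac{2k\pi}{N}$ is generically nonzero, so $K_k$ depends on the angles while $L_k-L_{\tilde k}$ does not; this yields two independent reduced differentials and hence four independent differentials $dP_1,dP_2,d(I_k+I_{\frac N2+k}),dK_k$ on $B_k$. In the Toda case $\b=\a^2$ (so $\a\neq0$) the cross term vanishes and $K_k=-\a^2\big(I_kI_{N-k}+I_{\frac N2-k}I_{\frac N2+k}\big)$; now all four block integrals are functions of $(I_k,I_{N-k},I_{\frac N2-k},I_{\frac N2+k})$, and I would check that the $4\times4$ Jacobian in these variables --- three linear rows and the quadratic row coming from $K_k$ --- has nonvanishing determinant for generic $I$. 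The special blocks are similar: $I_{\frac N2}$ has nonzero differential, and on $\{\frac N4,\frac{3N}4\}$ the pair $I_{\frac N4}+I_{\frac{3N}4},\,K_{\frac N4}$ is independent because $K_{\frac N4}=-\a^2J_{\frac N4}^2+(\b-2\a^2)M_{\frac N4}^2$ is a nonzero quadratic form in $(J,M)$ as soon as $(\a,\b)\neq(0,0)$. This last hypothesis is precisely what keeps every block nondegenerate, and the Toda degeneration of the cross term is the point requiring the most care.
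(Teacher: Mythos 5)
Your proposal is correct and follows essentially the same route as the paper: the same partition of the indices into symplectically disjoint blocks, the same cancellation $\{I_k+I_{\frac{N}{2}+k},\,J_kJ_{\tilde{k}}-M_kM_{\tilde{k}}\}=0$ establishing involution, and the same observation that formula (\ref{h4truncrepr}) exhibits $H_V^{trunc}$ as a function of the listed quantities, hence that they are conserved. The only point where you go beyond the paper is functional independence, which the paper dismisses as ``easy to verify'' while you carry it out block by block --- including the delicate Toda case $\b=\a^2$, where your $4\times 4$ Jacobian has determinant $-\a^2\bigl(I_k-I_{N-k}+I_{\frac{N}{2}-k}-I_{\frac{N}{2}+k}\bigr)$, generically nonzero --- so your argument is a correct and somewhat more complete rendering of the paper's proof.
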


\emph{Remark:} In the case $\b = \a^2$, the integrals $K_k(I,J,M)$ only depend on the action variables $I$, as $M_j^2 + J_j^2 = I_j I_{\frac{N}{2}-j}$ for any $1 \leq j < \frac{N}{2}$.
%since in this case, by (\ref{hopfintro}), for $1 \leq k < N/4$
%\begin{displaymath}
%  K_k(I,J,M) = -\a^2 (J_k^2 + M_k^2 + J_{\tilde{k}}^2 + M_{\tilde{k}}^2) = -\a^2 (I_k I_{N-k} + I_{\tilde{k}} I_{N-{\tilde{k}}})
%\end{displaymath}
%and (if $N \equiv 0$ mod $4$)
%\begin{displaymath}
%  K_{\frac{N}{4}}(I,J,M) = -\a^2 (J_{\fn4}^2 + M_{\fn4}^2) = -a^2 I_{\frac{N}{4}} I_{\frac{3N}{4}}.
%\end{displaymath}

%We will prove Theorem \ref{fpuliouvint} in section \ref{intproof} by explicitly proving that the stated terms are $N-1$ independent integrals in involution.

\vspace{.4cm}

In sections \ref{folmain} and \ref{folapphhgk} we present a detailed analysis of the geometry of the moment map of the integrable system of Theorem \ref{fpuliouvint}. In particular we show that whenever $\b \neq \a^2$, then this integrable system exhibits hyperbolic dynamics.

In section \ref{fpudir}, we will use Theorem \ref{bnffputheoremeven} to show that any FPU chain with Dirichlet boundary conditions admits a Birkhoff normal form up to order $4$ by recognizing such a system as an invariant submanifold of a periodic FPU chain. Consider a chain with $N'$ ($N' \geq 3$, not necessarily even) moving particles and Hamiltonian given by
\begin{displaymath} %\label{hdirichletgendef}
  H_V^{D} = \frac{1}{2} \sum_{n=1}^{N'} p_n^2 + \sum_{n=1}^{N'} V(q_n - q_{n+1}).
\end{displaymath}
Assume that its endpoints are fixed, i.e.
\begin{equation} \label{dirboundcond}
  q_0 = q_{N'+1} = 0.
\end{equation}

\begin{theorem} \label{bnfdirichlettheorem}
%Let $\a, \b \in \R$ with $(\a, \b) \neq (0,0)$. 
Any FPU chain with $N' \geq 3$ moving particles and Dirichlet boundary conditions admits a Birkhoff normal form of order $4$, i.e. there are canonical coordinates $(x_k, y_k)_{1 \leq k \leq N'}$ so that $H_V^{D}$ takes the form
\begin{displaymath}
\frac{(N'+1)P^2}{2} + \Habd(I) + O(|(x,y)|^5)
\end{displaymath}
where $I = (I_1, \ldots, I_{N'})$ is given by (\ref{actiondef}), $\a$, $\b$ are as in (\ref{potentialdef}), and $\Habd(I)$ is of the form
\begin{eqnarray}
  2 \sum_{k=1}^{N'} s_k I_k & + & \frac{1}{16(N'+1)} \sum_{k=1}^{N'} (\a^2 + 3 (\b - \a^2) s_k^2) I_k^2 \quad \underbrace{ + \; \frac{\b - \a^2}{32(N'+1)} I_\frac{N}{4}^2}_{\textrm{only if }\frac{N}{4} \in \N} \nonumber\\
  && + \; \frac{\b - \a^2}{16(N'+1)} \left( \sum_{l \neq m \atop 1 \leq l,m \leq N'} 4 s_l s_m I_l I_m - \sum_{k=1}^{N'} s_{2k} I_k I_{N'+1-k} \right) \! . \; \label{bnfdirichletformula}
\end{eqnarray}
Note that in (\ref{bnfdirichletformula}), the numbers $s_k = \sin \frac{k \pi}{N} = \sin \frac{k \pi}{2N'+2}$ for any $1 \leq k \leq N'$ are pairwise different.
\end{theorem}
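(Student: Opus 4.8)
The plan is to realize the Dirichlet chain as an invariant symplectic submanifold of a periodic chain and to inherit the normal form of Theorem \ref{bnffputheoremeven} by restriction. Set $N = 2(N'+1) = 2N'+2$, which is even and $\geq 8$, and consider the periodic FPU chain with $N$ particles. Define the linear \emph{odd reflection} $\s$ on its phase space by $\s: (q_n, p_n) \mapsto (-q_{-n}, -p_{-n})$ (indices mod $N$). One checks directly that $\s$ is symplectic, that $\s^2 = \mathrm{id}$, and that $H_V$ is $\s$-invariant (the non-evenness of $V$ is compensated by the reindexing $m = -n-1$ in the potential sum). Its fixed-point set $F = \mathrm{Fix}(\s)$ is therefore an invariant symplectic submanifold; the constraints $q_n = -q_{-n}$, $p_n = -p_{-n}$ force $q_0 = p_0 = 0$ and $q_{N'+1} = p_{N'+1} = 0$, so that $F$ is canonically identified (after an overall rescaling of the symplectic form by $2$) with the phase space of the Dirichlet chain of $N'$ moving particles, and $H_V|_F = H_V^{D}$.

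First I would establish that the normalizing transformation $\Phi$ produced in Theorem \ref{bnffputheoremeven} can be chosen $\s$-equivariant, i.e. $\Phi \circ \s = \s \circ \Phi$, so that it preserves $F$ and restricts to a canonical transformation $\Phi|_F$ carrying $H_V^{D}$ into $H_V^{trunc}|_F + O(|(x,y)|^5)$. The linear part of $\Phi$ (the discrete normal-mode transform) intertwines the real-space reflection with the mode swap $k \leftrightarrow N-k$ and is thus equivariant; the Birkhoff step is the time-one flow of a generator $\chi$ solving a cohomological equation $\{H_2, \chi\} = H_4 - \overline{H_4}$. Since $H_2$ and $H_4$ are $\s$-invariant and $\s$ commutes with the linear flow, hence preserves the resonant/non-resonant splitting, the generator $\chi$ may be taken $\s$-invariant, which yields the desired equivariance. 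This is the main conceptual point; everything else is computation.

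Next I would compute the action of $\s$ in the normal coordinates $(x_k, y_k)$, where it relates the paired modes by $(x_{N-k}, y_{N-k}) = (-x_k, -y_k)$ on $F$, while $I_{N/2} = 0$ (the mode $k = N/2$ is even, hence absent from $F$). Substituting these relations into the definitions (\ref{jkmkdef}) gives the crucial simplification $M_k = 0$, $J_k = -I_k$, and $I_{N-k} = I_k$ on $F$ for $1 \leq k < N/2$. In particular $J_k J_{N/2-k} - M_k M_{N/2-k} = I_k I_{N/2-k}$ and $J_{N/4}^2 - M_{N/4}^2 = I_{N/4}^2$, so that the resonant term $R_{\a,\b}(J,M)$ of (\ref{rabdef}) restricts to a function of the actions alone. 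Consequently $H_V^{trunc} = \frac{N P^2}{2} + \Hab(I) - R_{\a,\b}(J,M)$ restricts to a function of the $N'$ independent actions $(I_k)_{1 \leq k \leq N'}$; this is exactly the statement that the restricted Hamiltonian is in Birkhoff normal form, with no resonant obstruction surviving, so no further normalization is needed.

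Finally I would match coefficients. Writing the Dirichlet actions as $I_k^{D} = 2 I_k$, I would fold the sums over $k = 1, \ldots, N-1$ into sums over the $N'$ surviving modes using $I_{N-k} = I_k$, $s_{N-k} = s_k$, and $s_{2(N'+1-k)} = s_{2k}$. The frequency part collapses to $2 \sum_{k=1}^{N'} s_k I_k^{D}$; the diagonal quartic terms of $\Hab$, organized as in (\ref{habintrepr}), together with the restriction of the remainder $\frac{1}{2N} \sum d_k^- I_k I_{N-k}$, produce the self-interaction coefficient, since $2 d_k^+ + d_k^- = \a^2 + 3(\b - \a^2) s_k^2$; and $-R_{\a,\b}$ produces the coupling $-\frac{\b - \a^2}{16(N'+1)} \sum_k s_{2k} I_k I_{N'+1-k}$, the doubling $\sum_{k=1}^{N'} = 2 \sum_{k < N/4}$ accounting for the factor from the two-to-one mode correspondence. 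I expect the principal obstacle, beyond the equivariance argument, to be precisely this bookkeeping of factors of two, and above all the separate treatment of the exceptional mode $k = N/4$: because $R_{N/4}$ carries coefficient $1$ rather than the pattern $4 s_{2k}$ of the generic terms, one must reconcile its normalization with the uniform cross-sum, which is exactly what the correction term $\frac{\b - \a^2}{32(N'+1)} I_{N/4}^2$ in (\ref{bnfdirichletformula}) records.
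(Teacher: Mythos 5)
Your proposal follows the same route as the paper's proof: double the chain to $N = 2N'+2$ particles, realize the Dirichlet chain as the fixed point set of a canonical linear involution commuting with $H_V$, transport this involution through the normalizing coordinates of Theorem \ref{bnffputheoremeven}, observe that on the fixed set the resonant term $R_{\a,\b}(J,M)$ collapses to a function of the actions alone, and match coefficients. Your final bookkeeping (the identity $2d_k^+ + d_k^- = \a^2 + 3(\b-\a^2)s_k^2$, the factors of two from the two-to-one mode correspondence and the rescaled embedding, and the explanation of the exceptional term $\frac{\b-\a^2}{32(N'+1)}I_{N/4}^2$ as the mismatch between the coefficient of $R_{N/4}$ and the pattern $4s_{2k}$) is exactly the computation in the paper's proof.

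The one intermediate step that is not correct as stated is your description of the involution in the normal coordinates: you claim it acts on paired modes by $(x_{N-k},y_{N-k}) = (-x_k,-y_k)$, i.e. $\z_{N-k} = -\z_k$ on the fixed set, whence $M_k = 0$ and $J_k = -I_k$. For the transformation actually constructed in \cite{ahtk4} this is false: by Lemma \ref{xkxnkproof} the involution $S$ is conjugated to the \emph{twisted} map $\z_k \mapsto -e^{4\pi i k/N}\z_{N-k}$, so that on the fixed set $J_k = -c_{4k}I_k$ and $M_k = -s_{4k}I_k$, which is in general nonzero. You cannot simultaneously invoke Theorem \ref{bnffputheoremeven} verbatim (whose term $R_{\a,\b}$ is tied to those specific coordinates) and posit the phase-free form of the involution without checking that the mode rotations needed to remove the phases preserve the form of $R_{\a,\b}$, since such rotations act nontrivially on $J_k + iM_k$. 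The gap is repairable in two ways: either verify this compatibility (the rotations $\z_k \mapsto e^{2\pi i k/N}\z_k$ do remove the phases while leaving the actions and $R_{\a,\b}$ unchanged), or, as the paper does, keep the twisted involution and note that the only identity your restriction argument actually needs, $J_kJ_{\frac{N}{2}-k} - M_kM_{\frac{N}{2}-k} = I_kI_{\frac{N}{2}-k}$ together with $I_{N-k} = I_k$ and $I_{N/2}=0$ (Corollary \ref{fixsikinminusk}), holds anyway because $c_{4(\frac{N}{2}-k)} = c_{4k}$ and $s_{4(\frac{N}{2}-k)} = -s_{4k}$. With that correction the rest of your argument goes through; your remaining deviation from the paper — establishing equivariance of the normalizing transformation by choosing the Birkhoff generator invariant under the (conjugated) involution, rather than verifying it from the explicit construction in \cite{ahtk4} as the paper does — is a legitimate alternative, provided equivariance is understood as intertwining $S$ with the correct linear involution in the new coordinates.
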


\begin{cor}
Near the equilibrium state, any FPU chain with $N'$ moving particles and Dirichlet boundary conditions can be approximated up to order $4$ by an integrable system of $N'$ harmonic oscillators which are \emph{coupled} at fourth order except if $\b = \a^2$.
\end{cor}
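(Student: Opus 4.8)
The plan is to realize the Dirichlet chain with $N'$ moving particles as an invariant submanifold of the \emph{periodic} chain with $N = 2(N'+1)$ particles and then to read off its normal form from Theorem \ref{bnffputheoremeven}. Concretely, I extend a Dirichlet configuration by odd reflection, $q_{N-n} = -q_n$ and $p_{N-n} = -p_n$; this forces $q_0 = q_{N/2} = 0$, so the boundary conditions (\ref{dirboundcond}) hold with the free coordinates $q_1, \ldots, q_{N'}$, and it defines a linear symplectic involution $\sigma$ on the periodic phase space whose fixed-point set $\mathcal{S}$ is invariant under the periodic flow. A direct computation shows that this embedding is \emph{conformally} symplectic: it pulls the periodic symplectic form back to twice the Dirichlet one and pulls $H_V$ back to $2H_V^{D}$, the two factors cancelling so that the restricted flow on $\mathcal{S}$ is exactly the Dirichlet flow. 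The first task is therefore to fix this normalization, rescaling the restricted coordinates so that they are canonical for the genuine Dirichlet structure; getting these factors of two right is one of the two delicate points.

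Next I would use that $\sigma$ commutes with the construction of the normal form, so the canonical coordinates of Theorem \ref{bnffputheoremeven} may be chosen $\sigma$-equivariant, whence $\mathcal{S}$ is cut out by linear relations pairing each resonant mode $k$ with $N-k$. Evaluating (\ref{actiondef})--(\ref{jkmkdef}) on $\mathcal{S}$ gives $I_{N-k} = I_k$ (hence $L_k = 0$), $M_k = 0$, and $J_k = -I_k$, while the self-paired mode vanishes, $I_{N/2} = 0$. The crucial consequence is that on $\mathcal{S}$ the resonant term $R_{\a,\b}(J,M)$, which obstructs a genuine Birkhoff normal form in the periodic case, collapses to a function of the actions alone: from (\ref{rdef})--(\ref{rn4def}) the equal-sign relation $J_k J_{\frac{N}{2}-k} = I_k I_{\frac{N}{2}-k}$ yields $R(J,M)|_{\mathcal{S}} = 4\sum_{1 \leq k < N/4} \sin\frac{2k\pi}{N}\, I_k I_{\frac{N}{2}-k}$ and $R_{\frac{N}{4}}|_{\mathcal{S}} = I_{\frac{N}{4}}^2$. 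This is precisely why the Dirichlet system, unlike the periodic one, \emph{does} admit a Birkhoff normal form of order four.

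It then remains to substitute these relations into $H_V^{trunc}$ of (\ref{evenfpuformula}) (with $P = 0$ on $\mathcal{S}$) and simplify. Restricting $\Hab$ from (\ref{bnfintrotheorem}), the pairing $I_{N-k} = I_k$ doubles both the quadratic term and the diagonal quartic term, and the off-diagonal sum $\sum_{l \neq m} s_l s_m I_l I_m$ contributes an extra diagonal piece $2\sum_k s_k^2 I_k^2$ from the pairs $(k, N-k)$; this is what promotes the diagonal coefficient $d_k^+ = \a^2 + (\b-\a^2)s_k^2$ to $\a^2 + 3(\b-\a^2)s_k^2$. For the cross terms I would invoke the identity $s_k s_{\frac{N}{2}-k} = \tfrac{1}{2} s_{2k}$ to recognize the restricted resonant term as the correction $-\sum_k s_{2k} I_k I_{N'+1-k}$ to the generic off-diagonal term $4\sum_{l \neq m} s_l s_m I_l I_m$, the sign of $J_k$ producing exactly the minus sign in (\ref{bnfdirichletformula}); the $R_{\frac{N}{4}}$ part then accounts for the special summand $I_{\frac{N}{4}}^2$ present only when $\frac{N}{4} \in \N$. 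Note that at $k = \frac{N}{4}$ the index $N'+1-k$ equals $k$, so the $s_{2k}$-sum already feeds the diagonal and the explicit $I_{\frac{N}{4}}^2$ term is the compensating correction. After inserting the normalization of Step one, the constants collapse to $\frac{1}{16(N'+1)}$ and $\frac{1}{32(N'+1)}$, reproducing $\Habd(I)$; the corollary is then immediate, since every coupling term carries the prefactor $\b - \a^2$ and hence vanishes exactly when $\b = \a^2$.

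The main obstacle, beyond the factor-of-two bookkeeping already flagged, is Step two: establishing that the normal-form coordinates may be chosen $\sigma$-equivariant and computing the restriction of the Hopf variables correctly, because it is precisely the non-vanishing of $J_k$ on $\mathcal{S}$ that turns the resonant obstruction $R_{\a,\b}$ into an honest action-dependent quartic term and thereby makes the Birkhoff normal form exist.
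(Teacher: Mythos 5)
Your proposal is correct and follows essentially the same route as the paper: realize the Dirichlet chain as the fixed-point set of a canonical involution of the periodic chain with $N=2N'+2$ particles, use the equivariance of the normal form coordinates of Theorem \ref{bnffputheoremeven}, and restrict $H_V^{trunc}$, the key point being that on the fixed-point set the resonant term $R_{\a,\b}$ collapses to a function of the actions, after which every coupling term carries the factor $\b-\a^2$. The only cosmetic difference is your phase convention: with the paper's coordinates the fixed-point set is cut out by $\z_{N-k}=-e^{-4\pi i k/N}\z_k$, giving $J_k=-c_{4k}I_k$ and $M_k=-s_{4k}I_k$ rather than $J_k=-I_k$, $M_k=0$, but the combination $J_kJ_{\frac{N}{2}-k}-M_kM_{\frac{N}{2}-k}=I_kI_{\frac{N}{2}-k}$ is phase-invariant, so your restriction formulas and the factor-of-two bookkeeping (your conformal embedding versus the paper's $1/\sqrt{2}$-normalized canonical embedding $\Theta$ with the pullback $I_k\mapsto\frac{1}{2}I_k$) agree with the paper's.
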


Denote by $\Qabd$ the Hessian of $\Habd(I)$ at $I=0$. Note that $\Qabd$ is an $N' \times N'$ matrix which only depends on the parameters $\a$ and $\b$.

\begin{theorem} \label{bnfdirprop}
  \begin{itemize}
  \item[(i)] For any given $\a \in \R \setminus \{ 0 \}$, $\det (\Qabd)$ is a polynomial in $\b$ of degree $N'$ and has $N'$ real zeroes (counted with multiplicities). When listed in increasing order, the zeroes $\b_k = \b_k(\a)$ satisfy
\begin{displaymath}
  \b_1 \leq \ldots \leq \b_{\ulcorner \frac{N'+1}{2} \urcorner} < \a^2 < \b_{\ulcorner \frac{N'+3}{2} \urcorner} \leq \ldots \leq \b_{N'}.
\end{displaymath}
Moreover index$(\Qabd)$, defined as the number of negative eigenvalues of $\Qabd$, is given by
\begin{displaymath}
  \textrm{index} \, (\Qabd) = \left\{  \begin{array}{ll}
\ulcorner \frac{N'+1}{2} \urcorner & \quad \textrm{for } \b < \b_1 \\
0 & \quad \textrm{for } \b_{\ulcorner \frac{N'+1}{2} \urcorner} < \b < \b_{\ulcorner \frac{N'+3}{2} \urcorner} \\
\llcorner \frac{N'-1}{2} \lrcorner & \quad \textrm{for } \b > \b_{N'}
\end{array} \right.
\end{displaymath}
\item[(ii)] For $\a = 0$, $\det(\Qbd)$ is a polynomial in $\b$ of degree $N'$, and $\b = 0$ is the only zero of $\det(\Qbd)$. It has multiplicity $N'$, and the index of $\Qbd$ is given by
\begin{displaymath}
  \textrm{index} \, (\Qbd) = \left\{  \begin{array}{ll}
\ulcorner \frac{N'+1}{2} \urcorner & \quad \textrm{for } \b < 0 \\
\llcorner \frac{N'-1}{2} \lrcorner & \quad \textrm{for } \b > 0
\end{array} \right.
\end{displaymath}
\end{itemize}
\end{theorem}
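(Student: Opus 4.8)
The plan is to work throughout with the symmetric matrix $A := 8(N'+1)\,\Qabd$, which differs from $\Qabd$ only by the positive scalar $8(N'+1)$ and hence has the same signature, while $\det \Qabd = (8(N'+1))^{-N'}\det A$. First I would read off the entries of $A$ from (\ref{bnfdirichletformula}): writing $c = \b - \a^2$, $\theta_k = k\pi/N$ and $s_k = \sin\theta_k$, one finds $A_{kk} = \a^2 + 3c\,s_k^2$ on the diagonal, $A_{jk} = 4c\,s_j s_k$ for the generic off-diagonal entries coming from $\sum_{l\neq m}4 s_l s_m I_l I_m$, and $A_{j,N'+1-j} = 2c\,s_j s_{N'+1-j}$ on the anti-diagonal, where the halving is produced by the coupling term $-\sum_k s_{2k}I_k I_{N'+1-k}$. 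When $N'$ is odd the middle index satisfies $N'+1-j=j=\fn4$, and a short check (using $s_{\fn4}^2 = \tfrac12$ and $s_{N/2}=1$) gives $A_{\fn4,\fn4} = \b$. The decisive structural observation is that every entry of $A$ is affine in $\b$ and that $A$ reduces to $\a^2\,\mathrm{Id}$ at $c=0$; consequently $A(\b) = \a^2\,\mathrm{Id} + (\b - \a^2)\,A_1$, where the symmetric matrix $A_1 = dA/d\b$ is \emph{independent of} $\a$.

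Since $A(\b)$ and $A_1$ share eigenvectors, the eigenvalues of $A(\b)$ are exactly $\a^2 + (\b-\a^2)\lambda_i$, with $\lambda_i$ the eigenvalues of $A_1$. This reduces the whole theorem to the invertibility and signature of the single fixed matrix $A_1$. Using the involution $k \leftrightarrow N'+1-k$, under which $s_{N'+1-k} = \cos\theta_k$, I would write $A_1 = 4\,w w^T - \widetilde E$, where $w = (s_1,\dots,s_{N'})^T$ and $\widetilde E$ is block diagonal: a $2\times 2$ block $B_k = \left( \begin{array}{cc} \sin^2\theta_k & \sin 2\theta_k \\ \sin 2\theta_k & \cos^2\theta_k \end{array} \right)$ for each pair $\{k,N'+1-k\}$, together with a $1\times 1$ block equal to $1$ when $N'$ is odd. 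Each $B_k$ has $\det B_k = -\tfrac34 \sin^2 2\theta_k < 0$, hence signature $(1,1)$, so $\widetilde E$ has signature $(\tfrac{N'}{2},\tfrac{N'}{2})$ for $N'$ even and $(\ulcorner \frac{N'+1}{2}\urcorner, \llcorner \frac{N'-1}{2}\lrcorner)$ for $N'$ odd.

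To pass from $-\widetilde E$ to $A_1 = -\widetilde E + (2w)(2w)^T$ I would invoke the rank-one update principle: adding a positive semidefinite rank-one matrix moves all eigenvalues weakly upward, so by Weyl interlacing it raises the number of positive eigenvalues by at most one, while the matrix determinant lemma gives $\det A_1 = \det(-\widetilde E)\,(1 - 4\,w^T \widetilde E^{-1} w)$. A short computation, exploiting that the restriction of $w$ to each pair is the unit vector $(\sin\theta_k,\cos\theta_k)$ and that $(\sin\theta_k,\cos\theta_k)\,B_k^{-1}(\sin\theta_k,\cos\theta_k)^T = \tfrac23$, yields that $w^T\widetilde E^{-1}w$ equals $\tfrac23$ times the number of pairs, plus $\tfrac12$ when $N'$ is odd; in either case $1 - 4\,w^T\widetilde E^{-1}w < 0$ for every $N' \geq 3$. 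Since this scalar is negative and $\det(-\widetilde E)\neq 0$, the determinant $\det A_1$ is nonzero (so $A_1$ is invertible) and has sign opposite to $\det(-\widetilde E)$; the odd number of sign changes combined with the interlacing bound forces exactly one negative eigenvalue of $-\widetilde E$ to become positive. Hence $A_1$ has $\ulcorner\frac{N'+1}{2}\urcorner$ positive and $\llcorner\frac{N'-1}{2}\lrcorner$ negative eigenvalues, none zero.

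Finally I would translate back. Because $\det\Qabd = (8(N'+1))^{-N'}\prod_i\bigl(\a^2 + (\b-\a^2)\lambda_i\bigr)$ with all $\lambda_i$ real and nonzero, this is a polynomial of degree $N'$ in $\b$ whose $N'$ real roots are $\b_i = \a^2(1 - \lambda_i^{-1})$; those with $\lambda_i > 0$ lie below $\a^2$ and those with $\lambda_i < 0$ above it, giving the claimed clustering around $\a^2$, and reading the sign of $\a^2 + (\b-\a^2)\lambda_i$ in each regime gives the index formula of part (i). Part (ii) is the specialization $\a = 0$, where $A(\b) = \b\,A_1$, so $\det\Qbd = (8(N'+1))^{-N'}\b^{N'}\det A_1$ has $\b=0$ as its only zero, of multiplicity $N'$, and the index equals the number of negative eigenvalues of $\b A_1$, namely $\ulcorner\frac{N'+1}{2}\urcorner$ for $\b < 0$ and $\llcorner\frac{N'-1}{2}\lrcorner$ for $\b > 0$. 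I expect the main obstacle to be the very first step: the careful bookkeeping needed to extract the entries of $A$ from (\ref{bnfdirichletformula}) — in particular handling the exceptional middle term $I_{\fn4}^2$ and the anti-diagonal coupling — and then verifying that this data genuinely assembles into the rank-one-plus-block-diagonal form that makes $A_1$ independent of $\a$, since the entire argument hinges on that affine structure.
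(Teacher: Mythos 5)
Your proof is correct --- I verified the entries of $A = 8(N'+1)\,\Qabd$ that you read off from (\ref{bnfdirichletformula}) (in particular the middle diagonal entry $\b$ when $N'$ is odd), the decomposition $A_1 = 4ww^T - \widetilde{E}$, the value $\frac{2}{3}$ of the per-block quadratic form $(\sin\theta_k,\cos\theta_k)B_k^{-1}(\sin\theta_k,\cos\theta_k)^T$, and the interlacing-plus-determinant-sign argument pinning down the signature of $A_1$ --- but it takes a genuinely different route from the paper's. The paper argues in two separate steps: for $\a=0$ (Proposition \ref{bchaindir} and Lemma \ref{signbchaindir}) it factors $\Qbd = \frac{2\b}{16(N'+1)}\Delta^{N'}P^D\Delta^{N'}$ with $P^D$ an integer matrix, proves $\det P^D\neq 0$ by a mod-$2$/mod-$4$ parity argument, and obtains the index by deforming $\Delta^{N'}$ to the identity and computing the \emph{full} spectrum of $P^D$ by recursive determinant expansions in Appendix \ref{pdeigenvcomp}; for $\a\neq 0$ (Proposition \ref{dirchaingenproperties}) it then argues softly: the index of $\Qabd$ for $|\b|$ large equals that of $\Qbd$, the index at $\b=\a^2$ is $0$ because $\Qtoda$ is a positive multiple of the identity, and each change of index forces a real zero of the degree-$N'$ polynomial $\det\Qabd$. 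Your pencil identity $A(\b) = \a^2\,\mathrm{Id} + (\b-\a^2)A_1$, with $A_1$ independent of $\a$, is a sharpening of the paper's Toda observation: it simultaneously diagonalizes the whole family, unifies the cases $\a=0$ and $\a\neq 0$, produces the zeroes in closed form, $\b_k = \a^2(1-\lambda_k^{-1})$, and needs only the \emph{signature} of $A_1$, which your rank-one-update argument (Weyl interlacing plus the matrix determinant lemma) delivers without any explicit eigenvalue computation. In fact $A_1 = \Delta^{N'}P^D\Delta^{N'}$, and your $4ww^T$ is precisely the $\Delta^{N'}$-conjugate of the rank-one matrix $4\cdot 1_{N'\times N'}$ whose rank-one structure is also what makes the paper's appendix expansion collapse (only terms with at most one column of fours survive), so at bottom both proofs exploit the same matrix structure. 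What the paper's longer route buys is the explicit spectrum of $P^D$ (eigenvalues $4N'-3$, $1$, $-3$, and for odd $N'$ two exceptional ones), which is finer information than the theorem requires; what yours buys is brevity, uniformity in $\a$, and explicit formulas for the bifurcation values $\b_k(\a)$.
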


\emph{Applications:} Theorems \ref{bnfdirichlettheorem} and \ref{bnfdirprop} allow to apply for any given $\a \in \R$ the classical KAM theorem (see e.g. \cite{poeschelkam}) near the equilibrium point to the FPU chain with Hamiltonian $H_V^{D}$ and Dirichlet boundary conditions for a real analytic potential $V(x) = \frac{1}{2} x^2 - \frac{\a}{3!} x^3 + \frac{\b}{4!} x^4 + \ldots$ with $\b \in \R \setminus \{ \b_1(\a), \ldots, \b_{N'}(\a) \}$. Moreover, as for any given $\a \in \R \setminus \{ 0 \}$, $\Qabd$ is positive definite for $\b_{\llcorner \frac{N'-1}{2} \lrcorner}(\a) < \b < \b_{\llcorner \frac{N'+1}{2} \lrcorner}(\a)$, one can apply Nekhoroshev's theorem (see e.g. \cite{poeschelnekh2}) near the equilibrium point to the FPU chain with Hamiltonian $H_V^{D}$ for $V$ with such $\b$'s. These perturbation results confirm long standing conjectures - see e.g. \cite{beiz}.

In subsequent work we plan to compare orbits of the FPU chain with Hamiltonian $H_V$ with the ones of its integrable approximation $H_V^{trunc}$, with the intention to explain the numerical experiments of Fermi, Pasta, and Ulam.

\emph{Related work:} Theorem \ref{fpuliouvint} and Theorem \ref{bnfdirichlettheorem} improve on earlier results of Rink in \cite{rink01} and \cite{rink06}, respectively, where the case $\a=0$ has been treated. Our approach has been shaped by our work on the Toda lattice \cite{ahtk3}. The latter one, introduced by Toda \cite{toda} and extensively studied in the sequel, is a special FPU chain which is integrable (i.e. the original Hamiltonian $H_V$ - not only $H_V^{trunc}$ - is integrable). It turns out that the same canonical transformations which near the equilibrium bring the Toda lattice into Birkhoff normal form can be used for any FPU chain. If $N$ is odd, this transformation brings $H_V$ into Birkhoff normal form up to order $4$; we have studied this case in detail in \cite{ahtk4}.

One of the most important open problems in the field of FPU chains is the investigation of the dynamics of these chains when the number of particles gets larger and larger. It is likely that our results on FPU chains with Dirichlet boundary conditions can be used for this purpose. For recent contributions in this direction see e.g. \cite{bapo1}, \cite{bapo2}.

%Finally we point out that it is an open question, for which potentials $V$ the full Hamiltonian $H_V$ is integrable.

\emph{Acknowledgement:} It is a great pleasure to thank Bob Rink for valuable comments on earlier versions of this paper, and Gerda Schacher for her help with the graphics.

\section{Proof of Theorem \ref{fpuliouvint}} \label{intproof}

%\input{integrable.tex}

%Recall the definition of the variables $M, J, L \in \R^{N-1}$ given by (\ref{jkmkdef}). 
Denote by $\{ \cdot, \cdot \}$ the standard Poisson bracket on $\R^{2N-2}$. In a straightforward way one computes the Poisson brackets between the variables $I, M, J, L \in \R^{N-1}$, given by (\ref{actiondef}) and (\ref{jkmkdef}) (cf. \cite{cuba}, p. $28$):

\begin{lemma} \label{fundbrackets}
The Poisson brackets between the variables $I_k$, $J_k$, $M_k$ ($\1N1$) are given by
\begin{eqnarray}
  \{ I_l, I_k \} & = & \{ J_l, J_k \} = \{ M_l, M_k \} = 0, \label{iijjmm}\\
\{ J_l, I_k \} & = & -M_l (\d_{kl} - \d_{k+l,N}), \label{jlikml}\\
\{ M_l, I_k \} & = & J_l (\d_{kl} - \d_{k+l,N}), \label{mlikjl}\\
%\{ J_l, M_k \} & = & \frac{I_l - I_{N-l}}{2} (\d_{lk} - \d_{l+k,N}). \label{jlmmil}
\end{eqnarray}
As a consequence, one obtains the following relations between the variables $M_k$, $J_k$, and $L_k$, $\1N1$:
\begin{eqnarray*}
\{ M_k, J_l \} & = & L_l (\d_{k+l,N} - \d_{kl}), \\%\label{jlmmil}\\
\{ J_k, L_l \} & = & M_k (\d_{k+l,N} - \d_{kl}), \\
\{ L_k, M_l \} & = & J_l (\d_{k+l,N} - \d_{kl}).
\end{eqnarray*}
\end{lemma}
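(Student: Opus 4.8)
The plan is to obtain every bracket by a direct computation from the definitions (\ref{actiondef}) and (\ref{jkmkdef}), using only the fundamental canonical relations $\{x_k, x_l\} = \{y_k, y_l\} = 0$, $\{x_k, y_l\} = \delta_{kl}$ together with bilinearity and the Leibniz rule. Since $I_k$, $J_k$, $M_k$ are all homogeneous quadratic polynomials in $(x,y)$, each bracket collapses to a finite sum of products of a single coordinate with a fundamental bracket; no genuine analysis is involved, and the whole content is careful bookkeeping of the indices $k$, $l$, $N-k$, $N-l$. The essential structural observation is that $I_k$ depends only on the pair $(x_k,y_k)$, whereas $J_k$ and $M_k$ couple the two indices $k$ and $N-k$.

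First I would establish the three basic families (\ref{iijjmm})--(\ref{mlikjl}). The vanishing $\{I_l,I_k\}=\{J_l,J_k\}=\{M_l,M_k\}=0$ follows once one writes out the derivatives and checks that the surviving contributions cancel in pairs (for $\{I_l,I_k\}$ the only possibly nonzero term, at $j=k=l$, is $x_ly_l-y_lx_l=0$). The brackets $\{J_l,I_k\}$ and $\{M_l,I_k\}$ are where the characteristic factor $\delta_{kl}-\delta_{k+l,N}$ originates, and I would record one representative computation, say $\{J_l,I_k\}$, in full detail. The mechanism is this: $\partial_{\bullet}I_k$ is supported on the index $j=k$, while $\partial_\bullet J_l$ is supported on $j=l$ and $j=N-l$, so a nonzero contribution forces either $k=l$ or $k=N-l$. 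The case $k=l$ produces $-M_l$ (the $+\delta_{kl}$ term), while the case $k+l=N$ produces $+M_l$, the opposite sign arising from the antisymmetric off-diagonal pairing; together these give $\{J_l,I_k\}=-M_l(\delta_{kl}-\delta_{k+l,N})$. The computation of $\{M_l,I_k\}$ is identical up to interchanging the roles of $J$ and $M$.

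The three relations among $M$, $J$, $L$ then require almost no further work. Since $L_k=\tfrac12(I_k-I_{N-k})$, both $\{J_k,L_l\}$ and $\{L_k,M_l\}$ follow from (\ref{jlikml}) and (\ref{mlikjl}) by linearity: expanding $\{J_k,L_l\}=\tfrac12(\{J_k,I_l\}-\{J_k,I_{N-l}\})$ and using $\{J_k,I_{N-l}\}=-M_k(\delta_{k+l,N}-\delta_{kl})$, the two terms add to $M_k(\delta_{k+l,N}-\delta_{kl})$, and symmetrically for $\{L_k,M_l\}$. The one genuinely new bracket is $\{M_k,J_l\}$, between two off-diagonal quadratics, which I would compute directly to get $L_l(\delta_{k+l,N}-\delta_{kl})$; it exhibits the $\mathfrak{so}(3)$ structure, namely that for each fixed index pair $\{k,N-k\}$ the triple $(M_k,J_k,L_k)$ closes into an angular-momentum algebra, consistent with the Hopf relation (\ref{hopfintro}).

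The main obstacle is purely one of sign and index discipline, concentrated in two places. First, one must fix the sign conventions in the Poisson bracket and in the definitions of $M_k$, $J_k$ so that the $\pm$ in $\delta_{kl}-\delta_{k+l,N}$ come out correctly and consistently across all brackets. Second, one should verify the degenerate index $k=N/2$ (where $k=N-k$, so that $M_{N/2}=0$ and $J_{N/2}=I_{N/2}$): there $\delta_{kl}$ and $\delta_{k+l,N}$ coincide and the stated factors vanish, and I would check that the direct computation agrees with this so that all formulas hold uniformly for $1\le k\le N-1$.
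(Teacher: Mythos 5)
Your proposal is correct and follows exactly the route the paper takes: the paper's entire proof of Lemma \ref{fundbrackets} is the remark that one computes the brackets ``in a straightforward way'' from (\ref{actiondef}) and (\ref{jkmkdef}) using the canonical relations (citing Cushman--Bates), which is precisely your direct computation, and your sign bookkeeping, the derivation of the $M,J,L$ relations from (\ref{jlikml})--(\ref{mlikjl}) by linearity in $L_l=\tfrac12(I_l-I_{N-l})$, and the check of the degenerate index $k=N/2$ all come out right.
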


%Recall our definitions (\ref{klformula2}) and (\ref{kn4formula2}) of the terms $K_l$ and $K_{N/4}$, respectively. As indicated in Theorem \ref{fpuliouvint}, the quantities which we show to satisfy the required integrability properties are
First note that the list of functions of Theorem \ref{fpuliouvint},
\begin{equation} \label{1ln4int}
        (I_k + I_{N-k})_{1 \leq k \leq \frac{N}{2}}, \; (I_k + I_{\frac{N}{2}+k})_{1 \leq k < \frac{N}{4}}, \; (K_k)_{1 \leq k \leq \frac{N}{4}},
\end{equation}
%\begin{equation} \label{1ln4int}
%       \left( I_k + I_{N-k}, I_{\frac{N}{2}-k} + I_{\frac{N}{2}+k}, I_k + I_{\frac{N}{2}+k}, K_k \right)_{1 \leq k < \frac{N}{4}},
%\end{equation}
%\begin{equation} \label{n2int}
%I_\frac{N}{2},
%\end{equation}
%and, if $\fn4 \in \N$,
%\begin{equation} \label{n4int}
%I_\frac{N}{4} + I_\frac{3N}{4}, K_\frac{N}{4}.
%\end{equation}
contains $N-1$ terms regardless whether $\fn4$ is an integer or not. In addition, for any $1 \leq k < \frac{N}{4}$, the terms $I_k + I_{N-k}, I_{N/2-k} + I_{N/2+k}, I_k + I_{N/2+k}, K_k$ are functions of the eight variables $x_k, y_k, x_{N/2-k}, y_{N/2-k}, x_{N/2+k}, y_{N/2+k}, x_{N-k}$, and $y_{N-k}$, the term $I_{N/2}$ is a function of the two variables $x_{N/2}, y_{N/2}$, and, in the case $N/4 \in \N$, the terms $I_\frac{N}{4} + I_\frac{3N}{4}, K_\frac{N}{4}$ are functions of the four variables $x_{N/4}, y_{N/4}, x_{3N/4}, y_{3N/4}$. Hence we obtain a partition of the $2N-2$ variables $x_1, y_1, \ldots, x_{N-1}, y_{N-1}$ into $\llcorner \frac{N}{4} \lrcorner + 1$ pairwise disjoint sets of variables, and all Poisson brackets between variables of different sets of this partition vanish.

\begin{lemma} \label{involutionlemma}
The $N-1$ functions listed in (\ref{1ln4int}) are pairwise in involution.
\end{lemma}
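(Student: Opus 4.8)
The plan is to exploit the block decomposition already isolated just before the statement. Since the $2N-2$ variables split into $\llcorner N/4 \lrcorner + 1$ groups with vanishing cross-group Poisson brackets, and each of the $N-1$ functions in (\ref{1ln4int}) depends only on the variables of a single group, every bracket between two functions attached to \emph{different} groups vanishes automatically. Hence it suffices to verify involution separately inside each group: the four functions $I_k + I_{N-k}$, $I_{N/2-k} + I_{N/2+k}$, $I_k + I_{N/2+k}$, $K_k$ for each $1 \leq k < N/4$; the single function $I_{N/2}$; and, when $N/4 \in \N$, the two functions $I_{N/4} + I_{3N/4}$ and $K_{N/4}$.

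Two reductions immediately trim the work inside a generic block. First, by (\ref{iijjmm}) all actions commute, so the three "$I$-sums" are pairwise in involution and nothing need be checked among them. Second, by the Hopf identity $J_j^2 + M_j^2 = I_j I_{N-j}$ from (\ref{hopfintro}), the two diagonal terms of $K_k$ in (\ref{klformula2}) equal $d_k^- I_k I_{N-k}$ and $d_{\tilde k}^- I_{\tilde k} I_{N-\tilde k}$ (with $\tilde k = N/2-k$), which are functions of the actions alone and hence commute with every $I$-sum. The whole problem therefore reduces to showing that the single cross term $J_k J_{\tilde k} - M_k M_{\tilde k}$ Poisson-commutes with each of the three $I$-sums $I_k + I_{N-k}$, $I_{\tilde k} + I_{N-\tilde k}$ and $I_k + I_{N/2+k}$.

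For the first two the argument is structural. Reading off (\ref{jlikml}) and (\ref{mlikjl}) gives $\{I_j + I_{N-j}, J_j\} = 0$ and $\{I_j + I_{N-j}, M_j\} = 0$, since the contributions of the indices $j$ and $N-j$ cancel through the two Kronecker deltas; this is exactly the statement that $(I_j + I_{N-j})/2$ is the Casimir of the $so(3)$-relations satisfied by $M_j, J_j, L_j$. Moreover $I_j + I_{N-j}$ trivially commutes with any $J_{j'}, M_{j'}$ built from a disjoint set of variables. Taking $j=k$ and $j=\tilde k$ and expanding by the Leibniz rule then annihilates the brackets of $J_k J_{\tilde k} - M_k M_{\tilde k}$ with $I_k + I_{N-k}$ and with $I_{\tilde k} + I_{N-\tilde k}$.

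The genuinely delicate case, and the main obstacle, is $I_k + I_{N/2+k}$: this is the only $I$-sum in the block not aligned with either Hopf pair, since it combines one index from each. Here I would compute the four elementary brackets from Lemma \ref{fundbrackets}, namely $\{I_k + I_{N/2+k}, J_k\} = M_k$, $\{I_k + I_{N/2+k}, M_k\} = -J_k$, $\{I_k + I_{N/2+k}, J_{\tilde k}\} = -M_{\tilde k}$ and $\{I_k + I_{N/2+k}, M_{\tilde k}\} = J_{\tilde k}$, and substitute them into the Leibniz expansion of $\{I_k + I_{N/2+k}, J_k J_{\tilde k} - M_k M_{\tilde k}\}$. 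The four resulting terms are $M_k J_{\tilde k} - J_k M_{\tilde k} + J_k M_{\tilde k} - M_k J_{\tilde k}$, which cancel in pairs; the cancellation works precisely because of the minus sign in front of $M_k M_{\tilde k}$, so this antisymmetric form of the cross term is exactly the feature of $K_k$ forced by involution. Finally the $I_{N/2}$ block is trivial, and for the $N/4$ block I would rewrite $K_{N/4}$ from (\ref{kn4formula2}) as $-\a^2 I_{N/4} I_{3N/4} + (\b - \a^2) M_{N/4}^2$, the first summand being a function of the actions and the second satisfying $\{I_{N/4} + I_{3N/4}, M_{N/4}^2\} = 2 M_{N/4}\{I_{N/4} + I_{3N/4}, M_{N/4}\} = 0$ by the same Casimir identity.
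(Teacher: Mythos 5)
Your proof is correct and follows essentially the same route as the paper: the same block decomposition of the variables, the same reduction of $K_k$ via the Hopf identity $J_j^2+M_j^2 = I_j I_{N-j}$ to the cross term $J_k J_{\tilde{k}} - M_k M_{\tilde{k}}$, and the same elementary brackets from Lemma \ref{fundbrackets}. The only cosmetic difference is organizational: the paper computes the brackets of the cross term with the individual actions and exploits the symmetry $l \leftrightarrow \frac{N}{2}-l$ to cover all of (\ref{jjmm1})--(\ref{jjmm3}), whereas you dispatch (\ref{jjmm1}) and (\ref{jjmm2}) with the Casimir property of $I_j+I_{N-j}$ together with disjointness of variables, and verify (\ref{jjmm3}) by the same four-term cancellation.
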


\begin{proof}
The functions in (\ref{1ln4int}) depend on only one of the $\llcorner \frac{N}{4} \lrcorner + 1$ pairwise disjoint sets of variables. As the Poisson brackets between terms depending on variables of different sets vanish, it remains to check that functions of (\ref{1ln4int}) with the same $k$ are in involution with each other. In view of the definitions (\ref{klformula2}) and (\ref{kn4formula2}) of $K_l$ and $K_{N/4}$, respectively, and taking into account that $(I_k)_{\1N1}$ are pairwise in involution, this amounts to proving that for any $1 \leq l < \frac{N}{4}$,
\begin{eqnarray}
        \{ J_l J_{\frac{N}{2}-l} - M_l M_{\frac{N}{2}-l}, I_l + I_{N-l} \} & = & 0, \label{jjmm1}\\
        \{ J_l J_{\frac{N}{2}-l} - M_l M_{\frac{N}{2}-l}, I_{\frac{N}{2}-l} + I_{\frac{N}{2}+l} \} & = & 0, \label{jjmm2}\\
        \{ J_l J_{\frac{N}{2}-l} - M_l M_{\frac{N}{2}-l}, I_l + I_{\frac{N}{2}+l} \} & = & 0, \label{jjmm3}
\end{eqnarray}
and
\begin{equation} \label{jjmm4}
        \{ J_\frac{N}{4}^2 - M_\frac{N}{4}^2, I_\frac{N}{4} + I_\frac{3N}{4} \} = 0.
\end{equation}
First we note that by (\ref{jlikml}) and (\ref{mlikjl}) one has for any $1 \leq l < \frac{N}{2}$
\begin{equation} \label{jjmmil1}
        \{ J_l J_{\frac{N}{2}-l} - M_l M_{\frac{N}{2}-l}, I_l \} = - J_{\frac{N}{2}-l} M_l - M_{\frac{N}{2}-l} J_l %= \{ J_l J_{\frac{N}{2}-l} - M_l M_{\frac{N}{2}-l}, I_{\frac{N}{2}-l} \}
\end{equation}
and
\begin{equation} \label{jjmmilnl1}
        \{ J_l J_{\frac{N}{2}-l} - M_l M_{\frac{N}{2}-l}, I_{N-l} \} = J_{\frac{N}{2}-l} M_l + M_{\frac{N}{2}-l} J_l. %= \{ J_l J_{\frac{N}{2}-l} - M_l M_{\frac{N}{2}-l}, I_{\frac{N}{2}+l} \}
\end{equation}
Since the right hand sides of (\ref{jjmmil1}) and (\ref{jjmmilnl1}) are invariant under exchanging $l$ and $\frac{N}{2}-l$, the same must hold for the left hand sides, and we conclude that
\begin{equation} \label{jjmmil2}
        \{ J_l J_{\frac{N}{2}-l} - M_l M_{\frac{N}{2}-l}, I_{\frac{N}{2}-l} \} = - J_{\frac{N}{2}-l} M_l - M_{\frac{N}{2}-l} J_l
\end{equation}
and
\begin{equation} \label{jjmmilnl2}
        \{ J_l J_{\frac{N}{2}-l} - M_l M_{\frac{N}{2}-l}, I_{\frac{N}{2}+l} \} = J_{\frac{N}{2}-l} M_l + M_{\frac{N}{2}-l} J_l.
\end{equation}
The identities (\ref{jjmm1})-(\ref{jjmm3}) now follow from the appropriate combinations of (\ref{jjmmil1})-(\ref{jjmmilnl2}). In the same fashion, one concludes that (\ref{jjmm4}) holds.
\end{proof}

\begin{proof}[Proof of Theorem \ref{fpuliouvint}]
In view of Lemma \ref{involutionlemma}, it remains to check that the quantities listed in (\ref{1ln4int}) are functionally independent integrals. The independence is easy to verify, and the fact that they are conserved quantities follows from the formula (\ref{h4truncrepr}), showing that $H_V^{trunc}$ can be written as a function of them.
\end{proof}

\section{Foliation by Liouville tori} \label{folmain}

In the next two sections we describe the geometry of the moment map of the truncated resonant normal form (\ref{evenfpuformula}) %by the level sets of the integrals listed in Theorem \ref{fpuliouvint} 
for any even periodic FPU chain $H_V$ with potential $V$ whose expansion (\ref{potentialdef}) satisfies $(\a,\b) \neq (0,0)$. The case $\b = \a^2$ is special as in this case the normal form (\ref{evenfpuformula}) is the Birkhoff normal form of order four of the Toda chain. Its foliation is well known - it is the one of uncoupled harmonic oscillators. Hence we will concentrate on the case $\b \neq \a^2$ only. The special case $\a=0$ has been partially studied by Rink \cite{rink02}. Surprisingly, it turns out that many of his results continue to hold in the general case. Using the notation $\tilde{k} \equiv \tilde{k}(k) = \frac{N}{2}-k$, the integrals of Theorem \ref{fpuliouvint} can be grouped as follows:
\begin{equation} \label{integralslist}
(\mathcal{H}_k, \mathcal{H}_{\tilde{k}}, L_k, K_k)_{1 \leq k < \fn4}, \qquad I_\frac{N}{2}, \qquad \mathcal{H}_\fn4, K_\fn4,
\end{equation}
where for $1 \leq k < \frac{N}{2}$
\begin{displaymath}
  \mathcal{H}_k := I_k + I_{N-k}, \quad L_k := I_k - I_{\tilde{k}}.
\end{displaymath}
(Here we used that $L_k = (I_k + I_{\frac{N}{2}+k}) - (I_{\frac{N}{2}-k} + I_{\frac{N}{2}+k})$ is the difference of two integrals listed in Theorem \ref{fpuliouvint}.)

Using the assumption $\b - \a^2 \neq 0$, we rewrite the integrals $(K_l)_{1 \leq l \leq \fn4}$ as follows. Introduce the bifurcation parameter
\begin{equation} \label{gammadef}
        \g \equiv \g(\a, \b) := \frac{\a^2}{\a^2-\b}
\end{equation}
and note that
  \[ d_k^- = -\a^2 + (\b-\a^2) s_k^2 = (\b-\a^2)(\g + s_k^2).
\]
For any $1 \leq k < \fn4$, one has
        \[ K_k = - d_k^- (J_k^2 \!+\! M_k^2) - d_{\tilde{k}}^- (J_{\tilde{k}}^2 \!+\! M_{\tilde{k}}^2) + 2(\b \!-\! \a^2) s_{2k} (J_k J_{\tilde{k}} \!-\! M_k M_{\tilde{k}}),
\]
%{\setlength{\arraycolsep}{1pt}
%\begin{eqnarray*}
%        K_k & = & - d_k^- (J_k^2 \!+\! M_k^2) - d_{\tilde{k}}^- (J_{\tilde{k}}^2 \!+\! M_{\tilde{k}}^2) + 2(\b \!-\! \a^2) s_{2k} (J_k J_{\tilde{k}} \!-\! M_k M_{\tilde{k}}) \\
%        & = & (\a^2 \!-\! \b) \big( (\g \!+\! s_k^2)(M_k^2 \!+\! J_k^2) \!+\! (\g \!+\! s_{\tilde{k}}^2)(M_{\tilde{k}}^2 \!+\! J_{\tilde{k}}^2) \!-\! 2 s_{2k} (J_k J_{\tilde{k}} \!-\! M_k M_{\tilde{k}}) \big),
%\end{eqnarray*}}
whereas for $k = \fn4$,
        \[ K_k = \a^2 J_k^2 - (\b - 2\a^2) M_k^2 = (\a^2 - \b) \left( \g J_k^2 + (1 + \g) M_k^2 \right).
\]
In the sequel, we will for simplicity \emph{omit} the factor $s_{2k} (\a^2 - \b)$ in $K_k$, since it does not influence the geometry of the level sets of the integrals $(K_k)_{1 \leq k \leq \fn4}$.

Each of the $\llcorner \fn4 \lrcorner + 1$ groups of integrals listed in (\ref{integralslist}) depends only on a subset of the variables $\{ (x_k, y_k)_{\1N1} \}$. These subsets form a disjoint partition of $\{ (x_k, y_k)_{\1N1} \}$. More precisely, the following result holds.
\begin{prop}
The phase space $T^*\R^{N-1}$ of the truncated resonant normal form $H_V^{trunc}$ given by (\ref{evenfpuformula}) is the direct sum of invariant symplectic subspaces
        \[ T^*\R^{N-1} = \bigoplus_{0 \leq k \leq \fn4} \mathcal{P}_k
, \]
where
        \[ \mathcal{P}_k = \{ (x_j,y_j)_{1 \leq j \leq N-1} \in T^*\R^{N-1} | x_j = y_j = 0 \; \forall \, j \notin \{ k, N-k, \tilde{k}, N-\tilde{k} \} \}.
\]
The foliation of $T^*\R^{N-1}$ by level sets of the integrals (\ref{integralslist}) is the Cartesian product of the foliations of the $\mathcal{P}_k$.
\end{prop}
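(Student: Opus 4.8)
The plan is to verify the two claimed assertions separately: first that the displayed decomposition $T^*\R^{N-1} = \bigoplus_{0 \leq k \leq \fn4} \mathcal{P}_k$ is a direct sum of \emph{invariant symplectic} subspaces, and second that the foliation by the integrals \eqref{integralslist} factors as the corresponding Cartesian product. The key structural fact to exploit is the partition of the $2N-2$ variables $\{(x_j,y_j)_{\1N1}\}$ already established in the discussion preceding Lemma \ref{involutionlemma}: for $1 \leq k < \fn4$ the index block $\{k, \tilde{k}, N-k, N-\tilde{k}\}$ (which equals $\{k, \tfrac{N}{2}-k, \tfrac{N}{2}+k, N-k\}$) carries the integrals $\mathcal{H}_k, \mathcal{H}_{\tilde{k}}, L_k, K_k$, the singleton $\{\tfrac{N}{2}\}$ carries $I_\fn4$, and, when $\fn4 \in \N$, the block $\{\tfrac{N}{4}, \tfrac{3N}{4}\}$ carries $\mathcal{H}_\fn4$ and $K_\fn4$. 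These index sets are pairwise disjoint and exhaust $\{1, \dots, N-1\}$, which is exactly the content of the disjoint-partition claim; this is the combinatorial backbone of the whole argument.

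First I would check that each $\mathcal{P}_k$ is a symplectic subspace. Since the $(x_j,y_j)$ are canonical coordinates on $T^*\R^{N-1}$ (Theorem \ref{bnffputheoremeven}), the standard symplectic form is $\sum_j dx_j \wedge dy_j$, and its restriction to the coordinate subspace $\mathcal{P}_k$ (spanned by exactly the conjugate pairs indexed by the $k$-th block) is nondegenerate; the blocks being disjoint and exhaustive gives the direct-sum decomposition as an \emph{orthogonal} direct sum with respect to the symplectic form. Second, for invariance I would show each $\mathcal{P}_k$ is preserved by the Hamiltonian flow of $H_V^{trunc}$. Using the representation \eqref{h4truncrepr}, $H_V^{trunc}$ is a sum of terms each of which depends only on the variables of a single block, so the Hamiltonian vector field $X_{H_V^{trunc}}$ has no component mixing different blocks; hence the flow leaves each $\mathcal{P}_k$ invariant. (Equivalently, on $\mathcal{P}_k$ all coordinates from other blocks vanish, and since $H_V^{trunc}$ contains no monomials coupling distinct blocks, the equations of motion for the vanishing coordinates have vanishing right-hand side, so $\mathcal{P}_k$ is flow-invariant.)

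For the factorization of the foliation, I would argue as follows. Each integral in \eqref{integralslist} is a function of the variables of precisely one block $\mathcal{P}_k$; therefore, restricting the full list of integrals to a point $(p_0, p_1, \dots, p_{\fn4}) \in \bigoplus_k \mathcal{P}_k$, the value of each integral depends only on the component $p_k$ in its own block. A level set of the complete collection of integrals is thus the set of points whose block components each lie in the prescribed level set of the block's own integrals, i.e.\ a product $\prod_k (\textrm{level set in } \mathcal{P}_k)$. This is exactly the statement that the foliation is the Cartesian product of the foliations of the $\mathcal{P}_k$ by their respective integrals.

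The main obstacle, such as it is, lies not in any single step but in making the block bookkeeping airtight — in particular verifying that the four-element index sets $\{k, \tfrac{N}{2}-k, \tfrac{N}{2}+k, N-k\}$ are genuinely disjoint as $k$ ranges over $1 \leq k < \fn4$, and that the degenerate blocks $\{\tfrac{N}{2}\}$ and (when present) $\{\tfrac{N}{4}, \tfrac{3N}{4}\}$ fit together with them to cover $\{1,\dots,N-1\}$ without overlap for both parities $N \equiv 0$ and $N \equiv 2 \pmod 4$. Once that partition is confirmed (which the preceding paragraph in the text already asserts and which I would simply cite), the symplectic, invariance, and product-foliation claims all reduce to the observation that neither the symplectic form nor $H_V^{trunc}$ nor any integral couples distinct blocks, and the proof is essentially immediate.
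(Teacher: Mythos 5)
Your symplectic-decomposition argument and your product-foliation argument are fine, and they follow the same route the paper takes (the paper offers no formal proof; it treats the proposition as an immediate consequence of the block partition of the variables established before Lemma \ref{involutionlemma}). However, your invariance step rests on a false premise. It is not true that ``$H_V^{trunc}$ is a sum of terms each of which depends only on the variables of a single block,'' nor that it ``contains no monomials coupling distinct blocks'': by (\ref{habintrepr}), the fourth-order part $H^{(4)}_{\a,\b}(I)$ contains the cross terms $\frac{\b - \a^2}{N} I_{\frac{N}{2}} \sum_{k} \sk \, (I_k + I_{N-k})$ and $\frac{\b - \a^2}{2N} \sum_{k \neq l} \sk \sl \, (I_k + I_{N-k})(I_l + I_{N-l})$, which couple distinct blocks whenever $k$ and $l$ lie in different blocks -- and these terms are present precisely in the case $\b \neq \a^2$ to which this section is devoted. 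Consequently the equations of motion for the variables of one block genuinely involve the variables of the other blocks (for instance $\dot{x}_k$ acquires a contribution proportional to $(I_l + I_{N-l})\, y_k$), so your assertion that ``the equations of motion for the vanishing coordinates have vanishing right-hand side'' does not follow from the reason you give.

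The conclusion is nevertheless true, and the repair is short. What (\ref{h4truncrepr}) actually shows is that $H_V^{trunc}$ is a polynomial in the functions listed in (\ref{1ln4int}), each of which is a \emph{quadratic form} in the variables of a single block. Writing $X_{H_V^{trunc}} = \sum_i (\partial H_V^{trunc}/\partial F_i) \, X_{F_i}$ for these functions $F_i$, each $X_{F_i}$ has components only in the coordinate directions of its own block, and those components are linear in that block's variables; hence at a point of $\mathcal{P}_k$ the fields $X_{F_i}$ belonging to other blocks vanish, while those belonging to block $k$ are tangent to $\mathcal{P}_k$, so $X_{H_V^{trunc}}$ is tangent to $\mathcal{P}_k$. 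Alternatively, note that for each block $B$ the total action $\sum_{j \in B} I_j$ is a sum of integrals from the list (\ref{1ln4int}) and hence Poisson commutes with $H_V^{trunc}$ by Lemma \ref{involutionlemma}; it is non-negative and vanishes exactly where all variables of block $B$ vanish, so $\mathcal{P}_k$, being an intersection of zero level sets of conserved non-negative functions, is flow-invariant. With either repair your proof is complete; the remaining bookkeeping you defer to the partition statement is indeed exactly what the paper establishes before Lemma \ref{involutionlemma}.
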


%Therefore, in order to determine the foliation of $T^*\R^{N-1}$ by all of the integrals, it suffices to determine the foliation of $T^*\R$ by $I_\frac{N}{2}$, the foliation of $T^*\R^2$ by $\mathcal{H}_\fn4, K_\fn4$, and the foliation of $T^*\R^4$ by $\mathcal{H}_k, \mathcal{H}_{\tilde{k}}, L_k, K_k$. The foliation of $T^*\R^{N-1}$ by all integrals then is the cartesian product of all these foliations.

We now analyze the foliations of $\mathcal{P}_0$, $\mathcal{P}_\fn4$, and $\mathcal{P}_k$ for $0 < k < \fn4$ separately. Note that $\mathcal{P}_0$ and $\mathcal{P}_k$ ($0 < k < \fn4$) can be canonically identified with $T^*\R$ and $T^*\R^4$, respectively, and $\mathcal{P}_\fn4$ with $T^*\R^2$ (if $\fn4 \in \N$) or $\{ 0 \}$ (if $\fn4 \notin \N$).

\vspace{.4cm}

\noindent
\emph{Foliation of $\mathcal{P}_0$}
One easily sees that $I_\frac{N}{2}$ foliates $T^*\R$ by circles, centered at the origin.

\vspace{.4cm}

\noindent
\emph{Foliation of $\mathcal{P}_{\fn4}$ for $\fn4 \in \Z$} Let us study the geometry of the moment map $\mathcal{M}: T^*\R^2 \to \R^2$ defined by the integrable system with commuting integrals $H \equiv H_{\fn4}$ and $K \equiv K_{\fn4}$. It is convenient to introduce the following notation. Denote the standard coordinates of $T^*\R^2$ by $(x,y) = (x_1,x_2,y_1,y_2)$ and introduce the action variables $I_j = \frac{1}{2} (x_j^2 + y_j^2)$ ($j = 1,2$), as well as the Hopf variables $M,J,L$ given as in (\ref{jkmkdef}),
\begin{displaymath}
  (M,J,L) = \frac{1}{2} (x_1 y_2 - x_2 y_1, x_1 x_2 + y_1 y_2, I_1 - I_2).
\end{displaymath}
Then the moment map $\mathcal{M} = (H,K)$ takes the form
\begin{displaymath}
  H = \frac{1}{2} (I_1 + I_2) \quad \textrm{and} \quad K = (1+\g) M^2 + \g J^2.%K = (\b - 2\a^2) M^2 - \a^2 J^2.
\end{displaymath}
As already remarked in (\ref{hopfintro}) one has
\begin{displaymath}
  M^2 + J^2 + L^2 = H^2.
\end{displaymath}
%In the case where $\b = \a^2 \neq 0$ (``Toda case''), $K$ is given by $\a^2 I_1 I_2$ and we may replace the moment map by the map $(x,y) \mapsto (I_1, I_2)$ to make it irreducible. The geometry of the latter map is the one of two harmonic oscillators. Hence it remains to analyze the case where $\b \neq \a^2$. Introduce the bifurcation parameter
%\begin{displaymath}
%  \g = \frac{\a^2}{\a^2 - \b}
%\end{displaymath}
Further, we may replace $K$ by $K_{\g}$ given by
\begin{displaymath}
  K_{\g} := \left\{ \begin{array}{cc} (1+\g) M^2 + \g J^2 & \quad \g \notin \{ 0,1 \} \\ M & \quad \g=0 \\ J & \quad \g=1 \end{array} \right.
\end{displaymath}
First observe that the origin $(x,y) = (0,0)$ of $T^*\R^2$ is the only critical point of $\mathcal{M}$ with rank$(d_{(x,y)} \mathcal{M}) = 0$. Moreover,
\begin{displaymath}
  \mathcal{M}^{-1} \{ (0,0) \} = \{ (0,0) \}.
\end{displaymath}
The critical points $(x,y) \in T^*\R^2 \setminus \{ (0,0) \}$ with rank$(d_{(x,y)} \mathcal{M}) = 1$ are analyzed by symplectic reduction via the Hamiltonian vector field of $H$. On the sphere $\S_{\r}^3 = \{ H = \r^2/4 \}$ of radius $\r > 0$ in $T^*\R^2$ define the Hopf map
\begin{displaymath}
  \mathcal{F}: \S_{\r}^3 \to \S_{r}^2, (x,y) \mapsto (M,J,L) 
\end{displaymath}
where $r = \sqrt{M^2 + J^2 + L^2}|_{\S_{\r}^3} = H|_{\S_{\r}^3} = \frac{\r^2}{4}$. The fibers of $\mathcal{F}$ are circles obtained by the $\S^1$-action of $H$. The reduced system is then given by $(\S_{r}^2, X_{\g})$ where $X_{\g}$ denotes the reduced Hamiltonian vector field induced by $K_{\g}$. To compute $X_{\g}$, note that the equations of motion in the reduced system corresponding to the Hamiltonian $K_{\g}$ are given by
\begin{equation} \label{reducedeqns}
  \frac{d}{dt}  \left( \begin{array}{c} M \\ J \\ L \end{array} \right) = \left( \begin{array}{c} M \\ J \\ L \end{array} \right) \times \left( \begin{array}{c} \partial_M K_{\g} \\ \partial_J K_{\g} \\ \partial_L K_{\g} \end{array} \right).
\end{equation}
Indeed, following the procedure of reduction in section I.5 of \cite{cuba}, formula (\ref{reducedeqns}) follows from
\begin{displaymath}
\dot{w_j} = \{ w_j, K_{\g} \} = \sum_{i=1}^3 \partial_{w_i} K_{\g} \{ w_j,w_i \} \qquad (1 \leq j \leq 3)
\end{displaymath}
and the commutation relations of the variables $(w_1,w_2,w_3) = (M,J,L)$ given by Lemma \ref{fundbrackets}. We then obtain
\begin{displaymath}
  X_{\g} = \left\{ \begin{array}{cc} (-2\g J L, 2(1+\g) M L, -2 M J) & \quad \g \notin \{ -1,0 \} \\ (0,L,-J) & \g=0 \\ (-L,0,M) & \g=-1 \end{array} \right.
\end{displaymath}
It turns out that the foliation of $\S^2_r$ by level sets of $K_{\g}$ depends on the bifurcation parameter $\g$. If $\g=0$, then $(\pm r,0,0)$ are the only two fixed points of $X_0$. They are both elliptic and the level sets of $K_0$ in $\S^2_r \setminus \{ ( \pm r,0,0) \}$ are circles. Similarly, for $\g=-1$, $(0, \pm r,0)$ are the only two fixed points of $X_{-1}$. They are both elliptic and the level sets of $K_{-1}$ in $\S^2_r \setminus \{ (0, \pm r,0) \}$ are circles. Now let us consider the case $\g \in \R \setminus \{ 0,-1 \}$. Then $X_{\g}$ admits six fixed points,
\begin{displaymath}
  (\pm r,0,0), \quad (0, \pm r,0), \quad (0,0, \pm r)
\end{displaymath}
where two of them are hyperbolic and the remaining four elliptic. Note that the corresponding values of $K_{\g}$ are $(1+\g) r^2$, $\g r^2$, and $0$, respectively, and that the two hyperbolic fixed points are contained in the same connected component of the inverse image of $K_{\g}$ in $\S^2_r$. This component consists of two great circles where each of the four half circles is a heteroclinic $X_{\g}$-orbit connecting the two hyperbolic fixed points.
\begin{equation} \label{typeclassific}
  \begin{array}{|c|c|c|}  \hline  & \emph{hyperbolic fixed points} & \emph{critical value} \\ \hline \g < -1 & (\pm r,0,0) & (1+\g) r^2 \\ \hline -1 < \g < 0 & (0,0, \pm r) & 0 \\ \hline \g > 0 & (0, \pm r,0) & \g r^2 \\ \hline \end{array}
\end{equation}
Let us verify the claimed classification of the two fixed points $(0,0, \e r)$ with $\e \in \{ \pm \}$. The other four fixed points are treated in a similar fashion. Near $(0,0, \e r)$ we choose $M,J$ as coordinates of $\S^2_r$. The equations of motion induced by $X_{\g}$ on $\S^2_r$ in these coordinates read
\begin{eqnarray*}
  \dot{M} & = & -\e 2 \g J \sqrt{r^2 - M^2 - J^2}, \\
  \dot{J} & = & \e 2 (1+\g) M \sqrt{r^2 - M^2 - J^2}.
\end{eqnarray*}
If linearized at $(0,0,\e r)$ the corresponding linear system is given by the $2 \times 2$-matrix $\e A$ where
\begin{displaymath}
  A = 2 r \left( \begin{array}{cc} 0 & -\g \\ 1+\g & 0 \end{array} \right).
\end{displaymath}
The eigenvalues of $A$ are given by
\begin{displaymath}
  \l^2 + 4 \g (1 + \g) r^2 = 0 \quad \textrm{or} \quad \l_{1,2} = \pm 2 r \sqrt{-\g (1+\g)}.
\end{displaymath}
As $-\g (1+\g) > 0$ iff $-1 < \g < 0$ it follows that $\l_{1,2}$ are in $\R \setminus \{ 0 \}$ and hence that $(0,0, \pm r)$ are both hyperbolic fixed points for $-1 < \g < 0$ whereas they are both elliptic if $\g < -1$ or $\g > 0$. For $-1 < \g < 0$, the inverse image $K_{\g}^{-1}(\{ 0 \})$ is given by
\begin{eqnarray*}
  K_{\g}^{-1}(\{ 0 \}) & = & \{ (M,J,L) | (1+\g) M^2 + \g J^2 = 0; M^2 + J^2 + L^2 = r^2 \} \\
& = & \{ (M,J,L) | M = \pm \left| \frac{\g}{1+\g} \right|^{1/2} J; M^2 + J^2 + L^2 = r^2 \},
\end{eqnarray*}
whereas for $\g < -1$ or $\g > 0$, $K_{\g}^{-1}(\{ 0 \}) = \{ (0,0, \pm r) \}$.

\vspace{.4cm}

\noindent
\emph{Foliation of $\mathcal{P}_k$ for $0 < k < \fn4$} The foliation of $\mathcal{P}_k$ for $0 < k < \fn4$ is more complicated than in the case $k = \fn4$. We describe it in detail in the subsequent section.

\section{Integrable system $(\mathcal{H}_k, \mathcal{H}_{\tilde{k}}, G_k, K_k)$} \label{folapphhgk}

In this section we present a detailed study of the geometry of the moment map $\mathcal{M}: T^*\R^4 \to \R^4$ defined by the integrable system $\mathcal{P}_k$ with commuting integrals $\mathcal{H}_k$, $\mathcal{H}_{\tilde{k}}$, $G_k$, $K_k$ for $1 \leq k < \fn4$. As in section \ref{folmain} we restrict to FPU chains with potential $V$ whose expansion (\ref{potentialdef}) satisfies $\b \neq \a^2$. We show that in this case the vector field induced by $K_k$ exhibits hyperbolic dynamics. It is convenient to introduce the following notation. Denote the standard coordinates of $T^*\R^4$ by $(x,y)$ with $x = (x_i)_{1 \leq i \leq 4}$ and $y = (y_i)_{1 \leq i \leq 4}$, and introduce the action variables $I_j = \frac{1}{2}(x_j^2 + y_j^2)$ ($1 \leq j \leq 4$), as well as the Hopf variables $(M_i,J_i,L_i)_{1 \leq i \leq 2}$ given by
\begin{eqnarray*}
  (M_1,J_1,L_1) & = & \frac{1}{2} (x_1 y_2 - x_2 y_1, x_1 x_2 + y_1 y_2, I_1 - I_2), \\
  (M_2,J_2,L_2) & = & \frac{1}{2} (x_3 y_4 - x_4 y_3, x_3 x_4 + y_3 y_4, I_3 - I_4).
\end{eqnarray*}
By Lemma \ref{fundbrackets}, the Poisson brackets between the variables $(M_i,J_i,L_i)_{1 \leq i \leq 2}$ are given by
\begin{displaymath}
  \{ M_i, J_i \} = -L_i, \; \{ J_i, L_i \} = -M_i, \; \{ L_i, M_i \} = -J_i
\end{displaymath}
whereas all other brackets vanish.

The moment map $\mathcal{M}$ then takes the form
\begin{displaymath}
  \mathcal{M}: T^*\R^4 \to \R^4, \quad (x,y) \mapsto (H_1,H_2,G,K_{\g})
\end{displaymath}
where
\begin{displaymath}
  H_1 = \frac{1}{2} (I_1 + I_2); \quad H_2 = \frac{1}{2} (I_3 + I_4); \quad G = L_1 - L_2
\end{displaymath}
and where $K_{\g}$ is a scalar multiple of $K_k$, given by
\begin{displaymath}
  %K = \sum_{i=1}^2 d_i (M_i^2 + J_i^2) + 2(\b - \a^2) s_{2k} (M_1 M_2 - J_1 J_2),
  K_{\g} = \sum_{i=1}^2 \frac{1}{2} d_{i,\g} (M_i^2 + J_i^2) + (M_1 M_2 - J_1 J_2)
\end{displaymath}
with the coefficients $d_{1,\g}$, $d_{2,\g}$ defined by
\begin{displaymath}
  %d_1 = -\a^2 + (\b - \a^2) s_k^2, \quad d_2 = -\a^2 + (\b - \a^2) c_k^2
  d_{1,\g} = \frac{\g + s_k^2}{s_{2k}}, \quad d_{2,\g} = \frac{\g + c_k^2}{s_{2k}}
\end{displaymath}
and $s_k = \sin \frac{k \pi}{N}$, $c_k = \cos \frac{k \pi}{N}$. (The definition of the integral $G$ above differs from the one given earlier by a multiple of the integral $H_1 - H_2$ as $I_1 - I_3 = L_1 - L_2 + H_1 - H_2$.)

%In the case where $\b = \a^2 \neq 0$ (``Toda case''), $K = -a^2 (H_1^2 - L_1^2 + H_2^2 - L_2^2)$ is a function of $I_1$, $I_2$, $I_3$, $I_4$ alone, and hence we may replace $\M$ by the map $(x,y) \mapsto (I_1,I_2,I_3,I_4)$. The geometry of the latter map is the standard moment map of four uncoupled harmonic oscillators. For the remaining part of this appendix we will assume that $\b \neq \a^2$. 
First note that the origin $(0,0)$ in $T^*\R^4$ is the only critical point of $\M$ with rank$(d\M) = 0$. Moreover, $\M^{-1}\{ (0,0) \} = \{ (0,0) \}$. Next observe that when restricted to $T^*\R^2 \times \{ 0 \}$, one has $G = \frac{1}{2}(I_1 - I_2)$ and $K_{\g} = d_{1,\g} (H_1^2 - L_1^2)$, hence they are functions of $I_1$, $I_2$ alone and $\M|_{T^*\R^2 \times \{ 0 \}}$ may be replaced by the map $(x,y) \mapsto (I_1,I_2,0,0)$. The geometry of the latter map is the one of two uncoupled harmonic oscillators. The subspace $\{ 0 \} \times T^*\R^2$ is treated similarly. It remains to study the restriction of $\M$ to $T^*\R^4 \setminus ((T^*\R^2 \times \{ 0 \}) \cup (\{ 0 \} \times T^*\R^2))$. The Hamiltonian vector fields of $H_1$ and $H_2$ induce a torus action on $T^*\R^2$. The corresponding symplectic reduction is given by the product of two Hopf maps,
\begin{displaymath}
  \mathcal{F}: \S_{\r_1}^3 \times \S_{\r_2}^3 \to \S_{r_1}^2 \times \S_{r_2}^2, (x,y) \mapsto (M_i,J_i,L_i)_{1 \leq i \leq 2}
\end{displaymath}
where for $i = 1,2$, $\S_{\r_i}^3 = \{ H_i = \rho_i^2/4 \}$ is a sphere in $T^*\R^2$ and $r_i = \r_i^2/4 = \sqrt{M_i^2 + J_i^2 +L_i^2}|_{\S_{\r_i}^3} = {H_i}|_{\S_{\r_i}^3}$. %Introduce the bifurcation parameter
%\begin{displaymath}
%\g := \frac{\a^2}{\a^2 - \b}
%\end{displaymath}
%and replace $K$ by $K_{\g}$, given by
%\begin{displaymath}
%  K_{\g} = \sum_{i=1}^2 \frac{1}{2} d_{i,\g} (M_i^2 + J_i^2) + (M_1 M_2 - J_1 J_2)
%\end{displaymath}
%where $d_{1,\g} = \frac{\g + s_k^2}{s_{2k}}$ and $d_{2,\g} = \frac{\g + c_k^2}{s_{2k}}$. 
The fibers of $\mathcal{F}$ are $2$-dimensional tori, obtained by the $\S^1 \times \S^1$-action of $H_1 \times H_2$. The reduced system is then given by $(\S_{r_1}^2 \times \S_{r_2}^2, Y, X_{\g})$, where $Y$ and $X_{\g}$ denote the reduced Hamiltonian vector fields induced by $G$ and $K_{\g}$, respectively. To compute $Y$ and $X_{\g}$, note that the equations of motion in the reduced system, corresponding to a Hamiltonian $H$, are given by
\begin{equation} \label{reducedeqnsk<n4}
  \frac{d}{dt}  \left( \begin{array}{c} M_i \\ J_i \\ L_i \end{array} \right) = \left( \begin{array}{c} M_i \\ J_i \\ L_i \end{array} \right) \times \left( \begin{array}{c} \partial_{M_i} H \\ \partial_{J_i} H \\ \partial_{L_i} H \end{array} \right), \quad i=1,2
\end{equation}
- see section \ref{folmain} for details. We then obtain
\begin{equation} \label{yxgammavf}
  Y = \left( \begin{array}{c} J_1 \\ -M_1 \\ 0 \\ -J_2 \\ M_2 \\ 0 \end{array} \right), \quad X_{\g} = \left( \begin{array}{c} (J_2 - d_{1,\g} J_1) L_1 \\ (d_{1,\g} M_1 + M_2) L_1 \\ -(M_1 J_2 + M_2 J_1) \\ (J_1 - d_{2,\g} J_2) L_2 \\ (d_{2,\g} M_2 + M_1) L_2 \\ -(M_1 J_2 + M_2 J_1) \end{array} \right).
\end{equation}
Further introduce the reduced moment map
\begin{displaymath}
  \M_{\g}: \S_{r_1}^2 \times \S_{r_2}^2 \to \R^2, (M_i,J_i,L_i)_{1 \leq i \leq 2} \mapsto (G, K_{\g}).
\end{displaymath}
We now study the critical points of $\M_{\g}$ with rank$\, (d\M_{\g}) = 0$, i.e. points of $\S_{r_1}^2 \times \S_{r_2}^2$ which are fixed points of both, $Y$ and $X_{\g}$. From the expressions for $Y$ and $X_{\g}$ derived above, one easily sees that there are only four such critical points,
\begin{displaymath}
  (M_i,J_i,L_i)_{1 \leq i \leq 2} = \e (0,0,r_1,0,0, \pm r_2),
\end{displaymath}
where $\e \in \{ \pm \}$. The value of the critical point $\e (0,0,r_1,0,0,-r_2)$ by $\M_{\g}$ is $(\e (r_1 + r_2), 0)$ and
\begin{displaymath}
  \M_{\g}^{-1} \{ (\e (r_1 + r_2), 0) \} = \{ \e (0,0,r_1,0,0,-r_2) \}.
\end{displaymath}
Computing the Jacobian of the reduced vector field $X_{\g}$ at the critical points one sees that they are elliptic fixed points of $X_{\g}$. The values of the other two critical points $\e (0,0,r_1,0,0,r_2)$ by $\M_{\g}$ are $(\e (r_1 - r_2), 0)$. The inverse image of $(\e (r_1 - r_2), 0)$ might have several connected components, depending on the values of $\g$ and the additional bifurcation parameter
\begin{displaymath}
r := \frac{r_1}{r_2} > 0.
\end{displaymath}
Our main results concerning the critical points $\e (0,0,r_1,0,0,r_2)$ are collected in the folloing theorem.
\begin{theorem} \label{kgenfolthm}
Assume that $1 \leq k < \fn4$, $0 < r \leq 1$, $\e \in \{ \pm \}$, and $\g \in \R$. The critical point $\e (0,0,r_1,0,0,r_2)$ of $\M_{\g}$ is a \emph{hyperbolic} fixed point of the vector field $X_{\g}$ if and only if
\begin{equation} \label{hyperbolcondsingle}
        \left| (\g + s_k^2) \sqrt{r} + (\g + c_k^2) \frac{1}{\sqrt{r}} \right| < 2 s_{2k}.
\end{equation}
Otherwise it is an \emph{elliptic} fixed point of $X_{\g}$. If (\ref{hyperbolcondsingle}) is satisfied, the stable and unstable manifolds of $\e (0,0,r_1,0,0,r_2)$ both have dimension two. In the case \mbox{$r<1$}, the connected component of $\M_{\g}^{-1} \{ \e (r_1 - r_2,0) \}$ containing $\e (0,0,r_1,0,0,r_2)$ is a $2$-dimensional torus pinched at $\e (0,0,r_1,0,0,r_2)$ and consists of homoclinic $X_{\g}$-orbits. In the case $r=1$, $\M_{\g}^{-1} \{ (0,0) \}$ is a $2$-dimensional torus pinched at the two points $\pm (0,0,r_1,0,0,r_1)$, and $\M_{\g}^{-1} \{ (0,0) \} \setminus \{ \pm (0,0,r_1,0,0,r_1) \}$ consists of heteroclinic $X_{\g}$-orbits.
\end{theorem}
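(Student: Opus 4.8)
The plan is to separate the local classification of the fixed point from the global description of its fibre. Locally, the elliptic/hyperbolic dichotomy and the dimensions of the invariant manifolds will follow from linearizing the reduced field $X_{\g}$ of (\ref{yxgammavf}) at $\e(0,0,r_1,0,0,r_2)$. Globally, I would describe the singular fibre by reducing the level set $\{ G = \e(r_1-r_2) \}$ by the circle action generated by $Y$ and reconstructing the fibre as an $\S^1$-bundle over a level curve of $K_{\g}$.

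For the linearization I use $(M_1,J_1,M_2,J_2)$ as coordinates on $\S_{r_1}^2 \times \S_{r_2}^2$ near the pole, with $L_i = \e\sqrt{r_i^2 - M_i^2 - J_i^2}$, so that $L_1 = \e r_1$ and $L_2 = \e r_2$ at the fixed point. Substituting into (\ref{yxgammavf}) and dropping higher order terms, the linearization acquires, in the ordering $(M_1,M_2,J_1,J_2)$, the block form $\left( \begin{array}{cc} 0 & B \\ C & 0 \end{array} \right)$ with
\[
  B = \e \left( \begin{array}{cc} -r_1 d_{1,\g} & r_1 \\ r_2 & -r_2 d_{2,\g} \end{array} \right), \qquad C = \e \left( \begin{array}{cc} r_1 d_{1,\g} & r_1 \\ r_2 & r_2 d_{2,\g} \end{array} \right),
\]
so that each eigenvalue $\l$ satisfies $\l^2 = \mu$ for some eigenvalue $\mu$ of $BC$. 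I then compute
\[
  \mathrm{tr}(BC) = 2 r_1 r_2 - r_1^2 d_{1,\g}^2 - r_2^2 d_{2,\g}^2, \qquad \det(BC) = r_1^2 r_2^2 (d_{1,\g} d_{2,\g} - 1)^2,
\]
and the key algebraic step is the factorization of the discriminant of the $\mu$-quadratic,
\[
  \mathrm{tr}(BC)^2 - 4 \det(BC) = (r_1 d_{1,\g} - r_2 d_{2,\g})^2 \left( (r_1 d_{1,\g} + r_2 d_{2,\g})^2 - 4 r_1 r_2 \right).
\]
When $(r_1 d_{1,\g} + r_2 d_{2,\g})^2 < 4 r_1 r_2$ this discriminant is nonpositive: the two $\mu$'s are complex conjugate (or, in the borderline case $r_1 d_{1,\g} = r_2 d_{2,\g}$, a positive double root), and either way two of the $\l$'s have positive and two negative real part, so the point is hyperbolic with $W^s$ and $W^u$ of dimension two. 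When the inequality is reversed, $\mathrm{tr}(BC) < 0$ forces both $\mu$'s to be real and negative, all four $\l$'s are purely imaginary, and the point is elliptic. Finally, dividing $(r_1 d_{1,\g} + r_2 d_{2,\g})^2 < 4 r_1 r_2$ by $r_1 r_2$ and inserting $d_{1,\g} = (\g + s_k^2)/s_{2k}$, $d_{2,\g} = (\g + c_k^2)/s_{2k}$, $r = r_1/r_2$ (using $s_{2k} > 0$ for $1 \le k < \fn4$) turns it into exactly (\ref{hyperbolcondsingle}).

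For the global picture I pass to coordinates adapted to $Y$: writing $M_i = \rho_i \cos\phi_i$, $J_i = \rho_i \sin\phi_i$ with $\rho_i = \sqrt{r_i^2 - L_i^2}$, one has
\[
  K_{\g} = \frac{1}{2}(d_{1,\g}\rho_1^2 + d_{2,\g}\rho_2^2) + \rho_1 \rho_2 \cos(\phi_1 + \phi_2),
\]
while the flow of $Y$ rotates $\phi_1,\phi_2$ in opposite senses and fixes $\psi := \phi_1 + \phi_2$, $\rho_1$, $\rho_2$. On $\{ G = \e(r_1-r_2) \}$, i.e. $L_1 = L_2 + \e(r_1-r_2)$, the equation $K_{\g} = 0$ reads $\cos\psi = -(d_{1,\g}\rho_1^2 + d_{2,\g}\rho_2^2)/(2\rho_1\rho_2)$ wherever $\rho_1\rho_2 > 0$, and is solvable precisely where $|d_{1,\g}\rho_1^2 + d_{2,\g}\rho_2^2| \le 2 \rho_1 \rho_2$. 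The circle fibre of $Y$ degenerates to a point exactly at the $Y$-fixed points $\rho_1 = \rho_2 = 0$ lying on the level set, which become the pinch points of the reconstructed fibre. For $r<1$ the only such point is $\e(0,0,r_1,0,0,r_2)$ itself: the reduced level curve is a single loop through its image (the two $\psi$-branches merge at one end and run into the collapsed fibre at the other), so the $\S^1$-bundle over it, with one meridian collapsed, is a once-pinched torus whose regular $X_{\g}$-orbits are homoclinic to $\e(0,0,r_1,0,0,r_2)$. For $r=1$ one has $L_1 = L_2$, hence $\rho_1 = \rho_2 =: \rho$ and $K_{\g} = \rho^2\left( \frac{1}{2}(d_{1,\g}+d_{2,\g}) + \cos\psi \right)$, so on $\rho > 0$ the locus $K_{\g}=0$ consists of the two constant-$\psi$ arcs $\cos\psi = -\frac{1}{2}(d_{1,\g}+d_{2,\g})$ joining the two $Y$-fixed points $\pm(0,0,r_1,0,0,r_1)$; collapsing the fibre at both yields a twice-pinched torus whose regular orbits are heteroclinic between the two pinch points.

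The main obstacle will be this global step: turning the reduction-and-reconstruction picture into a rigorous identification of the connected component of $\M_{\g}^{-1}\{ \e(r_1-r_2,0) \}$ through the fixed point as precisely a once- (resp. twice-) pinched torus. The delicate points are that the fibre circle collapses only at the genuine $Y$-fixed points, that the reduced level set through the image of $\e(0,0,r_1,0,0,r_2)$ is a single loop (rather than a transverse self-crossing, as the collapse of the circle forces), and that the smooth part of the resulting pinched torus is swept out exactly by the claimed homoclinic/heteroclinic orbits, i.e. by $W^s = W^u$ inside the singular fibre. By comparison the linearization, once the block structure and the discriminant factorization are in place, together with the solvability computation, are essentially mechanical.
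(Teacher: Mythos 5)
Your two--stage strategy coincides with the paper's: the paper proves Theorem \ref{kgenfolthm} by combining Propositions \ref{req1prop}, \ref{propgck2eq0} and \ref{propgck2neq0}, and each of these consists of exactly your two ingredients, namely a linearization of $X_{\g}$ at the pole and a description of the singular fibre via the $Y$-orbits $\mathcal{L}(L_1,L_2,\a)$ of (\ref{lcaldef}) and the solvability of equation (\ref{qgeqn}). Your local half is actually \emph{more} explicit and more unified than the paper's: the paper splits into the three regimes $r=1$, [$r<1$, $\g+c_k^2=0$], [$r<1$, $\g+c_k^2\neq 0$] and in each one dismisses the eigenvalue computation as ``straightforward'', whereas your block form with $B$, $C$, the values of $\mathrm{tr}(BC)$ and $\det(BC)$, and the factorization
\begin{displaymath}
  \mathrm{tr}(BC)^2 - 4\det(BC) = (r_1 d_{1,\g} - r_2 d_{2,\g})^2 \left( (r_1 d_{1,\g} + r_2 d_{2,\g})^2 - 4 r_1 r_2 \right)
\end{displaymath}
are all correct and yield (\ref{hyperbolcondsingle}) uniformly in the parameters. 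One small caveat: when $d_{1,\g} d_{2,\g} = 1$ (so $\det(BC)=0$) or on the boundary of (\ref{hyperbolcondsingle}), the linearization has zero or repeated purely imaginary eigenvalues, so your phrase ``both $\mu$'s real and negative'' fails there and ``elliptic'' must be read as ``not hyperbolic, possibly degenerate'' --- the same gloss the paper makes in the proof of Proposition \ref{propgck2neq0}(i).

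The genuine gap is exactly where you located it, and it is worth saying precisely what is missing and how the paper fills it. Knowing that $K_{\g}=0$ is solvable for $\psi$ identifies the fibre set-theoretically and gives your (correct) pinched-torus topology, but it does not yet show that the smooth part consists of homoclinic resp.\ heteroclinic orbits: a priori the fibre could carry further fixed points of $X_{\g}$ (rank-one critical points of $\M_{\g}$) or periodic orbits. The paper closes this with a one-line monotonicity argument that you should import: by (\ref{yxgammavf}), the third and sixth components of $X_{\g}$ equal $-(M_1J_2+M_2J_1) = -\rho_1\rho_2\sin\psi$, which has a strict sign on each of the two branches $\psi = \pm\a_{L_1}$ of the fibre; hence $L_1$ and $L_2$ evolve strictly monotonically along every orbit in the smooth part, there are no fixed points there, and every orbit limits at both ends onto the collapsed fibres. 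For $r<1$ one must additionally check that the orbits do not terminate at the bottom of the loop: in the case $\g + c_k^2 = 0$ the paper computes $X_{\g} = (-r_1J_2,-r_1M_2,0,0,0,0)\neq 0$ on the circle $\{L_1=-r_1,\ \rho_2>0\}$, so orbits cross from the branch $+\a_{L_1}$ to the branch $-\a_{L_1}$ and return to the pinch point (proof of Proposition \ref{propgck2eq0}(ii)); in the case $\g+c_k^2\neq 0$ the analogous check is made at the turning point $L_1 = l_{\g,r}$, using the second component of the reduced field (\ref{ddtl1expl}). With this flow-direction argument added, your reconstruction of the component as a once- resp.\ twice-pinched torus swept out by homoclinic resp.\ heteroclinic orbits is complete and agrees with the paper's.
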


To prove Theorem \ref{kgenfolthm} we separately treat for any given $1 \leq k < \fn4$ three subsets of the domain of the parameters $\g$ and $r$. The results for these three cases are stated in detail in Propositions \ref{req1prop}, \ref{propgck2eq0}, and \ref{propgck2neq0} below.

\begin{figure}
  \centering
  \subfigure[$k=1$]{
     \label{fig:k=1}
     \includegraphics[width=1.5in]{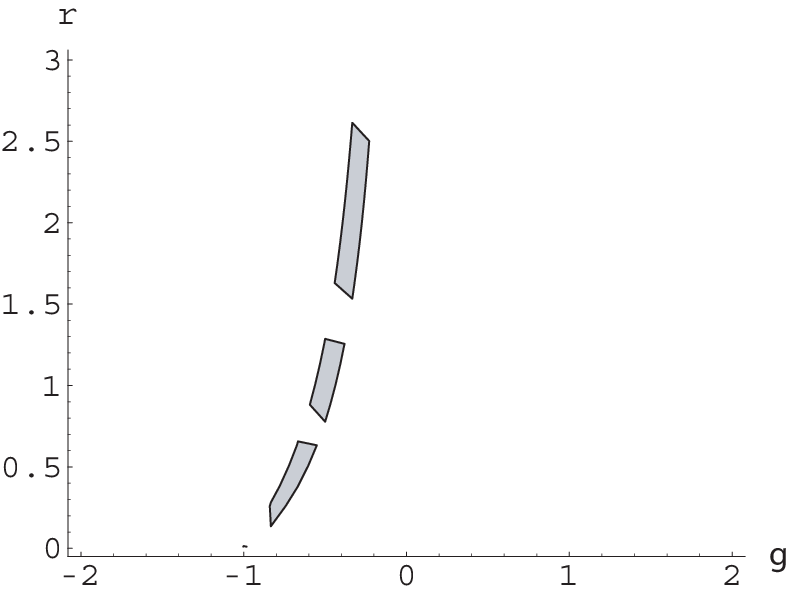}}
  \hspace{.5in}
  \subfigure[$k=5$]{
     \label{fig:k=5}
     \includegraphics[width=1.5in]{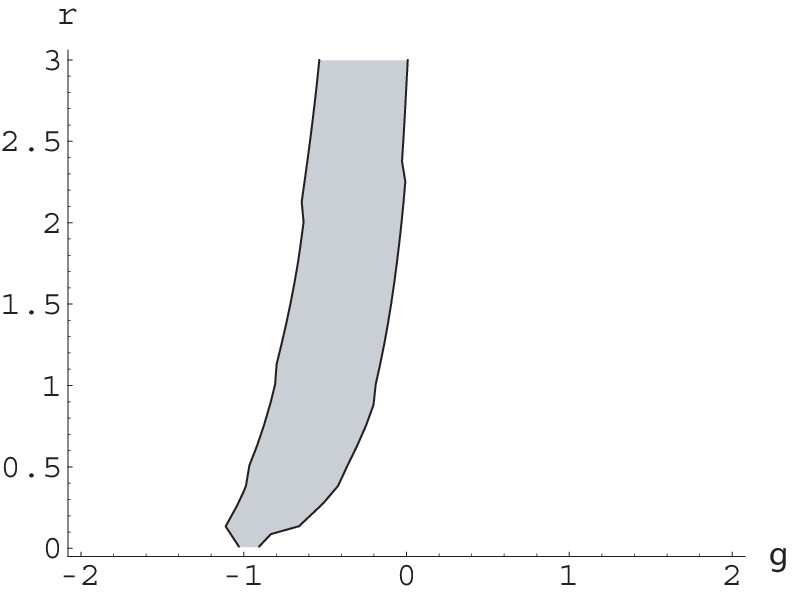}}
\\
  \subfigure[$k=10$]{
     \label{fig:k=10}
     \includegraphics[width=1.5in]{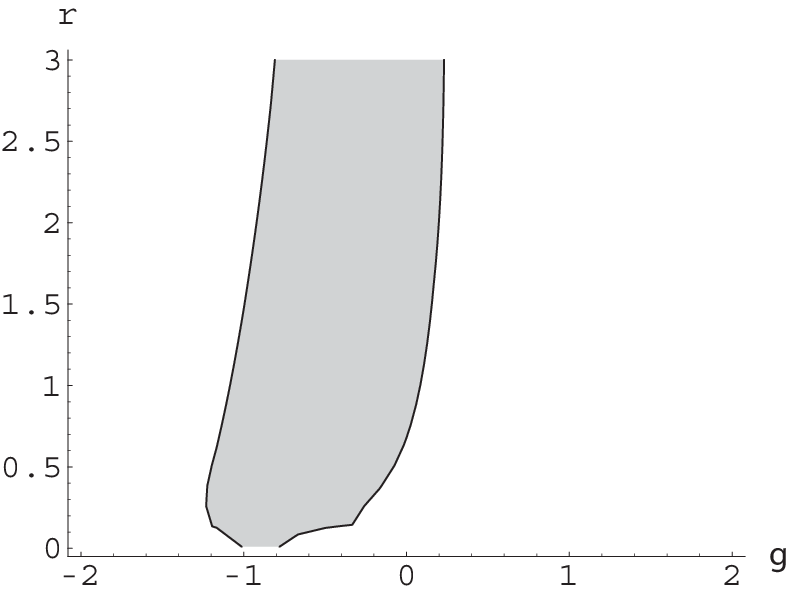}}
  \hspace{.5in}
  \subfigure[$k=24$]{
     \label{fig:k=24}
     \includegraphics[width=1.5in]{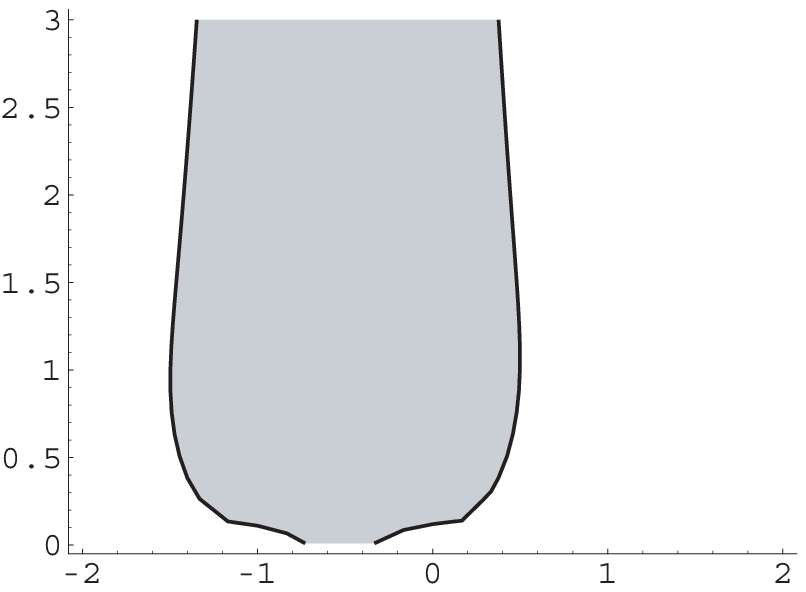}}
  \caption{Subsets of parameters $(\g,r)$ with hyperbolic dynamics of $X_{\g}$ ($N\!=\!100$)}
  \label{fig:discr}
\end{figure}

Note that the inverse image $\M_{\g}^{-1} \{ (\e (r_1 - r_2), 0) \}$ is invariant under the action of the vector field $Y$. The orbits of this action can be easily described,
\begin{equation} \label{orbitsgaction}
  \{ (R_{-\phi}(M_1,J_1), L_1, R_{\phi}(M_2,J_2), L_2) \big| |\phi| \leq \pi \},
\end{equation}
where $R_{\phi}(u,v)$ denotes the image of $(u,v) \in \R^2$ of the rotation $R_{\phi}$ by the angle $\phi$ in counterclockwise orientation. Hence given $L_1$ and $L_2$ with $|L_i| < r_i$ for $i=1,2$ there exists a unique point $(\hat{M}_i, \hat{J}_i, L_i)_{1 \leq i \leq 2}$ on such an orbit with the property that
\begin{displaymath}
  (\hat{M}_1, \hat{J}_1) = \left( \sqrt{r_1^2 - L_1^2}, \, 0 \right) \quad \textrm{and} \quad (\hat{M}_2, \hat{J}_2) = \sqrt{r_2^2 - L_2^2} \; \left( \cos \a, \sin \a \right)
\end{displaymath}
for some $0 \leq \a < 2\pi$. We denote the corresponding $Y$-orbit by $\mathcal{L}(L_1,L_2,\a)$, i.e.
\begin{equation} \label{lcaldef}
  \mathcal{L}(L_1,L_2,\a) = \big\{ \big( R_{-\phi} (\sqrt{r_1^2 \!-\! L_1^2}, 0), L_1, R_{\a+\phi} (\sqrt{r_2^2 \!-\! L_2^2}, 0), L_2 \big) \big| |\phi| \leq \pi \big\}.
\end{equation}
As $G$ and $K_\g$ commute, $K_{\g}$ is invariant along any orbit of the vector field $Y$ and we conclude that
\begin{equation} \label{kgammaalpha}
  K_{\g}((M_i,J_i,L_i)_{1 \leq i \leq 2}) = \frac{1}{2} \sum_{i=1}^2 d_{i,\g} (r_i^2 - L_i^2) + \sqrt{(r_1^2 - L_1^2)(r_2^2 - L_2^2)} \cos \a.
\end{equation}
Let us now determine $\M_{\g}^{-1} \{ (\e (r_1 - r_2), 0) \}$ for $\e = 1$ and $r \leq 1$. (The case where $r > 1$ and/or $\e = -1$ is treated in a similar fashion.) Let $(M_i,J_i,L_i)_{1 \leq i \leq 2}$ be an element of $\M_{\g}^{-1} \{ (r_1 - r_2, 0) \} \setminus \{ (0,0,r_1,0,0,r_2) \}$. Then $r_2 - L_2 = r_1 - L_1$ and $K_{\g}((M_i,J_i,L_i)_{1 \leq i \leq 2}) = 0$. First note that $L_1 < r_1$. Indeed, if $L_1 = r_1$, then $L_2 = r_2$ and $(M_i,J_i) = (0,0)$ for $i=1,2$, contradicting our assumption on the point considered. Hence in the expression for $K_{\g}$ displayed above we can factor out $r_1 - L_1$ and the equation $K_{\g} = 0$ reads
\begin{equation} \label{eqnkg=0}
  0 = d_{1,\g} (r_1 + L_1) + d_{2,\g} (r_2 + L_2) + 2 \sqrt{(r_1 + L_1)(r_2 + L_2)} \cos \a.
\end{equation}
Next let us consider the case where $L_1 = -r_1$. Then $(M_1,J_1,L_1) = -(0,0,r_1)$ and (\ref{eqnkg=0}) reads $(\g + c_k^2) (r_2 + L_2) = 0$. Hence either $L_2 = -r_2$ or $\g = -c_k^2$. In the case $L_2 = -r_2$ it follows from $r_2 - L_2 = r_1 - L_1 = 2 r_1$ that $r_2 = r_1$ and $L_2 = L_1$. As a consequence $(M_2,J_2,L_2) = -(0,0,r_1)$. On the other hand, if $\g = -c_k^2$ and $r_1 < r_2$, then
\begin{displaymath}
-r_2 < r_2 - 2 r_1 = L_2 < r_2 \quad \textrm{and} \quad  M_2^2 + J_2^2 = r_2^2 - L_2^2 = 4 r_1 (r_2 - r_1).
\end{displaymath}
If $L_1 \neq -r_1$ (and hence $L_2 \neq -r_2$ as $r_1 \leq r_2$) we set for $-r_1 < L_1 < r_1$
\begin{equation} \label{qfndef}
  Q(L_1) := \left\{ \begin{array}{cc} \sqrt{\frac{r_1 + L_1}{2 r_2 + L_1 - r_1}} & \quad \textrm{if} \; r_1 < r_2 \\ 1 & \quad \textrm{if} \; r_1 = r_2 \end{array} \right.
\end{equation}
Then $0 < Q < \sqrt{r}$ and for $r<1$, $Q$ is monotonically increasing on $-r_1 < L_1 < r_1$. After division by $\frac{1}{s_{2k}} \sqrt{(r_1 + L_1)(r_2 + L_2)}$ the equation (\ref{eqnkg=0}) reads
\begin{equation} \label{qgeqn}
  (\g + s_k^2) Q + (\g + c_k^2) \frac{1}{Q} + 2 s_{2k} \cos \a = 0.
\end{equation}
To investigate the solutions of (\ref{qgeqn}) we distinguish between three cases: $r=1$, [$\g+c_k^2=0$ and $r<1$], and [$\g+c_k^2 \neq 0$ and $r<1$].

Let us first treat the case $r=1$. Then $Q \equiv 1$ and equation (\ref{qgeqn}) takes the form
\begin{equation} \label{qgeqnreq1}
  2 \g + 1 = -2 s_{2k} \cos \a,
\end{equation}
which is independent of $L_1$.

\begin{prop} \label{req1prop}
Let $\g \in \R$ be arbitrary and assume $1 \leq k < \fn4$ and $r=1$. Then the following statements hold:
\begin{itemize}
\item[(i)] If $|2\g + 1| > 2 s_{2k}$, then
  \begin{displaymath}
    \M_{\g}^{-1} \{ (0,0) \} = \{ \e (0,0,r_1,0,0,r_1) | \e = \pm \},
  \end{displaymath}
and $\pm (0,0,r_1,0,0,r_1)$ are both \emph{elliptic} fixed points of the vector field $X_{\g}$.
\item[(ii)] If $|2\g + 1| < 2 s_{2k}$, then $\M_{\g}^{-1} \{ (0,0) \} = \mathcal{N}_{\a} \cup \mathcal{N}_{-\a}$, where $\a$ is the unique angle satisfying $0 < \a < \pi$ and $2\g + 1 = -2 s_{2k} \cos \a$, and where for any $-\pi \leq \b \leq \pi$
  \begin{displaymath}
    \mathcal{N}_{\b} = \bigcup_{|L_1| \leq r_1 \atop L_2 = L_1} \mathcal{L}(L_1,L_2,\b)
  \end{displaymath}
with $\mathcal{L}(L_1,L_2,\b)$ given by (\ref{lcaldef}). Both points, $\pm (0,0,r_1,0,0,r_1)$, are \emph{hyperbolic} fixed points of $X_{\g}$, and their stable and unstable manifolds have each dimension two. The set $\mathcal{N}_{\a} \setminus \{ \pm (0,0,r_1,0,0,r_1) \}$ consists of heteroclinic $X_{\g}$-orbits from $(0,0,r_1,0,0,r_1)$ to $-(0,0,r_1,0,0,r_1)$, whereas $\mathcal{N}_{-\a} \setminus \{ \pm (0,0,r_1,0,0,r_1) \}$ consists of heteroclinic $X_{\g}$-orbits with opposite direction. Topologically, $\M_{\g}^{-1} \{ (0,0) \}$ is a $2$-dimensional torus, pinched at each of the two fixed points $\pm (0,0,r_1,0,0,r_1)$.
\item[(iii)] If $2\g + 1 = -2 s_{2k}$, then $\a=0$ and $\M_{\g}^{-1} \{ (0,0) \} = \mathcal{N}_{0}$, whereas if $2\g + 1 = 2 s_{2k}$, then $\a = \pi$ and $\M_{\g}^{-1} \{ (0,0) \} = \mathcal{N}_{\pi}$. In both cases, $\pm (0,0,r_1,0,0,r_1)$ are \emph{elliptic} fixed points of $X_{\g}$. On $\mathcal{N}_{0} \cup \mathcal{N}_{\pi}$, any $X_{\g}$-orbit is periodic and coincides with the corresponding $Y$-orbit at least up to orientation.
\end{itemize}
\end{prop}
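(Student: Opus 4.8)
The plan is to read off the fibre $\M_\g^{-1}\{(0,0)\}$ directly from the already–derived equation (\ref{qgeqnreq1}), and then to decide the type of the two fixed points $\pm(0,0,r_1,0,0,r_1)$ by linearizing $X_\g$ in charts centred at the poles. First I would record the shape of the fibre. Since $r=1$ forces $r_1=r_2$, the condition $G=0$ is equivalent to $L_1=L_2=:L$ with $|L|\le r_1$, and by (\ref{kgammaalpha}) the condition $K_\g=0$ becomes $(r_1^2-L^2)\bigl(\tfrac12(d_{1,\g}+d_{2,\g})+\cos\a\bigr)=0$. For $|L|<r_1$ this is exactly (\ref{qgeqnreq1}), i.e. $\cos\a=-(2\g+1)/(2s_{2k})$, an equation not involving $L$; for $L=\pm r_1$ the orbit $\mathcal{L}(\pm r_1,\pm r_1,\cdot)$ of (\ref{lcaldef}) degenerates to the single point $\pm(0,0,r_1,0,0,r_1)$, which therefore always lies in the fibre. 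Counting solutions $\a$ of $\cos\a=-(2\g+1)/(2s_{2k})$ yields the three regimes: no interior solution when $|2\g+1|>2s_{2k}$ (the fibre is just the two points, case (i)); a unique pair $\pm\a$ with $0<\a<\pi$ when $|2\g+1|<2s_{2k}$ (the fibre is $\mathcal{N}_\a\cup\mathcal{N}_{-\a}$, case (ii)); and a single value $\a=0$ or $\a=\pi$ on the boundary $|2\g+1|=2s_{2k}$ (case (iii)). In cases (ii) and (iii) each $\mathcal{N}_\b$ is a $2$-sphere, its $Y$-circles $\mathcal{L}(L,L,\b)$ of radius $\sqrt{r_1^2-L^2}$ shrinking to points as $L\to\pm r_1$, and the two spheres meet precisely at the two poles $\pm(0,0,r_1,0,0,r_1)$: this is the pinched torus.

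Next I would settle elliptic versus hyperbolic by linearizing $X_\g$. Near a pole I use $(M_i,J_i)$ as a chart on each factor, with $L_i=\e\sqrt{r_i^2-M_i^2-J_i^2}$, and substitute $L_i\approx\e r_1$ into (\ref{yxgammavf}). Writing $M=(M_1,M_2)$, $J=(J_1,J_2)$, the linearization takes the block form $\dot M=\e r_1 PJ$, $\dot J=\e r_1 QM$ with two explicit $2\times2$ matrices $P,Q$ built from $d_{1,\g},d_{2,\g}$ and $1$, so every eigenvalue $\l$ satisfies $\l^2=r_1^2\mu$ with $\mu$ an eigenvalue of $PQ$. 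A short computation gives $\mathrm{tr}(PQ)=2-d_{1,\g}^2-d_{2,\g}^2$ and $\det(PQ)=(d_{1,\g}d_{2,\g}-1)^2$, whence the discriminant factors as
\[
\mathrm{tr}^2-4\det=-(d_{1,\g}-d_{2,\g})^2\bigl(4-(d_{1,\g}+d_{2,\g})^2\bigr).
\]
Since $d_{1,\g}+d_{2,\g}=(2\g+1)/s_{2k}$ and $d_{1,\g}\ne d_{2,\g}$ (as $k<\fn4$), the bracket has the sign of $(2s_{2k})^2-(2\g+1)^2$. When $|2\g+1|<2s_{2k}$ the discriminant is negative, $\mu$ is genuinely complex, and the four eigenvalues form a complex quadruple with two on each side of the imaginary axis — a hyperbolic point with two-dimensional stable and unstable manifolds. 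When $|2\g+1|>2s_{2k}$ the discriminant is positive while $\det\ge0$ and $\mathrm{tr}(PQ)<0$ (because $(d_{1,\g}+d_{2,\g})^2>4$ forces $d_{1,\g}^2+d_{2,\g}^2>2$), so both $\mu$ are negative, the eigenvalues are purely imaginary, and the point is elliptic; on the boundary the discriminant vanishes and $\mu=d_{1,\g}d_{2,\g}-1<0$ is a double value, again elliptic.

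Finally I would describe the orbits inside the fibre. Evaluating $\dot L_1=-(M_1J_2+M_2J_1)$ from (\ref{yxgammavf}) along $\mathcal{L}(L,L,\a)$ gives the clean identity $\dot L_1=-(r_1^2-L^2)\sin\a$. For $0<\a<\pi$ this is strictly negative on $\mathcal{N}_\a$, so $L$ decreases monotonically from $r_1$ to $-r_1$ and each orbit runs from $(0,0,r_1,0,0,r_1)$ to $-(0,0,r_1,0,0,r_1)$ in infinite time; on $\mathcal{N}_{-\a}$ the sign reverses and the orbits are the same curves traversed in the opposite direction. Since $X_\g$ has no zero in the interior, these heteroclinic orbits foliate the two punctured spheres, giving case (ii). For $\a\in\{0,\pi\}$ one has $\sin\a=0$, hence $\dot L_1\equiv0$, so every $X_\g$-orbit stays on a fixed-$L$ circle, which is exactly a $Y$-orbit; thus $X_\g$ is a multiple of $Y$ there and all orbits are periodic, coinciding with the $Y$-orbits up to orientation, which is case (iii).

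The main obstacle is the linearization step: one must correctly assemble the reduced vector field $X_\g$ in the pole charts, recognize the block/antidiagonal structure so that stability is governed by the two eigenvalues of $PQ$, and then push the trace–determinant–discriminant bookkeeping through to the sharp dichotomy $|2\g+1|$ versus $2s_{2k}$. Particular care is needed to identify the genuinely hyperbolic complex quadruple (with two-dimensional stable and unstable manifolds) rather than a spurious partially hyperbolic case, and to handle the degeneracy at the poles, where the $Y$-orbits collapse and one must check that the conclusions about the fibre and the flow persist in the limit $L\to\pm r_1$.
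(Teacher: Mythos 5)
Your proof is correct and follows essentially the same route as the paper's: you read the fibre off from equation (\ref{qgeqnreq1}) (equivalently (\ref{kgammaalpha}) with $L_1=L_2$), you obtain the heteroclinic structure in case (ii) from the sign of $(X_{\g})_3 = -(r_1^2-L_1^2)\sin\a$ exactly as in (\ref{req1m1j2m2j1}), you handle case (iii) by showing $X_{\g}$ is proportional to $Y$, and your explicit block linearization $\dot M = \e r_1 P J$, $\dot J = \e r_1 Q M$ with the trace/determinant/discriminant bookkeeping is precisely the ``straightforward computation'' that the paper leaves to the reader, arriving at the same sharp dichotomy $|2\g+1|$ versus $2s_{2k}$. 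The only caveat is in case (i): your assertion that both eigenvalues $\mu$ of $PQ$ are negative fails at the two exceptional values of $\g$ where $d_{1,\g}d_{2,\g}=1$ (there $\det(PQ)=0$, so the linearization acquires a double zero eigenvalue and the fixed point is only degenerate elliptic), but this measure-zero degeneracy equally affects the paper's own unelaborated claim of ellipticity and does not alter the identification of $\M_{\g}^{-1}\{(0,0)\}$.
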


\begin{proof}
(i) By a straightforward computation one shows that under the given assumptions, both points $\pm (0,0,r_1,0,0,r_1)$ are elliptic fixed points of $X_{\g}$. In view of equation (\ref{qgeqnreq1}), item (i) then easily follows.

(ii) By the discussion preceding Proposition \ref{req1prop} it follows that the inverse image $\M_{\g}^{-1} \{ (0,0) \}$ is given as claimed. Again by a straightforward computation one shows that both fixed points $\pm (0,0,r_1,0,0,r_1)$ are hyperbolic. To see that $\mathcal{N}_{\a} \setminus \{ \pm (0,0,r_1,0,0,r_1) \}$ consists of heteroclinic orbits of the vector field $X_{\g}$, consider the third component $(X_{\g})_3$ of $X_{\g}$ (cf (\ref{yxgammavf})). Any element $(M_i,J_i,L_i)_{1 \leq i \leq 2}$ in $\mathcal{N}_{\a}$ is of the form
\begin{equation} \label{na0repr}
  (M_1,J_1) \!=\! \sqrt{r_1^2 \!-\! L_1^2} (\cos \phi, -\sin \phi), \; (M_2,J_2) \!=\! \sqrt{r_1^2 \!-\! L_1^2} (\cos (\a \!+\! \phi), \sin (\a \!+\! \phi)).
\end{equation}
Thus
\begin{eqnarray}
  (X_{\g})_3 & = & -(M_1 J_2 + M_2 J_1) \nonumber\\
& = & -(r_1^2 - L_1^2) \, (\cos \phi \, \sin (\a + \phi) - \cos (\a + \phi) \sin \phi). \label{req1m1j2m2j1}
\end{eqnarray}
Hence $(X_{\g})_3 = -(r_1^2 - L_1^2) \sin \a < 0$ for any point in $\mathcal{N}_{\a} \setminus \{ \pm (0,0,r_1,0,0,r_1) \}$. As the last component of $X_{\g}$ coincides with the third one, it follows that any $X_{\g}$-orbit on $\mathcal{N}_{\a} \setminus \{ \pm (0,0,r_1,0,0,r_1) \}$ originates from $(0,0,r_1,0,0,r_1)$ and ends in $-(0,0,r_1,0,0,r_1)$. The orbits on $\mathcal{N}_{-\a}$ are analyzed in a similar way.

(iii) Clearly, if $2\g + 1 = -2 s_{2k}$, one has $\M_{\g}^{-1} \{ (0,0) \} = \mathcal{N}_{0}$ and one verifies in a straightforward way that $\pm (0,0,r_1,0,0,r_1)$ are elliptic fixed points of $X_{\g}$. According to (\ref{req1m1j2m2j1}), the third component $(X_{\g})_3$ of $X_{\g}$ vanishes identically on $\mathcal{N}_{0}$. Further, $2\g+ 1 = -2 s_{2k}$ implies that $1 + d_{2,\g} = -1-d_{1,\g}$. In view of (\ref{na0repr}) it then follows that
\begin{displaymath}
  X_{\g} = (1 + d_{2,\g}) L_2 \cdot Y.
\end{displaymath}
The claimed statements for the case $2\g+1 = 2 s_{2k}$ are proved in a similar fashion.
\end{proof}

Next we consider the case where $r<1$ and $\g + c_k^2 = 0$. Then $\g + s_k^2 = -c_{2k}$ and hence
\begin{displaymath}
  d_{1,\g} = -\frac{c_{2k}}{s_{2k}} \quad \textrm{and} \quad d_{2,\g} = 0.
\end{displaymath}
Thus equation (\ref{qgeqn}) takes the form
\begin{equation} \label{qgeqngck2}
  c_{2k} Q(L_1) = 2 s_{2k} \cos \a.
\end{equation}
Note that $0 < c_{2k} < 1$ as $1 \leq k < \fn4$.

\begin{prop} \label{propgck2eq0}
Assume that $1 \leq k < \fn4$, $0<r<1$, and $\g + c_k^2 = 0$. Then the following statements hold:
\begin{itemize}
\item[(i)] If $\sqrt{r} > 2 s_{2k} / c_{2k}$, then the connected component of $\M_{\g}^{-1} \{ (r_1 - r_2,0) \}$ containing the critical point $(0,0,r_1,0,0,r_2)$ consists of this point alone. It is an \emph{elliptic} fixed point of $X_{\g}$.
\item[(ii)] If $\sqrt{r} \leq 2 s_{2k} / c_{2k}$, then
  \begin{displaymath}
    \M_{\g}^{-1} \{ (r_1 - r_2,0) \} = \bigcup_{|L_1| \leq r_1 \atop L_2 = L_1 + r_2 - r_1} \mathcal{L}(L_1,L_2,\a_{L_1}) \cup \mathcal{L}(L_1,L_2,-\a_{L_1})
  \end{displaymath}
where for any $|L_1| \leq r_1$, $\a_{L_1}$ is the unique angle satisfying
\begin{displaymath}
  c_{2k} Q(L_1) = 2 s_{2k} \cos \a_{L_1} \quad \textrm{and} \quad 0 \leq \a_{L_1} \leq \frac{\pi}{2}
\end{displaymath}
and $Q(L_1)$ denotes the function defined by (\ref{qfndef}), continuously extended to the closed interval $[-L_1, L_1]$. Furthermore, the connected component of $\M_{\g}^{-1} \{ (r_1 - r_2,0) \}$ containing $(0,0,r_1,0,0,r_2)$ consists of homoclinic $X_{\g}$-orbits which originate and end in $(0,0,r_1,0,0,r_2)$. Topologically, it is a $2$-dimensional torus, pinched at $(0,0,r_1,0,0,r_2)$.

If $\sqrt{r} < 2 s_{2k} / c_{2k}$, then $(0,0,r_1,0,0,r_2)$ is a \emph{hyperbolic} fixed point of $X_{\g}$ and its stable and unstable manifold have each dimension two. If $\sqrt{r} = 2 s_{2k} / c_{2k}$, $(0,0,r_1,0,0,r_2)$ is an \emph{elliptic} fixed point of $X_{\g}$.
\end{itemize}
\end{prop}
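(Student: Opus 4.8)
My plan is to reduce the whole statement to the single scalar equation \ref{qgeqngck2}, $c_{2k} Q(L_1) = 2 s_{2k} \cos \a$, which already encodes $K_\g = 0$ on $\M_\g^{-1}\{(r_1 - r_2, 0)\}$ in the present case $\g + c_k^2 = 0$, $r < 1$. The inputs are that $Q(L_1)$ (given by \ref{qfndef}) is continuous and strictly increasing on $(-r_1, r_1)$ with range $(0, \sqrt{r})$, and that $0 < c_{2k} < 1$ for $1 \leq k < \fn4$. Since $\cos\a \in [-1,1]$, the equation is solvable for a given $L_1$ precisely when $Q(L_1) \leq 2 s_{2k}/c_{2k}$, and since $\cos\a_{L_1} = c_{2k}Q(L_1)/(2 s_{2k}) \geq 0$ the solution angle lies in $[0,\pi/2]$, with $\pm\a_{L_1}$ the two solutions in $(-\pi,\pi]$. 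Comparing the supremum $\sqrt r$ of $Q$ with the threshold $2 s_{2k}/c_{2k}$ is exactly what separates the two cases.

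For part (i), $\sqrt r > 2 s_{2k}/c_{2k}$: by monotonicity there is $L_1^\ast \in (-r_1, r_1)$ with $Q(L_1^\ast) = 2 s_{2k}/c_{2k}$, and for $L_1 \in (L_1^\ast, r_1)$ no angle solves the equation. Hence in a neighbourhood of the critical point $(0,0,r_1,0,0,r_2)$ (which is the limit $L_1 \to r_1$) the only point of the preimage is the critical point itself, so its connected component is a single point. To see it is elliptic I would linearize $X_\g$ there: writing $L_i = \sqrt{r_i^2 - M_i^2 - J_i^2}$ near the north poles and using $(M_1,J_1,M_2,J_2)$ as coordinates, the field \ref{yxgammavf} linearizes (with $d_{2,\g}=0$) to a constant-coefficient system whose eigenvalues, after passing to $z_j = M_j + iJ_j$, satisfy $\l^2 - i r_1 d_{1,\g}\l - r_1 r_2 = 0$; these have nonzero real part iff $4 r_1 r_2 > r_1^2 d_{1,\g}^2$, i.e. iff $c_{2k}\sqrt r < 2 s_{2k}$. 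Thus $\sqrt r > 2 s_{2k}/c_{2k}$ forces purely imaginary eigenvalues and the point is elliptic.

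For part (ii), $\sqrt r \leq 2 s_{2k}/c_{2k}$: now $Q(L_1) < \sqrt r \leq 2 s_{2k}/c_{2k}$ for every $L_1 \in (-r_1, r_1)$, so $\a_{L_1}$ exists throughout and the preimage is exactly the claimed union of the orbits $\mathcal{L}(L_1, L_1 + r_2 - r_1, \pm\a_{L_1})$ (the value $L_1 = -r_1$, giving $\a = \pi/2$ where the two branches coincide, being the limiting fibre, consistent with the general discussion since $\g + c_k^2 = 0$ makes \ref{eqnkg=0} hold automatically there). To obtain the topology I would glue these data: for a fixed branch sign the set $\{L_1 \in (-r_1, r_1)\} \times \{\phi \in S^1\}$ is a cylinder; the two cylinders merge along $L_1 = -r_1$ (where $(M_1,J_1)=0$ collapses the $\phi$-dependence of the first factor) and both shrink to the single critical point as $L_1 \to r_1$, so the union of the two branches with the critical point is a $2$-torus pinched at $(0,0,r_1,0,0,r_2)$. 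The homoclinic character follows from the third component $(X_\g)_3 = -(M_1 J_2 + M_2 J_1)$: along $\mathcal{L}(\cdot,\cdot,\a)$ one finds, exactly as in the computation giving \ref{req1m1j2m2j1}, that $(X_\g)_3 = -\sqrt{(r_1^2 - L_1^2)(r_2^2 - L_2^2)}\,\sin\a$, which equals $(X_\g)_6 = \dot L_2$; hence on the branch $+\a_{L_1}$ one has $\dot L_1 = \dot L_2 < 0$ and on $-\a_{L_1}$ one has $\dot L_1 > 0$. A trajectory therefore leaves the critical point along the $+$ branch with $L_1$ decreasing to the turning point $L_1 = -r_1$, crosses smoothly to the $-$ branch (the field being nonzero there because $M_1$ still moves), and returns to the critical point with $L_1$ increasing, an orbit homoclinic to $(0,0,r_1,0,0,r_2)$. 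The hyperbolic/elliptic dichotomy again comes from the linearization above: $\sqrt r < 2 s_{2k}/c_{2k}$ yields two eigenvalues of positive and two of negative real part, giving a two-dimensional stable and a two-dimensional unstable manifold, whereas the boundary $\sqrt r = 2 s_{2k}/c_{2k}$ produces a repeated purely imaginary eigenvalue, i.e. a (degenerate) elliptic point.

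The main obstacle I anticipate is the topological assembly in part (ii): one must check carefully that the two branches glue to a smooth pinched torus and, in particular, that the flow genuinely passes from one branch to the other at $L_1 = -r_1$ rather than stalling, even though $(X_\g)_3$ vanishes there. The degenerate boundary case $\sqrt r = 2 s_{2k}/c_{2k}$, where the linearization is elliptic yet the level set is still claimed to be a pinched torus of orbits asymptotic to the fixed point, also requires a separate nonlinear argument, since the approach to the fixed point is then only polynomial rather than exponential.
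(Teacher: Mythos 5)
Your proposal is correct and follows essentially the same route as the paper's proof: the same reduction to the scalar equation $c_{2k}Q(L_1)=2s_{2k}\cos\a$ with the threshold comparison of $\sqrt{r}$ against $2s_{2k}/c_{2k}$, the same identification of the preimage as the union of the orbits $\mathcal{L}(L_1,L_2,\pm\a_{L_1})$ merging along the circle at $L_1=-r_1$, and the same use of the sign of $(X_{\g})_3$ on the two branches (together with the nonvanishing of the field on the bottom circle) to obtain the homoclinic pinched-torus structure. The only difference is that you make explicit what the paper calls ``a straightforward computation'': your linearization at $(0,0,r_1,0,0,r_2)$, yielding the eigenvalue equation $\l^2 - i r_1 d_{1,\g}\l - r_1 r_2 = 0$ and hence hyperbolicity exactly when $c_{2k}\sqrt{r} < 2s_{2k}$, correctly reproduces the claimed elliptic/hyperbolic dichotomy, and your closing caveats (smooth passage between branches at $L_1=-r_1$, the degenerate case $\sqrt{r}=2s_{2k}/c_{2k}$) concern points the paper glosses over as well.
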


\begin{proof}
(i) By a straightforward computation one shows that under the given assumptions, $(0,0,r_1,0,0,r_2)$ is an elliptic fixed point of $X_{\g}$. In view of equation (\ref{qgeqngck2}) and the discussion of the case $L_1 = \pm r_1$ item (i) then follows easily.

(ii) Again by the discussion preceding Proposition \ref{req1prop} it follows that the inverse image $\M_{\g}^{-1} \{ (r_1 - r_2,0) \}$ is given as claimed. Again by a straightforward computation one sees that $(0,0,r_1,0,0,r_2)$ is a hyperbolic fixed point of $X_{\g}$ if $\sqrt{r} < 2 s_{2k} / c_{2k}$, and an elliptic one if $\sqrt{r} = 2 s_{2k} / c_{2k}$. Next consider a point $(M_i,J_i,L_i)_{1 \leq i \leq 2}$ in $\M_{\g}^{-1} \{ (r_1 - r_2,0) \}$ with
\begin{eqnarray*}
  (M_1,J_1) & = & \sqrt{r_1^2 - L_1^2} \; (\cos \phi, -\sin \phi), \\
  (M_2,J_2) & = & \sqrt{r_2^2 - L_2^2} \; (\cos (\a_{L_1}+\phi), \sin (\a_{L_1}+\phi))
\end{eqnarray*}
where $|L_1| < r_1$. Then the third component of $X_{\g}$ is given by (cf (\ref{req1m1j2m2j1}))
\begin{displaymath}
  (X_{\g})_3 = -\sqrt{r_1^2 - L_1^2} \, \sqrt{r_2^2 - L_2^2} \, \sin \a_{L_1}.
\end{displaymath}
Hence $(X_{\g})_3 = (X_{\g})_6 < 0$. It follows that the $X_{\g}$-orbit passing through such a point originates at $(0,0,r_1,0,0,r_2)$ and then reaches a point of the form $(0,0,-r_1,M_2,J_2,L_2)$ with 
\begin{equation} \label{l2m2j2bottom}
  L_2 = r_2 - 2 r_1 > -r_2 \quad \textrm{and} \quad (M_2,J_2) = \sqrt{r_2^2 - L_2^2} (\cos (\pi + \tilde{\phi}), \sin (\pi + \tilde{\phi})).
\end{equation}
At this point the vector field $X_{\g}$ is given by
\begin{displaymath}
  (-r_1 J_2, -r_1 M_2,0,0,0,0).
\end{displaymath}
Note that this vector does not vanish as $M_2^2 + J_2^2 = r_2^2 - L_2^2 > 0$. Similarly, at a point $(M_i,J_i,L_i)_{1 \leq i \leq 2}$ in $\M_{\g}^{-1} \{ (r_1 - r_2,0) \}$ satisfying
\begin{eqnarray*}
  (M_1,J_1) & = & \sqrt{r_1^2 - L_1^2} \; (\cos \phi, -\sin \phi), \\
(M_2,J_2) & = & \sqrt{r_2^2 - L_2^2} \; (\cos (-\a_{L_1}+\phi), \sin (-\a_{L_1}+\phi))
\end{eqnarray*}
and $|L_1| < r_1$ one has
\begin{displaymath}
  (X_{\g})_3 = \sqrt{r_1^2 - L_1^2} \, \sqrt{r_2^2 - L_2^2} \, \sin \a_{L_1}.
\end{displaymath}
Hence $(X_{\g})_3 = (X_{\g})_6 > 0$. It follows that the $X_{\g}$-orbit passing through such a point ends up at $(0,0,r_1,0,0,r_2)$ and passes through a point of the form $(0,0,-r_1,M_2,J_2,L_2)$ with $(M_2,J_2,L_2)$ as in (\ref{l2m2j2bottom}). We then conclude that the connected component of $\M_{\g}^{-1} \{ (r_1 - r_2,0) \}$ containing $(0,0,r_1,0,0,r_2)$ consists of homoclinic $X_{\g}$-orbits originating and ending at $(0,0,r_1,0,0,r_2)$.
\end{proof}

Finally let us treat the case $r<1$ and $\g + c_k^2 \neq 0$. Denote by $Q(L_1)$ the function defined by (\ref{qfndef}), extended continuously to the closed interval $[-r_1,r_1]$. Further introduce the function
\begin{equation} \label{ffndef}
  f: (0,\sqrt{r}) \to \R, \, q \mapsto (\g + s_k^2) q + (\g + c_k^2) \frac{1}{q}.
\end{equation}
Note that $\lim_{q \searrow 0} |f(q)| = \infty$.

\begin{prop} \label{propgck2neq0}
Assume that $1 \leq k < \fn4$, $0 < r < 1$, and $\g + c_k^2 \neq 0$. Then the following statements hold:
\begin{itemize}
\item[(i)] If $|f(\sqrt{r})| \geq 2 s_{2k}$, %or [$|f(\sqrt{r}) = 2 s_{2k}|$ \emph{and} $|f(q)| \geq 2 s_{2k}$ for $q < \sqrt{r}$ near $\sqrt{r}$], 
then the connected component of $\M_{\g}^{-1} \{ (r_1 - r_2,0) \}$ containing the critical point $(0,0,r_1,0,0,r_2)$ consists of this point alone. It is an \emph{elliptic} fixed point of $X_{\g}$.
\item[(ii)] If $|f(\sqrt{r})| < 2 s_{2k}$, %or [$|f(\sqrt{r}) = 2 s_{2k}|$ \emph{and} $|f(q)| < 2 s_{2k}$ for $q < \sqrt{r}$ near $\sqrt{r}$],
then there exists $-r_1 < l_{\g,r} < r_1$ so that the connected component of $\M_{\g}^{-1} \{ (r_1 - r_2,0) \}$ containing $(0,0,r_1,0,0,r_2)$ is given by
  \begin{displaymath}
    \bigcup_{l_{\g,r} \leq L_1 \leq r_1 \atop L_2 = L_1 + r_2 - r_1} \mathcal{L}(L_1,L_2,\a_{L_1}) \cup \mathcal{L}(L_1,L_2,-\a_{L_1})
  \end{displaymath}
where for any $l_{\g,r} \leq L_1 \leq r_1$, $\a_{L_1}$ is the unique angle satisfying $0 \leq \a_{L_1} \leq \pi$ and
\begin{displaymath}
  f(Q(L_1)) = -2 s_{2k} \cos \, (\a_{L_1}).
\end{displaymath}
The point $(0,0,r_1,0,0,r_2)$ is a \emph{hyperbolic} fixed point of $X_{\g}$ and its stable and unstable manifold each have dimension two. The connected component of $\M_{\g}^{-1} \{ (r_1 - r_2,0) \}$ containing $(0,0,r_1,0,0,r_2)$ consists of homoclinic $X_{\g}$-orbits which originate and end in $(0,0,r_1,0,0,r_2)$. Topologically, it is a $2$-dimensional torus, pinched at $(0,0,r_1,0,0,r_2)$.
\end{itemize}
\end{prop}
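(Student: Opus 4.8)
The plan is to reduce the whole statement to a one–variable monotonicity analysis. By the discussion preceding Proposition \ref{req1prop}, a point of $\M_\g^{-1}\{(r_1-r_2,0)\}$ lying over a value $L_1\in(-r_1,r_1)$ exists iff equation (\ref{qgeqn}), i.e. $f(Q(L_1))=-2s_{2k}\cos\a$, is solvable in $\a$, which happens exactly when $|g(L_1)|\le 2s_{2k}$, where $g:=f\circ Q$ and $f$ is the function (\ref{ffndef}). So I first study $g$ on $(-r_1,r_1)$. The key fact is that, for $1\le k<\fn4$ (so $s_k^2<c_k^2$) and $0<r<1$, the function $f$ is strictly monotone on $(0,\sqrt r)$: inspecting $f'(q)=(\g+s_k^2)-(\g+c_k^2)/q^2$, when $\g+s_k^2$ and $\g+c_k^2$ have the same sign the only zero $q_\ast=\sqrt{(\g+c_k^2)/(\g+s_k^2)}$ obeys $q_\ast^2>1>r$ (because $c_k^2>s_k^2$), while if they have opposite signs $f'$ has constant sign outright. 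Thus $f$ is decreasing on $(0,\sqrt r)$ for $\g>-c_k^2$ and increasing for $\g<-c_k^2$. Since $Q$ is increasing with $Q\to 0$ as $L_1\to -r_1$ and $Q\to\sqrt r$ as $L_1\to r_1$, the composite $g$ is strictly monotone, running from $\pm\infty$ (here I use $\g+c_k^2\ne 0$, so that $\lim_{q\searrow 0}|f(q)|=\infty$) at $L_1=-r_1$ to $f(\sqrt r)$ at $L_1=r_1$.

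Next I fix the elliptic/hyperbolic alternative. A direct linearization of $X_\g$ at $(0,0,r_1,0,0,r_2)$ in the coordinates $(M_1,J_1,M_2,J_2)$ on $\S^2_{r_1}\times\S^2_{r_2}$—the same routine computation used in Propositions \ref{req1prop} and \ref{propgck2eq0}—yields a $4\times4$ matrix whose eigenvalues have nonzero real part, splitting $2+2$, precisely when $|f(\sqrt r)|<2s_{2k}$, and are purely imaginary otherwise; using $f(\sqrt r)=s_{2k}(d_{1,\g}r_1+d_{2,\g}r_2)/\sqrt{r_1r_2}$ one checks this is exactly the hyperbolicity criterion (\ref{hyperbolcondsingle}) of Theorem \ref{kgenfolthm}, and in the hyperbolic case the two-dimensional stable and unstable manifolds follow. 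For part (i), when $|f(\sqrt r)|>2s_{2k}$ the monotonicity of $g$ forces $|g|>2s_{2k}$ on a punctured left neighbourhood of $r_1$, so no fibre point accumulates at the critical point and its connected component is the isolated (elliptic) fixed point.

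For part (ii), $|f(\sqrt r)|<2s_{2k}$, monotonicity yields a \emph{unique} $l_{\g,r}\in(-r_1,r_1)$ at which $g$ first meets the band boundary ($g(l_{\g,r})=2s_{2k}$ if $\g>-c_k^2$, resp. $g(l_{\g,r})=-2s_{2k}$ if $\g<-c_k^2$), with $|g|<2s_{2k}$ on $(l_{\g,r},r_1)$. Over each such $L_1$ equation (\ref{qgeqn}) has the two solutions $\pm\a_{L_1}$ with $\a_{L_1}\in(0,\pi)$, producing the two families $\mathcal{L}(L_1,L_2,\pm\a_{L_1})$ displayed in the statement. I would then read off the topology: at $L_1=r_1$ both families shrink (radius $\sqrt{r_i^2-L_i^2}\to 0$) to the single critical point, whereas at $L_1=l_{\g,r}$ one has $\a_{L_1}\in\{0,\pi\}$, so $\mathcal{L}(\cdot,\cdot,\a_{L_1})$ and $\mathcal{L}(\cdot,\cdot,-\a_{L_1})$ coincide. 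Gluing the two cylinders $[l_{\g,r},r_1]\times S^1$ along this common seam and collapsing their two $L_1=r_1$ ends to the one point $(0,0,r_1,0,0,r_2)$ gives a $2$–torus pinched at that point.

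Finally, the homoclinic claim follows by the sign argument of Proposition \ref{propgck2eq0}: parametrising a point of $\mathcal{L}(L_1,L_2,\a_{L_1})$ as in (\ref{na0repr}) and reading off (\ref{yxgammavf}), the third (and equal sixth) component of $X_\g$ equals $-\sqrt{r_1^2-L_1^2}\sqrt{r_2^2-L_2^2}\sin\a_{L_1}$, of constant negative sign on the open $+\a_{L_1}$ sheet and positive on the $-\a_{L_1}$ sheet; hence $L_1$ decreases from $r_1$ to $l_{\g,r}$, turns at the seam (where this component vanishes but the full field does not), and returns to $r_1$, so every orbit of the component is homoclinic to $(0,0,r_1,0,0,r_2)$. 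I expect the main obstacle to be the first paragraph: establishing the monotonicity of the genuinely nonlinear $f$ uniformly across the sign regimes of $\g+s_k^2$ and $\g+c_k^2$, since it is precisely this monotonicity—absent in the linear situation of Proposition \ref{propgck2eq0}—that guarantees the fibre over $(r_1-r_2,0)$ meets the critical point in a single interval $[l_{\g,r},r_1]$ and therefore assembles into a torus pinched at one point; the threshold $|f(\sqrt r)|=2s_{2k}$ is the delicate non-hyperbolic boundary to be checked separately.
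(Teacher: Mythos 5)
Your overall strategy is the same as the paper's: describe the fibre over each $L_1$ via solvability of (\ref{qgeqn}) in $\a$, linearize $X_{\g}$ at the critical point for the elliptic/hyperbolic dichotomy and match it with (\ref{hyperbolcondsingle}), glue the two sheets $\mathcal{L}(L_1,L_2,\pm\a_{L_1})$ into a pinched torus, and transplant the sign argument of Proposition \ref{propgck2eq0}(ii) for homoclinicity. However, your ``key fact'' --- strict monotonicity of $f$ on $(0,\sqrt{r})$ --- is false in one of the sign regimes, and it is precisely the regime your justification waves through. When $\g+s_k^2$ and $\g+c_k^2$ are both \emph{negative} (i.e.\ $\g<-c_k^2$), the critical point $q_*=\sqrt{(\g+c_k^2)/(\g+s_k^2)}$ of $f$ satisfies $q_*^2<1$, not $q_*^2>1$: dividing the inequality $\g+c_k^2>\g+s_k^2$ (equivalent to $c_k^2>s_k^2$) by the negative number $\g+s_k^2$ \emph{reverses} it. So $f$ (strictly concave there, since $f''(q)=2(\g+c_k^2)/q^3<0$) increases up to $q_*$ and then decreases, and $q_*$ can lie inside $(0,\sqrt{r})$. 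This regime is not vacuous for part (ii): for instance with $N=100$, $k=10$, $\g=-0.95$, $r=0.25$ one finds $q_*\approx 0.23<\sqrt{r}=0.5$ while $|f(\sqrt{r})|\approx 0.52<2s_{2k}\approx 1.18$, so the fixed point is hyperbolic and yet $g=f\circ Q$ is genuinely non-monotone (it rises above $f(\sqrt{r})$ before descending to it). Consequently your derivation of the unique $l_{\g,r}$ with $|g|<2s_{2k}$ on $(l_{\g,r},r_1)$ --- the fact on which both the pinched-torus topology and the constant-sign homoclinic argument rest --- has no proof in this case.

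The gap is repairable in two ways. Either replace monotonicity by convexity/concavity: $f''$ has the constant sign of $\g+c_k^2$, and in the concave case $f<0$, so $|f|\le 2s_{2k}$ is equivalent to $f\ge -2s_{2k}$, whose solution set is an interval with strict inequality in its interior by strict concavity; combined with $Q$ increasing and $\lim_{q\searrow 0}|f(q)|=\infty$ this recovers exactly the interval $[l_{\g,r},r_1]$ with $|g|<2s_{2k}$ on its interior. Or follow the paper, which never invokes monotonicity: it defines $[l_{\g,r},r_1]$ as the connected component of $(f\circ Q)^{-1}([-2s_{2k},2s_{2k}])$ containing $r_1$, using only the blow-up of $f$ at $q=0$ (the paper, too, leaves the interior strict inequality implicit; the convexity/concavity observation is what actually justifies it). A smaller point: your part (i) only treats the strict inequality $|f(\sqrt{r})|>2s_{2k}$, whereas the statement includes the threshold $|f(\sqrt{r})|=2s_{2k}$; the paper disposes of that case by an analysis of the graph of $f$ near $(\sqrt{r},f(\sqrt{r}))$, and you correctly flag it as unfinished rather than claiming it.
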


\begin{proof}
(i) By a straightforward computation one shows that under the given assumptions, $(0,0,r_1,0,0,r_2)$ is a (possibly degenerate) elliptic fixed point of $X_{\g}$. We have already seen that under the given assumption $\M_{\g}^{-1} \{ (r_1 - r_2,0) \} \setminus \{ (0,0,r_1,0,0,r_2) \}$ consists of the set of points $(M_i,J_i,L_i)_{1 \leq i \leq 2}$ satisfying $L_2 = L_1 + r_2 - r_1$ and (\ref{qgeqn}). Note that equation (\ref{qgeqn}) admits a solution $\a$ for $Q = \sqrt{r}$ iff $|f(\sqrt{r})| \leq s_{2k}$. In the case $|f(\sqrt{r})| > 2 s_{2k}$ it follows immediately that $\M_{\g}^{-1} \{ (r_1 - r_2,0) \} = \{ (0,0,r_1,0,0,r_2) \}$. If $|f(\sqrt{r})| = 2 s_{2k}$, then an analysis of the graph of $f$ near $(\sqrt{r}, f(\sqrt{r}))$ leads to the claimed result.

(ii) As $\lim_{q \searrow 0} |f(q)| = \infty$ it follows that there exists $-r_1 < l_{\g,r} < r_1$ so that the interval $[l_{\g,r}, r_1]$ is a connected component of $(f \circ Q)^{-1} ([-2 s_{2k}, 2 s_{2k}])$. It follows that for any $l_{\g,r} \leq L_1 \leq r_1$ there exists a unique angle $0 \leq \a_{L_1} \leq \pi$ so that
\begin{displaymath}
  f(Q(L_1)) = -2 s_{2k} \cos \, (\a_{L_1}).
\end{displaymath}
The connected component of the preimage $\M_{\g}^{-1} \{ (r_1 - r_2,0) \}$ containing the point $(0,0,r_1,0,0,r_2)$ is then given as claimed. Again by a straightforward computation one sees that $(0,0,r_1,0,0,r_2)$ is a hyperbolic fixed point of $X_{\g}$. One then can argue as in the proof of item (ii) of Proposition \ref{propgck2eq0} to show the remaining claims.
\end{proof}

\begin{proof}[Proof of Theorem \ref{kgenfolthm}]
Theorem \ref{kgenfolthm} follows from Propositions \ref{req1prop}\,-\,\ref{propgck2neq0}.%, \ref{propgck2eq0}, and \ref{propgck2neq0}.
\end{proof}

It remains to study the critical points of $\M_{\g}$ with rank $d\M_{\g} = 1$, i.e. points of $(\S_{r_1}^2 \times \S_{r_2}^2) \setminus \{ \pm (0,0,r_1,0,0, \pm r_2) \}$ where the vector fields $Y$ and $X_{\g}$ are collinear. In view of the formulas (\ref{yxgammavf}) for $Y$ and $X_{\g}$, points $(M_i,J_i,L_i) \in S^2_{r_i}$, $i=1,2$, of this type have the property that the determinant of any $2 \times 2$-submatrix of the $2 \times 4$-matrix formed by $Y$ and $X_{\g}$ vanishes. It leads to the following system of equations:
\begin{equation} \label{mjeqn}
  M_1 J_2 + M_2 J_1 = 0, 
\end{equation}
\begin{equation} \label{jleqn}
  J_1^2 L_2 + L_1 J_2^2 - J_1 J_2 (d_{1,\g} L_1 + d_{2,\g} L_2) = 0,
\end{equation}
\begin{equation} \label{mleqn}
  M_1^2 L_2 + L_1 M_2^2 + M_1 M_2 (d_{1,\g} L_1 + d_{2,\g} L_2) = 0,
\end{equation}
\begin{equation} \label{jmleqn}
  M_1 J_1 L_2 - L_1 M_2 J_2 + J_1 M_2 (d_{1,\g} L_1 + d_{2,\g} L_2) = 0.
\end{equation}

\begin{theorem}
Assume that $1 \leq k < \fn4$, $0 < r \leq 1$, and $\g \in \R$. If a point $(M_i,J_i,L_i)_{1 \leq i \leq 2} \in \S_{r_1}^2 \times \S_{r_2}^2 \setminus \{ \pm (0,0,r_1,0,0, \pm r_2) \}$ is a critical point of $\M_{\g}$ with rank $d\M_{\g} = 1$ then $(M_2,L_2) = \l (M_1,-J_1)$ for some $\l \in \R$, and
\begin{displaymath}
  (r_1^2 - L_1^2)^2 L_2^2 + (r_2^2 - L_2^2)^2 L_1^2 + 2 (r_1^2 - L_1^2) (r_2^2 - L_2^2) (2 L_1 L_2 - (d_{1,\g} L_1 + d_{2,\g} L_2)^2) = 0.
\end{displaymath}
Given any point $(M_1,J_1,L_1) \in \S_{r_1}^2 \setminus \{ \pm (0,0,r_1) \}$ there exist at most eight points $(M_2,J_2,L_2) \in \S_{r_2}^2 \setminus \{ \pm (0,0,r_2) \}$ such that $(M_i,J_i,L_i)_{1 \leq i \leq 2}$ is a critical point of $\M_{\g}$ with rank $d\M_{\g} = 1$.
\end{theorem}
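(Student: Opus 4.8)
The plan is to use that $\S_{r_1}^2 \times \S_{r_2}^2$ is symplectic (each factor being symplectic), so that $\mathrm{rank}(d\M_{\g})$ equals the dimension of the span of the two reduced Hamiltonian vector fields $Y$ and $X_{\g}$ of (\ref{yxgammavf}). On the set in question the four pole configurations are removed, hence $Y \neq 0$ there, and $\mathrm{rank}(d\M_{\g}) = 1$ is equivalent to $X_{\g}$ being collinear with $Y$, i.e.\ to the system (\ref{mjeqn})--(\ref{jmleqn}). I would take this system as the point of departure.

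First I would read off the proportionality from (\ref{mjeqn}). Writing the planar vectors $(M_i,J_i)$ in polar form on the circles $M_i^2 + J_i^2 = r_i^2 - L_i^2$, equation (\ref{mjeqn}) becomes $\sqrt{(r_1^2-L_1^2)(r_2^2-L_2^2)}\,\sin(\theta_1+\theta_2) = 0$; as soon as both radii are positive this forces $(M_2,J_2) = \l\,(M_1,-J_1)$ with $\l \in \R$ and $\l^2 = (r_2^2-L_2^2)/(r_1^2-L_1^2)$, the proportionality asserted in the theorem. The degenerate possibility that one radius vanishes has to be excluded: if $(M_1,J_1) = (0,0)$ then $L_1 = \pm r_1$, and the first two components of $X_{\g}$ in (\ref{yxgammavf}), which must vanish since those of $Y$ do, force $M_2 = J_2 = 0$ as well, landing on one of the excluded poles $\pm(0,0,r_1,0,0,\pm r_2)$; the case $(M_2,J_2) = (0,0)$ is symmetric. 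Hence on the admissible set both radii are positive and the proportionality holds.

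Next I would substitute $(M_2,J_2) = \l(M_1,-J_1)$ into the remaining equations. A short computation shows that (\ref{jleqn}), (\ref{mleqn}) and (\ref{jmleqn}) then become $J_1^2$, $M_1^2$ and $M_1 J_1$, respectively, times the single scalar relation
\[
  L_1\,\l^2 + (d_{1,\g}L_1 + d_{2,\g}L_2)\,\l + L_2 = 0 .
\]
Since $(M_1,J_1) \neq (0,0)$, at least one prefactor is nonzero, so all three equations collapse to this relation. Together with the sphere identity $\l^2(r_1^2-L_1^2) = r_2^2-L_2^2$ this is a pair of polynomial equations in the two unknowns $\l$ and $L_2$. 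Eliminating $\l$ --- solving the sphere identity for $\l^2$, using the scalar relation to express $\l$ (away from the loci $d_{1,\g}L_1 + d_{2,\g}L_2 = 0$ and $1 + d_{2,\g}\l = 0$, which I would treat directly), and squaring --- yields the displayed polynomial identity in $(L_1,L_2)$.

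For the counting I would fix $(M_1,J_1,L_1) \in \S_{r_1}^2 \setminus \{ \pm(0,0,r_1) \}$, so that $r_1^2 - L_1^2 > 0$, and regard the displayed identity as a polynomial in $L_2$. Its coefficient of $L_2^4$ is $L_1^2 + 2(r_1^2-L_1^2)\,d_{2,\g}^2$, which is nonnegative and vanishes only for $L_1 = 0$ and $d_{2,\g} = 0$; in that single exceptional case the identity reduces to $r_1^4 L_2^2 = 0$ and remains non-trivial. Thus the polynomial is never identically zero and admits at most four real roots $L_2$. For each such $L_2$ the second sphere confines $(M_2,J_2)$ to a circle of radius $\sqrt{r_2^2-L_2^2}$, on which (\ref{mjeqn}) --- equivalently the two sign choices $\l = \pm\sqrt{(r_2^2-L_2^2)/(r_1^2-L_1^2)}$ --- picks out at most two points, giving at most $4 \cdot 2 = 8$ points $(M_2,J_2,L_2)$. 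The routine part is the elimination algebra; the main obstacle is the careful bookkeeping of the degenerate configurations --- the vanishing radii above, the two exceptional loci where the elimination of $\l$ breaks down, and the verification that the $L_2$-polynomial never degenerates to zero --- each of which must be handled separately so that no rank-one critical point is missed and the count stays finite.
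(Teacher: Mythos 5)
Your proof is correct and follows essentially the same route as the paper's: the proportionality $(M_2,J_2)=\l(M_1,-J_1)$ extracted from (\ref{mjeqn}), the collapse of (\ref{jleqn})--(\ref{jmleqn}) to the single scalar relation $L_2+\l^2 L_1+\l(d_{1,\g}L_1+d_{2,\g}L_2)=0$, elimination of $\l$ by squaring and substituting $\l^2=(r_2^2-L_2^2)/(r_1^2-L_1^2)$, and the count of at most four roots of the resulting quartic in $L_2$ times two sign choices for $\l$. Your additional bookkeeping (ruling out vanishing radii via the excluded poles, and checking that the quartic never degenerates to the zero polynomial) handles the same degeneracies the paper disposes of at the start of its proof, so nothing is missing.
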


\begin{proof}
First assume that $L_1 \in \{ \pm r_1 \}$. Then $J_1 = M_1 = 0$. Hence (\ref{mjeqn}) is automatically satisfied and equations (\ref{jleqn}) and (\ref{mleqn}) read $J_2 = 0$ and $M_2 = 0$, respectively. As a consequence, $(M_1,J_1,L_1) = (0,0, \pm r_1)$ and $(M_2,J_2,L_2) = (0,0, \pm r_2)$. In view of Theorem \ref{kgenfolthm} we thus may assume that $|L_1| < r_1$. Then $(M_1,J_1) \neq (0,0)$. Hence the first equation (\ref{mjeqn}) says that there exists $\l \in \R$ such that
\begin{equation} \label{m2j2m1j2}
  (M_2,J_2) = \l (M_1,-J_1).
\end{equation}
%Note that the cases $\l =0, \infty$ are excluded by the assumption $|L_i| < r_i$, $i=1,2$. 
The conditions $(M_i,J_i,L_i) \in \S^2_{r_i}$, $i=1,2$ then imply that $\l$ satisfies
\begin{equation} \label{lambdarlrl}
  \l^2 = \frac{r_2^2 - L_2^2}{r_1^2 - L_1^2}.
\end{equation}
Substituting (\ref{m2j2m1j2}) into (\ref{jleqn})-(\ref{jmleqn}) one sees, again using $(M_1,J_1) \neq (0,0)$, that (\ref{jleqn})-(\ref{jmleqn}) is equivalent to
\begin{equation} \label{l2l1lambdaeqn}
  L_2 + \l^2 L_1 + \l (d_{1,\g} L_1 + d_{2,\g} L_2) = 0,
\end{equation}
or, taking squares, $(L_2 + \l^2 L_1)^2 - \l^2 (d_{1,\g} L_1 + d_{2,\g} L_2)^2 = 0$. Using (\ref{lambdarlrl}), the latter equation reads
\begin{displaymath}
  (r_1^2 - L_1^2)^2 L_2^2 + (r_2^2 - L_2^2)^2 L_1^2 + 2 (r_1^2 - L_1^2) (r_2^2 - L_2^2) (2 L_1 L_2 - (d_{1,\g} L_1 + d_{2,\g} L_2)^2) = 0,
\end{displaymath}
or, after dividing by $r_1^2 r_2^4$ one gets, using the bifurcation parameter $r$ and the normed variables $l_i := L_i/r_i \in (0,1)$ ($i=1,2$),
\begin{equation} \label{lnormedeqn}
  r^2 (1 \!-\! l_1^2)^2 l_2^2 + (1 \!-\! l_2^2)^2 l_1^2 + 2 r (1 \!-\! l_1^2) (1 \!-\! l_2^2) (2 l_1 l_2 \!-\! (\sqrt{r} d_{1,\g} l_1 \!+\! \frac{1}{\sqrt{r}} d_{2,\g} l_2)^2) = 0.
\end{equation}
Note that for given $r$ and $0 < l_1 < 1$, the left hand side of (\ref{lnormedeqn}) is a polynomial in $l_2$ of degree four, $(l_1^2 + d_{2,\g}^2 (1 - l_1^2)) l_2^4 + O(l_2^3)$. %Moreover, we point out that a solution of (\ref{lnormedeqn}) does not necessarily lead to a solution $(L_1,L_2,\l)$ of (\ref{lambdarlrl})-(\ref{l2l1lambdaeqn}), since (\ref{lnormedeqn}) is obtained by taking the square of (\ref{l2l1lambdaeqn}).% Alternatively, dividing (\ref{lnormedeqn}) by $r l_1 l_2 (1 - l_1^2) (1 - l_2^2)$ one gets
%\begin{displaymath}
%  \left( f_1(r,l_1,l_2) + f_1(r,l_1,l_2)^{-1} \right)^2 = \left( d_{1,\g} f_2(r,l_1,l_2) + d_{2,\g} f_2(r,l_1,l_2)^{-1} \right)^2
%\end{displaymath}
%where $f_1(r,l_1,l_2) := \sqrt{r \cdot \frac{l_2}{l_1} \cdot \frac{1 - l_1^2}{1 - l_2^2}}$ and $f_2(r,l_1,l_2) := \sqrt{r \cdot \frac{l_1}{l_2}}$.

\begin{figure}
  \centering
  \subfigure[$k=1$]{
     \label{rang1fig:k=1}
     \includegraphics[width=1in]{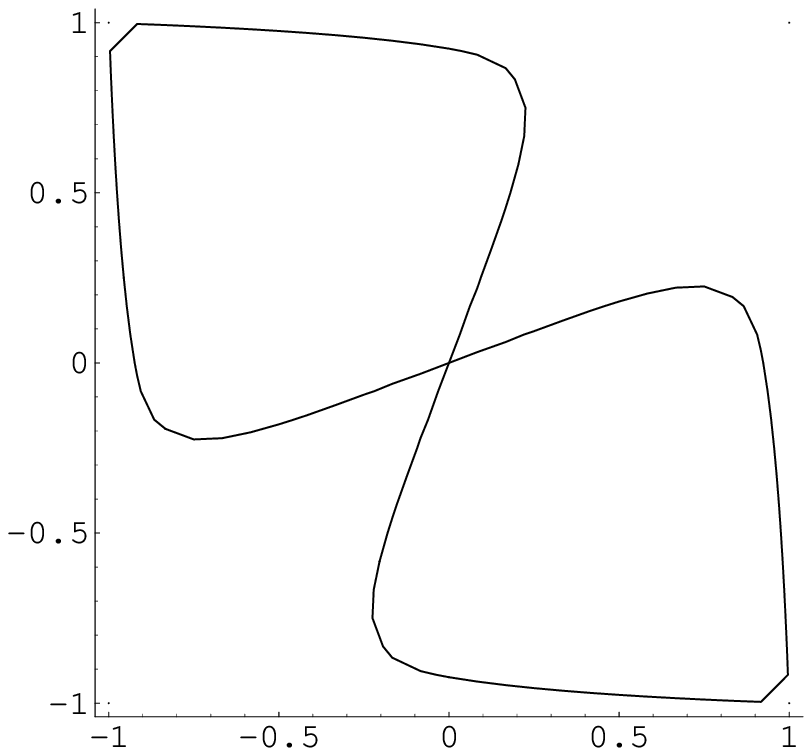}}
  \hspace{.5in}
  \subfigure[$k=3$]{
     \label{rang1fig:k=3}
     \includegraphics[width=1in]{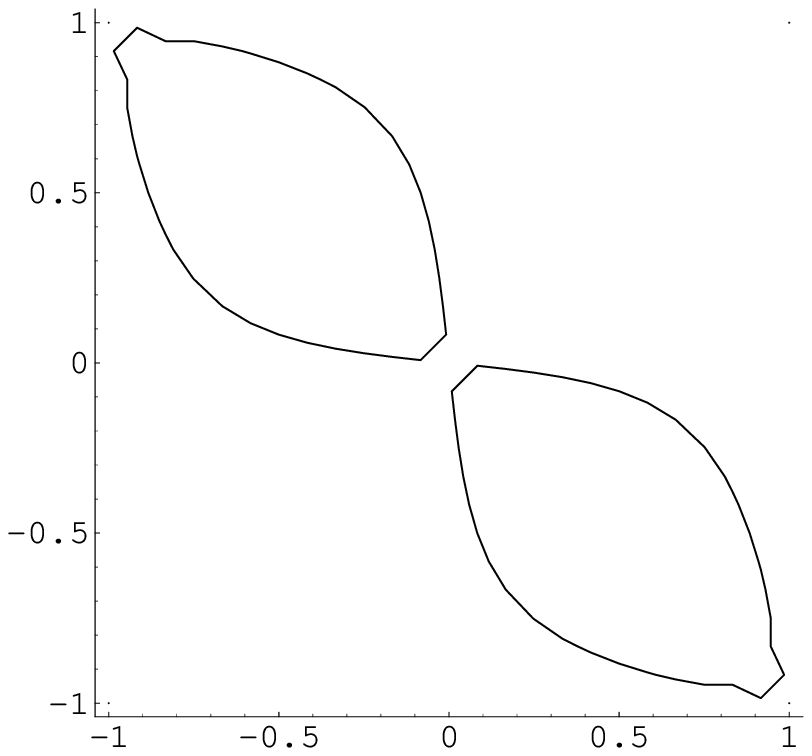}}
  \hspace{.5in}
  \subfigure[$k=5$]{
     \label{rang1fig:k=5}
     \includegraphics[width=1in]{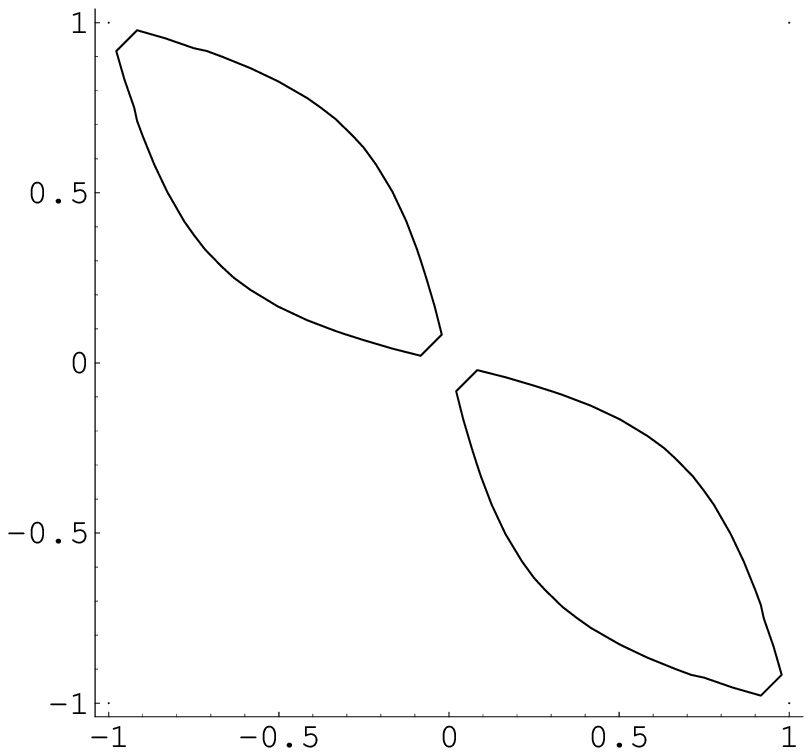}}
  \caption{Sets of solutions $(l_1,l_2)$ of (\ref{lnormedeqn}) for $N=24$, $r=1$, $\g=-0.5$}
  \label{fig:rank1setr=1}
\end{figure}
  
%\vspace{1cm}

\begin{figure}
  \centering
  \subfigure[$k=1$]{
     \label{rang1r=06fig:k=1}
     \includegraphics[width=1in]{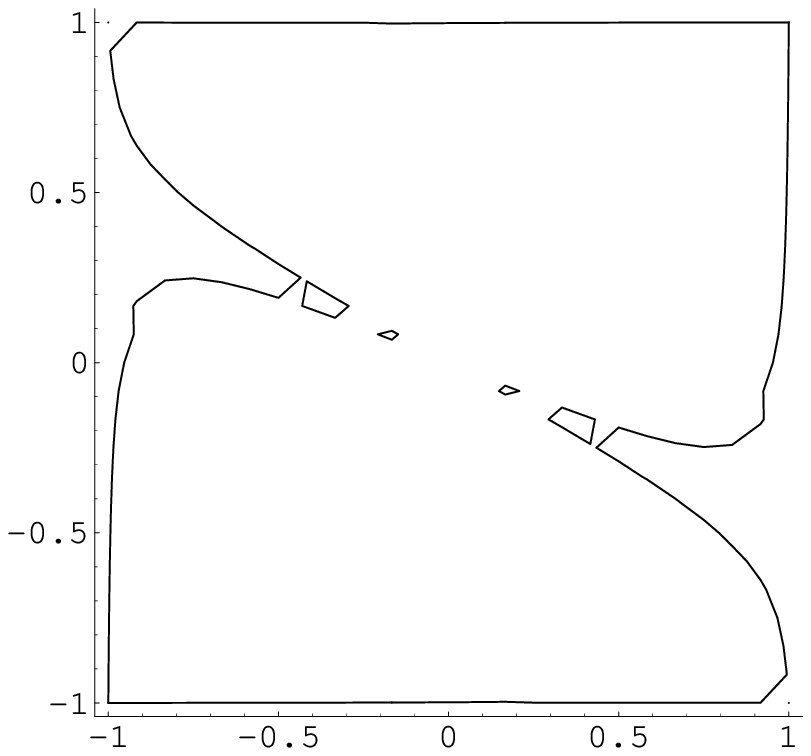}}
  \hspace{.5in}
  \subfigure[$k=3$]{
     \label{rang1r=06fig:k=3}
     \includegraphics[width=1in]{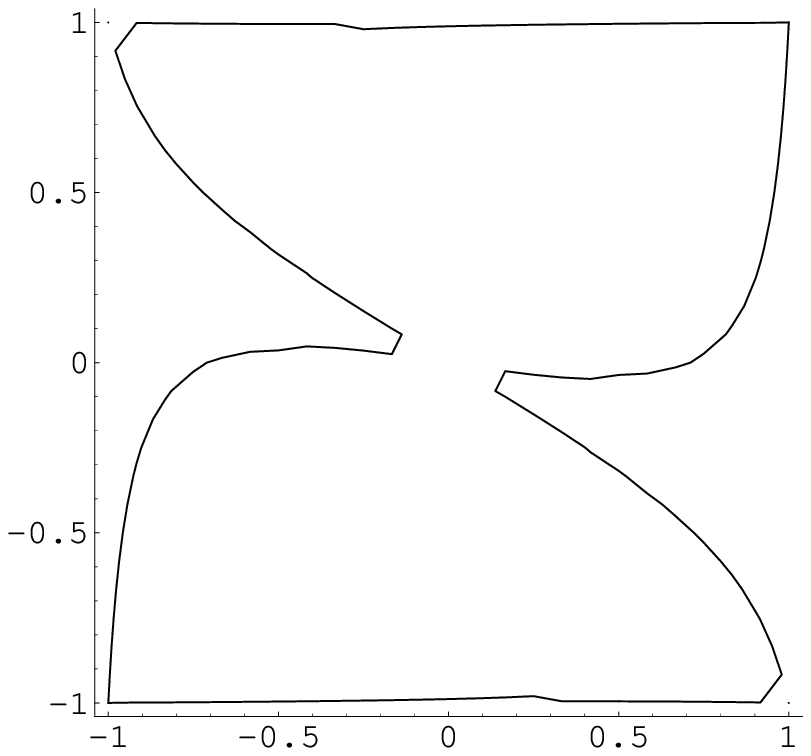}}
  \hspace{.5in}
  \subfigure[$k=5$]{
     \label{rang1r=06fig:k=5}
     \includegraphics[width=1in]{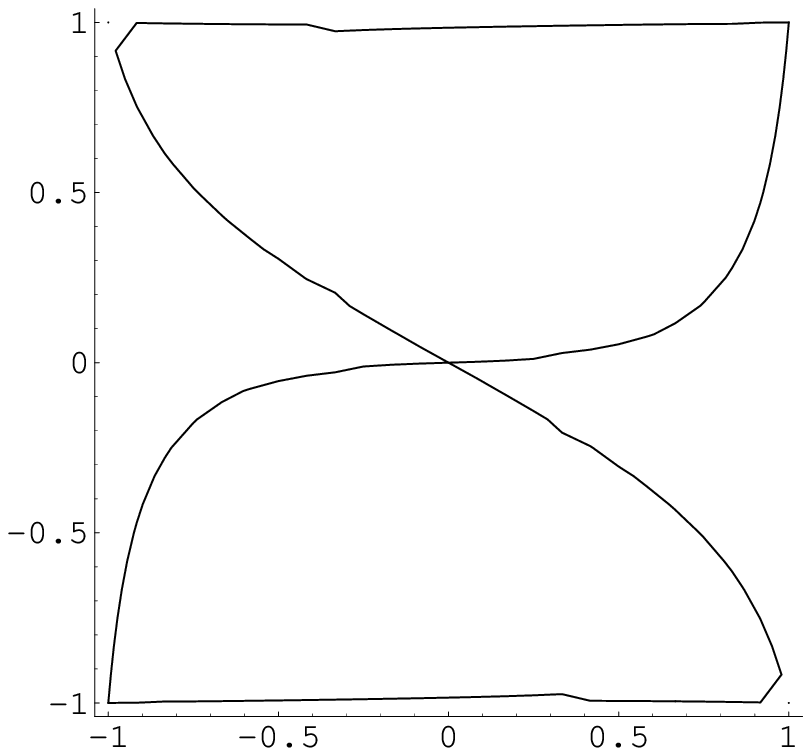}}
  \caption{Sets of solutions $(l_1,l_2)$ of (\ref{lnormedeqn}) for $N=24$, $r=0.3$, $\g=-2.5$}
  \label{fig:rank1setr=06}
\end{figure}

Summarizing, we have shown that for any given point $(M_1,J_1,L_1) \in \S_{r_1}^2 \setminus \{ (0,0, \pm r_1) \}$, there exist at most $8$ points $(M_2,J_2,L_2) \in \S_{r_2}^2 \setminus \{ (0,0, \pm r_2) \}$ such that $Y$ and $X_{\g}$ are collinear. Indeed for any $(M_1,J_1,L_1) \in \S_{r_1}^2 \setminus \{ (0,0, \pm r_1) \}$ a solution $(M_2,J_2,L_2) \in \S_{r_2}^2 \setminus \{ (0,0, \pm r_2) \}$ of (\ref{mjeqn})-(\ref{jmleqn}) is given by $(M_2,J_2) = \l (M_1,J_1)$ and $L_2 = r_2 l_2$ with $\l$ and $l_2$ satisfying (\ref{lambdarlrl}), respectively (\ref{lnormedeqn}). As a consequence the set of solutions of (\ref{mjeqn})-(\ref{jmleqn}) is an algebraic subset of $\S_{r_1}^2 \times \S_{r_2}^2$ of dimension at most two.
\end{proof}

In order to analyze the critical points of $\M_{\g}$ with rank$(d\M_{\g}) = 1$, we perform another symplectic reduction. First we pass to the orbit space of the flow of $Y$ on $\S^2_{r_1} \times \S^2_{r_2}$ and then to the level sets of $G$.

In view of (\ref{orbitsgaction}), the $Y$-flow is an $\S^1$-action. 
%\begin{equation} \label{kklkaction}
%        \left(t, \left( \begin{array}{c} M_k \\ J_k \\ \pi_k \\ M_{\tilde{k}} \\ J_{\tilde{k}} \\ \pi_{\tilde{k}} \end{array} \right) \right) \mapsto \left( \begin{array}{c} M_k \cos t + J_k \sin t \\ J_k \cos t - M_k \sin t \\ \pi_k \\ M_{\tilde{k}} \cos t - J_{\tilde{k}} \sin t \\ J_{\tilde{k}} \cos t + M_{\tilde{k}} \sin t \\ \pi_{\tilde{k}} \end{array} \right).
%\end{equation}
%Except for the four fixed points determined above, the orbits of this $\S^1$-action are diffeomorphic to $\S^1$.
Note that besides $L_1$ and $L_2$, the quantities $\s$ and $\t$ are invariant under this $\S^1$-action,
  \[ \s := M_1 M_2 - J_1 J_2, \quad \t := M_1 J_2 + M_2 J_1.
\]
They are related by the identity
\begin{equation} \label{sigmataul1l2}
  \s^2 + \t^2 = \prod_{i=1}^2 (r_i^2 - L_i^2).
\end{equation}

%\begin{eqnarray}
%       \mathcal{P}_{\rho_1, \rho_2} := \Big\{ (\pi_k, \pi_{\tilde{k}}, \s_k, \t_k) \in \R^4 & | & \s_k^2 + \t_k^2 = \left( \left( \frac{\rho_1^2}{4} \right)^2 - \pi_k^2 \right) \left( \left( \frac{\rho_2^2}{4} \right)^2 - \pi_{\tilde{k}}^2 \right), \nonumber\\
%       && |\pi_k| \leq \frac{\rho_1^2}{4}, |\pi_{\tilde{k}}| \leq \frac{\rho_2^2}{4} \Big\}.
%\end{eqnarray}
%Since every quantity which is invariant under (\ref{kklkaction}) can be expressed as a function of $\pi_k$, $\pi_{\tilde{k}}$, $\s_k$, and $\t_k$, $\mathcal{P}_{\rho_1, \rho_2}$ is the reduced phase space $(\S^2_{\rho_1^2/4} \times \S^2_{\rho_2^2/4}) / \S^1$. To make this more precise, we introduce the reduction map $\mathcal{F}^{(3)}$ by
Define
\begin{eqnarray*}
  \mathcal{F}^{(3)}: & \dot{\S}^2_{r_1} \times \dot{\S}^2_{r_2} & \to \quad \R^4 \\
  & \left( M_1, J_1, L_1, M_2, J_2, L_2 \right) & \mapsto (L_1, L_2, \s, \t)
\end{eqnarray*}
where $\dot{\S}^2_{r_i} := \S^2_{r_i} \setminus \{ (0,0, \pm r_i) \}$. Let $\mathcal{O}_r$ denote the image of $\mathcal{F}^{(3)}$. For any element $(L_1,L_2,s,t) \in \mathcal{O}_r$ we have
\begin{equation} \label{pr1r2def}
  s^2 + t^2 = \prod_{i=1}^2 (r_i^2 - L_i^2) \quad \textrm{and} \quad |L_i| \leq r_i \; (i=1,2).
\end{equation} 
The fibers of $\mathcal{F}^{(3)}$ are the orbits of the $Y$-action on $\dot{\S}^2_{r_1} \times \dot{\S}^2_{r_2}$, i.e. $\mathcal{O}_r$ coincides with the orbit space of the $Y$-action on $\dot{\S}^2_{r_1} \times \dot{\S}^2_{r_2}$. As a consequence, any function on $\dot{\S}^2_{r_1} \times \dot{\S}^2_{r_2}$ which Poisson commutes with $G$ factors through $\mathcal{O}_r$.

In particular, $K_{\g}$ and $G$ factor through $\mathcal{O}_r$. In fact, $K_{\g}$ and $G$, when expressed in the variables $L_1$, $L_2$, $\s$, $\t$, are polynomials, given by
\setlength\arraycolsep{2pt}{
\begin{eqnarray}
  K_{\g} & = & \sum_{i=1}^2 \frac{1}{2} d_{i,\g} (r_i^2 - L_i^2) + \s, \label{kkexpr1}\\
  G & = & L_1 - L_2. \label{lkexpr}
\end{eqnarray}}
By reducing the system $(G, K_{\g})$ by the $Y$-action one obtains a family of integrable systems with one degree of freedom parametrized by the value $c$ of $G$. Denote by $X_{\g,c}$ the Hamiltonian vector field induced by $K_{\g}$. The fixed points of $X_{\g,c}$ can then be characterized in terms of the bifurcation parameters $\g$, $r$, and $k$.

Note that by (\ref{mjeqn}), the rank-$1$-points of the reduced moment map $\M_{\g}$ satisfy $\t=0$, and by (\ref{m2j2m1j2})-(\ref{lambdarlrl}), $\s^2 = (r_2^2 - L_2^2)(r_1^2 - L_1^2)$. Hence the image of the set of the rank-$1$-points by $\mathcal{F}^{(3)}$ is an algebraic subset of $\mathcal{O}_{r}$ of dimension at most one - see (\ref{lnormedeqn}).

By (\ref{sigmataul1l2}), $\s$ and $\t$ are located on a circle of radius $\sqrt{(r_1^2 - L_1^2)(r_2^2 - L_2^2)}$,
\begin{equation} \label{phidef}
  (\s, \t) = \sqrt{(r_1^2 - L_1^2)(r_2^2 - L_2^2)} \; (\cos \phi, \sin \phi),
\end{equation}
where $\phi \in \R / 2\pi\Z$. The phase spaces, reduced by the $Y$-action, are now obtained by taking subsets of $\mathcal{O}_r$ corresponding to level sets of $G$, i.e. by replacing $L_2$ by $L_1 - c$, where $c$ is the value of $G$. The restriction $K_{\g,c}$ of $K_{\g}$ to the reduced phase space corresponding to the value $c$ of $G$ is then given by
%\begin{eqnarray*}
%  K_{\g,c}(L_1,\t) & = & \frac{d_{1,\g}}{2} (r_1^2 - L_1^2) + \frac{d_{2,\g}}{2} (r_2^2 - (L_1 - c)^2) \\
%&& \qquad \pm \sqrt{(r_1^2 - L_1^2)(r_2^2 - (L_1 - c)^2) - \t^2}.
%\end{eqnarray*}
%\begin{equation} \label{kgammacl1sigma}
%  K_{\g,c}(L_1,\s) = \frac{1}{2} \left( d_{1,\g} (r_1^2 - L_1^2) + d_{2,\g} (r_2^2 - (L_1 - c)^2) \right) + \s.
%\end{equation}
\begin{eqnarray}
  K_{\g,c}(L_1,\phi) & = & \frac{1}{2} \left( d_{1,\g} (r_1^2 - L_1^2) + d_{2,\g} (r_2^2 - (L_1 - c)^2) \right) \nonumber\\
&& \qquad + \sqrt{(r_1^2 - L_1^2)(r_2^2 - (L_1 - c)^2)} \; \cos \phi \label{kgammacl1phi}
\end{eqnarray}
with $L_1 \in ((-r_1, r_1) \cap (c-r_2, c+r_2))$ and $\phi \in \R / 2\pi\Z$.

The reduced Hamiltonian vector field induced by $K_{\g,c}$ is given by
\begin{equation} \label{ddtl1eqnvect}
    X_{\g,c}(L_1,\phi) = \frac{d}{dt} \left( \begin{array}{c} L_1 \\ \phi \end{array} \right) = \{ L_1, \phi \} \left( \begin{array}{c} \partial K_{\g,c} / \partial \phi \\ -\partial K_{\g,c} / \partial L_1 \end{array} \right).
\end{equation}
Note that
\begin{eqnarray*}
\{ L_1, \phi \} & = & \left\{ L_1, \arctan \frac{\t}{\s} \right\} = \frac{1}{1 + (\t/\s)^2} \left\{ L_1, \frac{\t}{\s} \right\} \\
& = & \frac{1}{1 + (\t/\s)^2} \cdot \frac{\s \{ L_1, \t \} - \t \{ L_1, \s \}}{\s^2} = \frac{1}{1 + (\t/\s)^2} \cdot \frac{\s^2 + \t^2}{\s^2} = 1,
\end{eqnarray*}
since $\{ L_1, \t \} = \s$ and $\{ L_1, \s \} = -\t$. Furthermore, with $L_2 = L_1 - c$,
\begin{eqnarray*}
  \frac{\partial K_{\g,c}}{\partial \phi} & = & - \sqrt{(r_1^2 - L_1^2)(r_2^2 - L_2^2)} \; \sin \phi, \\
\frac{\partial K_{\g,c}}{\partial L_1} & = & -(d_{1,\g} L_1 + d_{2,\g} L_2) - \frac{L_1 (r_2^2 - L_2^2) + L_2 (r_1^2 - L_1^2)}{\sqrt{(r_1^2 - L_1^2)(r_2^2 - L_2^2)}} \cos \phi.
\end{eqnarray*}
%\begin{eqnarray*}
%  \frac{\partial K_{\g,c}}{\partial \phi} & = & - \sqrt{(r_1^2 - L_1^2)(r_2^2 - (L_1 - c)^2)} \; \sin \phi \\
%\frac{\partial K_{\g,c}}{\partial L_1} & = & -(d_{1,\g} L_1 + d_{2,\g} (L_1 \!-\! c)) + \frac{L_1 (r_2^2 - (L_1 \!-\! c)^2) + (L_1 \!-\! c)(r_1^2 - L_1^2)}{\sqrt{(r_1^2 - L_1^2)(r_2^2 - (L_1 \!-\! c)^2)}}.
%\end{eqnarray*}
Hence (\ref{ddtl1eqnvect}) reads
\begin{equation} \label{ddtl1expl}
    X_{\g,c}(L_1,\phi) = \left( \begin{array}{c} -\sqrt{(r_1^2 - L_1^2)(r_2^2 - L_2^2)} \; \sin \phi \\ (d_{1,\g} L_1 + d_{2,\g} L_2) + \frac{L_1 (r_2^2 - L_2^2) + L_2 (r_1^2 - L_1^2)}{\sqrt{(r_1^2 - L_1^2)(r_2^2 - L_2^2)}} \, \cos \phi \end{array} \right),
\end{equation}
where we treat $L_2 = L_1 - c$ as a dependent variable.

By (\ref{ddtl1expl}), the fixed points of the vector field $X_{\g,c}$ with $|L_1| < r_1$ are given by $(L_1,\phi)$ satisfying
\begin{equation} \label{phieqn}
  \phi \in \pi\Z %\sin \phi = 0,
\end{equation}
and
\begin{equation} \label{L1L2eqn}
(d_{1,\g} L_1 + d_{2,\g} L_2) + \frac{L_1 (r_2^2 - L_2^2) + L_2 (r_1^2 - L_1^2)}{\sqrt{(r_1^2 - L_1^2)(r_2^2 - L_2^2)}} \cos \phi = 0.
\end{equation}
%are satisfied, the sign in (\ref{L1L2eqn}) depending on the sign of $\cos \phi$ (note that by (\ref{phieqn}), $\cos \phi \in \{ \pm 1 \}$). After (again) introducing the normed variables $l_i := L_i / r_i \in (-1, 1)$ ($i=1,2$), it
Note that (\ref{phieqn}) and the square of (\ref{L1L2eqn}) are equivalent to the system of equations (\ref{mjeqn})-(\ref{lnormedeqn}) derived above.

In order to determine the type of the fixed points $(L_1, \phi)$, i.e. points satisfying (\ref{phieqn})-(\ref{L1L2eqn}) for a given value $c$ of $G$, one computes the Jacobian of $X_{\g,c}$, $H \equiv H_{\g,c}(L_1,\phi) = \left( \begin{array}{cc} h_{11} & h_{12} \\ h_{21} & h_{22} \end{array} \right)$ at these points. Note that at such points $h_{11}=0$ and $h_{22}=0$ and thus $\det(H) = -h_{12} h_{21}$. Hence such a fixed point is an elliptic or hyperbolic fixed point of $X_{\g,c}$ if $h_{12} h_{21}$ is positive, respectively negative. We omit a more detailed analysis of these points. %We investigate the signs of $h_{12}$ and $h_{21}$ separately.

%First note that
%\begin{displaymath}
%  h_{12} = \sqrt{(r_1^2 - L_1^2)(r_2^2 - L_2^2)} \, \cos \phi,
%\end{displaymath}
%hence, since $\phi \in \pi \Z$, $\textrm{sign} \, (h_{12}) = (-1)^{\phi/\pi}$. Next, we compute
%\begin{displaymath}
%  h_{21} = \left( -(d_{1,\g} + d_{2,\g}) \pm \frac{\partial}{\partial L_1} \left( \frac{L_1 (r_2^2 - (L_1 - c)^2) + L_2 (r_1^2 - L_1^2)}{\sqrt{(r_1^2 - L_1^2)(r_2^2 - (L_1 - c)^2)}} \right) \right),
%\end{displaymath}
%where the sign again depens on $(-1)^{\phi/\pi}$. Another (lengthy) computation shows that the latter derivative is equal to the nonnegative expression
%\begin{displaymath}
%  \frac{1}{\sqrt{(1 - l_1^2)(1 - l_2^2)}} \cdot \left( \sqrt{r \cdot \frac{1 - l_1^2}{1 - l_2^2}} - \left( \sqrt{r \cdot \frac{1 - l_1^2}{1 - l_2^2}} \right)^{-1} \right)^2,
%\end{displaymath}
%again with $l_i := L_i / r_i$ ($i=1,2$). Since $d_{1,\g} + d_{2,\g} = \frac{2\g + 1}{s_{2k}}$, the classification reduces investigating for solutions of (\ref{lnormedeqn}) the sign of
%\begin{displaymath}
%  \frac{2\g + 1}{s_{2k}} \pm \frac{1}{\sqrt{(1 - l_1^2)(1 - l_2^2)}} \cdot \left( \sqrt{r \cdot \frac{1 - l_1^2}{1 - l_2^2}} - \left( \sqrt{r \cdot \frac{1 - l_1^2}{1 - l_2^2}} \right)^{-1} \right)^2.
%\end{displaymath}

\section{Chains with Dirichlet boundary conditions} \label{fpudir}

In this section we consider a FPU chain with $N'$ ($N' \geq 3$, not necessarily even) moving particles and fixed endpoints, i.e. with boundary conditions (\ref{dirboundcond}).
%\begin{equation} \label{dirboundcond}
%  q_0 = q_{N'+1} = 0.
%\end{equation}

It has been observed that such a chain can be treated as an invariant subsytem of a periodic lattice with $N = 2N'+2$ particles - see \cite{rink06}: Let $T^*\R^N$ be endowed with the canonical symplectic structure and consider the linear map $S: T^*\R^N \to T^*\R^N$ mapping $(q_i)_{1 \leq i \leq N}, (p_i)_{1 \leq i \leq N}$ to
        \[ -(q_{N-1}, \ldots, q_1, q_N), -(p_{N-1}, \ldots, p_1, p_N).
\]
%\begin{eqnarray*}
%  S: & \!\!\!\! (q_1, q_2, \ldots, q_{N-1}, q_N; p_1, p_2, \ldots, p_{N-1}, p_N) \mapsto\\
%& \quad -(q_{N-1}, q_{N-2}, \ldots, q_1, q_N; p_{N-1}, p_{N-2}, \ldots, p_1, p_N).
%\end{eqnarray*}
Then $S$ is a canonical linear involution satisfying $H_V \circ S = H_V$. Denote by Fix$(S)$ the fixed point set of $S$. Then Fix$(S)$ is the subset of all elements $(q,p)$ in $T^*\R^N$ satisfying
\begin{eqnarray}
  (q_n, p_n) = -(q_{N-n}, p_{N-n}) \;\; \forall \, 1 \leq n \leq N-1 \; \textrm{and} \; q_N = p_N = 0. \label{fixsdef}
\end{eqnarray}
In particular, on Fix$(S)$, $q_N = q_{N'+1} = 0$ and $p_N = p_{N'+1} = 0$. Note that on Fix$(S)$, both the center of mass coordinate $Q = \frac{1}{N} \sum_{i=1}^N q_i$ and its momentum $P = \frac{1}{N} \sum_{i=1}^N p_i$ are identically $0$. Hence Fix$(S) \subseteq \mathcal{M}$, where
        \[ \mathcal{M} := \{ (q,p) \in T^*\R^N | Q=0; P=0 \}.
\]
We endow $\mathcal{M}$ with the symplectic structure induced from $T^*\R^N$.

The phase space of an FPU chain with $N'$ moving particles satisfying Dirichlet boundary conditions is $T^*\R^{N'}$, endowed with the canonical symplectic structure $\sum_{i=1}^{N'} dq_i \wedge dp_i$. It can be embedded into $\mathcal{M}$ by the map $\Theta: T^*\R^{N'} \to \mathcal{M}$ defined by
\begin{displaymath}
        (q_i,p_i)_{1 \! \leq \! i \! \leq \! N'} \mapsto \frac{1}{\sqrt{2}} \left( (q_i,p_i)_{1 \leq i \leq N'}, (0, 0), -(q_{N'\!-\!i},p_{N'\!-\!i})_{0 \leq i \leq N'\!-\!1}, (0, \! 0) \! \right).
\end{displaymath}
Note that $\Theta(T^*\R^{N'}) = Fix(S)$, i.e. $\Theta$ is a parametrization of Fix$(S)$ and the pullback of the canonical symplectic form on $\mathcal{M}$ by $\Theta$ is $\sum_{i=1}^{N'} dq_i \wedge dp_i$, which means that $\Theta$ is canonical. It then follows that Fix$(S)$ is a symplectic submanifold of $\mathcal{M}$.

%When expressed in the canonical coordinates $(x_k, y_k)_{\1N1}$ defined by Theorem \ref{bnffputheoremeven} points in Fix$(S)$ satisfy the following identities:
We now express the equations defining Fix$(S)$ locally near $0$ as a subset of $\mathcal{M}$ in terms of the canonical coordinates $(x_k, y_k)_{\1N1}$ provided by Theorem \ref{bnffputheoremeven}, or even more conveniently, in terms of the associated complex coordinates $(\z_k)_{1 \leq |k| \leq N-1}$, defined for $\1N1$ by 
\begin{equation} \label{complexdef}
\left\{ \begin{array}{rcl}
\z_k & = & \frac{1}{\sqrt{2}} (x_k - i y_k) \\
\z_{-k} = \overline{\z_k} & = & \frac{1}{\sqrt{2}} (x_k + i y_k).
\end{array} \right.
\end{equation}
Denote by $\mathcal{Z}$ the linear subspace of $\C^{2N-2}$ consisting of such vectors $(\z_k)_{1 \! \leq \! |k| \! \leq \! N\!-\!1}$. In the sequel we also write $(\z_k)_{\1N1}$ for the element $(\z_k)_{1 \leq |k| \leq N-1} \in \mathcal{Z}$ and use the notations ($n \in \Z$)
\begin{displaymath}
  c_n := \cos \frac{n \pi}{N}, \quad s_n := \sin \frac{n \pi}{N}.
\end{displaymath}

Define the map $\sz: \mathcal{Z} \to \mathcal{Z}$, given by
\begin{equation} \label{szdef}
 (\z_k)_{\1N1} \mapsto (-e^{4 \pi i k / N} \z_{N-k})_{\1N1}.
\end{equation}
Like the map $S: \mathcal{M} \to \mathcal{M}$, $\sz$ is a canonical linear involution. In fact, the maps $S$ and $\sz$ are conjugate to each other under the coordinate change of Theorem \ref{bnffputheoremeven}. Before making this statement more precise, let us introduce a parametrization of the fixed point set Fix$(\sz)$ of the map $\sz$. Introduce
        \[ \zdir := \{ (\z_k)_{1 \leq |k| \leq N'} \in \C^{2N'} | \, \overline{\z}_k = \z_{-k} \quad \forall \, 1 \leq k \leq N' \},
\]
endowed with the \emph{canonical} symplectic structure %$\frac{1}{i} \sum_{k=1}^{N'} d\z_l \wedge d\z_{-k}$ 
induced from $\C^{2N'}$, and the embedding $\thmz: \zdir \to \mathcal{Z}$ mapping $(\z_k)_{1 \leq |k| \leq N'}$ to the element $(\tilde{\z}_k)_{1 \leq k \leq N} \in \mathcal{Z}$ given by
        \[ \frac{1}{\sqrt{2}} \left( (\z_k)_{1 \leq k \leq N'}, 0, (-e^{4 \pi i k / N} \z_{N'+1-k})_{1 \leq k \leq N'} \right).
\]
Note that $\thmz(\zdir) = Fix(\sz)$, i.e. $\thmz$ is a parametrization of Fix$(\sz)$.

%We prove the following lemma in Appendix \ref{xkxnksection}.
Using the explicit construction in \cite{ahtk4} of the canonical transformation $\Psi$ of Theorem \ref{bnffputheoremeven} one verifies the following lemma.
\begin{lemma} \label{xkxnkproof}
In terms of the complex variables $(\z_k)_{1 \leq |k| \leq N-1}$ defined by Theorem \ref{bnffputheoremeven}, near $0$, the map $S$ is given by $S_\mathcal{Z}$. More precisely, if $\Psi$, defined near $0 \in \mathcal{Z}$, is the coordinate transformation given by Theorem \ref{bnffputheoremeven}, then $S \circ \Psi = \Psi \circ \sz$. In particular, locally near $0$, the set Fix$(S_\mathcal{Z}) \subseteq \mathcal{Z}$, described by the equations
\begin{equation} \label{fixscondzeta}
  e^{-2 \pi i k / N} \z_k + e^{2 \pi i k / N} \z_{N-k} = 0 \quad (\1N1),
\end{equation}
is the image of the set Fix$(S)$ under $\Psi^{-1}$. Expressed in terms of the real variables $(x_k, y_k)_{\1N1}$, the conditions (\ref{fixscondzeta}) are given by
\begin{equation} \label{fixscondxy}
\left( \begin{array}{cc} c_{2k} & -s_{2k} \\ s_{2k} & c_{2k} \end{array} \right) \left( \begin{array}{c} x_k \\ y_k \end{array} \right) + \left( \begin{array}{cc} c_{2k} & s_{2k} \\ -s_{2k} & c_{2k} \end{array} \right) \left( \begin{array}{c} x_{N-k} \\ y_{N-k} \end{array} \right) = \left( \begin{array}{c} 0 \\ 0 \end{array} \right).
\end{equation}
In particular, for $k = N'+1 \, (= N/2)$ we get $\z_{N'+1} = 0$ and therefore
\begin{displaymath}
(x_{N'+1}, y_{N'+1}) = (0,0).
\end{displaymath}
%$(x_\frac{N}{2}, y_\frac{N}{2}) = (x_{N'+1}, y_{N'+1}) = (0,0)$.
\end{lemma}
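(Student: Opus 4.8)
The plan is to reduce the entire lemma to the single conjugation identity $S \circ \Psi = \sz \circ \Psi$ claimed in the statement (more precisely $S \circ \Psi = \Psi \circ \sz$ on the domain of $\Psi$ near $0$); once this is established, the description of $\mathrm{Fix}(\sz)$, its real form, and the vanishing of $\zeta_{N'+1}$ all follow by direct computation. First I would recall from \cite{ahtk4} that the canonical transformation $\Psi$ of Theorem \ref{bnffputheoremeven} factors as $\Psi = \Psi_{\mathrm{lin}} \circ \Phi$, where $\Psi_{\mathrm{lin}}$ is the linear symplectic map (a real form of the discrete Fourier transform) that diagonalizes the quadratic part $\frac{1}{2}\sum p_n^2 + \frac{1}{2}\sum(q_{n+1}-q_n)^2$, and $\Phi$ is the Birkhoff normalizing symplectomorphism producing the order-four normal form. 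I would prove the conjugation identity in two steps matching this factorization.

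For the linear part I would compute the action of $S$ on Fourier modes directly. Writing $S(q)_n = -q_{N-n}$ (using $q_0 = q_N$) and passing to the discrete Fourier coefficients $\hat{q}_k = \sum_n q_n e^{-2\pi i kn/N}$, the substitution $m = N-n$ gives $\widehat{S(q)}_k = -\hat{q}_{-k}$, and analogously for the momenta. Incorporating the specific phase convention built into $\Psi_{\mathrm{lin}}$ in \cite{ahtk4} turns the reflection $k \mapsto -k$ (equivalently $k \mapsto N-k$) together with the overall sign into precisely the factor $-e^{4\pi i k/N}$ appearing in the definition (\ref{szdef}) of $\sz$. This establishes $S \circ \Psi_{\mathrm{lin}} = \Psi_{\mathrm{lin}} \circ \sz$ at the linear level.

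For the nonlinear part the key point is that $\Phi$ is $\sz$-equivariant, i.e. $\Phi \circ \sz = \sz \circ \Phi$. This follows from the symmetry $H_V \circ S = H_V$ observed earlier: pulling this relation back through $\Psi_{\mathrm{lin}}$ shows that both the Hamiltonian to be normalized and the resulting normal form are invariant under $\sz$. Since $\sz$ commutes with the linear flow of the diagonalized quadratic part, it commutes with the homological operator governing each step of the normalization, so the generating functions can be chosen in the $\sz$-invariant subspace and $\Phi$ commutes with $\sz$. Composing with the linear statement yields $S \circ \Psi = \Psi \circ \sz$, and hence $\Psi^{-1}(\mathrm{Fix}(S)) = \mathrm{Fix}(\sz)$.

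It then remains to read off the equations. The fixed-point condition $\sz(\zeta) = \zeta$ reads $\zeta_k = -e^{4\pi i k/N}\zeta_{N-k}$; multiplying by $e^{-2\pi i k/N}$ gives (\ref{fixscondzeta}). Substituting the definition (\ref{complexdef}) and separating real and imaginary parts, using $e^{\mp 2\pi i k/N} = c_{2k} \mp i\, s_{2k}$, reproduces exactly the two rows of the matrix equation (\ref{fixscondxy}). Finally, for $k = N'+1 = N/2$ one has $N-k = k$, so (\ref{fixscondzeta}) collapses to $(e^{-i\pi} + e^{i\pi})\zeta_{N/2} = -2\zeta_{N/2} = 0$, giving $\zeta_{N'+1} = 0$ and hence $(x_{N'+1}, y_{N'+1}) = (0,0)$. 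The main obstacle is the first phase of the argument: verifying that the explicit phase conventions of the transformation $\Psi$ constructed in \cite{ahtk4} produce exactly the factor $-e^{4\pi i k/N}$, and confirming that the Birkhoff part can be chosen equivariantly; the remaining manipulations are routine.
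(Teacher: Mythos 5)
The paper itself gives no argument for this lemma beyond the single sentence preceding it (``Using the explicit construction in \cite{ahtk4} of the canonical transformation $\Psi$ \ldots one verifies the following lemma''), i.e.\ it defers entirely to a direct check against the explicit formulas of \cite{ahtk4}. Your route is therefore genuinely different: instead of brute-force verification you argue structurally, splitting $\Psi = \Psi_{\mathrm{lin}} \circ \Phi$, proving $S \circ \Psi_{\mathrm{lin}} = \Psi_{\mathrm{lin}} \circ \sz$ by a discrete-Fourier computation (which is correct: $S(q)_n = -q_{N-n}$ does give $\widehat{S(q)}_k = -\hat{q}_{-k}$, and the phase factor $-e^{4\pi i k/N}$ is then a matter of the conventions in \cite{ahtk4}), and then invoking equivariance of the nonlinear normalization step. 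Your reduction of the whole lemma to the identity $S \circ \Psi = \Psi \circ \sz$ is exactly right, and the closing computations are correct and complete: multiplying $\z_k = -e^{4\pi i k/N}\z_{N-k}$ by $e^{-2\pi i k/N}$ yields (\ref{fixscondzeta}); writing $e^{\mp 2\pi i k/N} = c_{2k} \mp i s_{2k}$ and separating real and imaginary parts reproduces both rows of (\ref{fixscondxy}); and at $k = N/2$ the equation collapses to $-2\z_{N/2} = 0$. What your approach buys is a conceptual explanation of \emph{why} the lemma holds; what the paper's approach buys is that it proves the statement for the specific $\Psi$ with no further discussion.

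That last point is where your argument, as written, has a genuine (if repairable) gap. In the nonlinear step you show that the homological equations \emph{admit} $\sz$-invariant solutions, i.e.\ that \emph{some} normalizing transformation commuting with $\sz$ exists. But Birkhoff normalizing transformations are far from unique --- at each order one may add to the generating function an arbitrary element of the kernel of the homological operator, and such an element need not be $\sz$-invariant --- whereas the lemma asserts the intertwining relation for the \emph{specific} $\Psi$ of Theorem \ref{bnffputheoremeven}. ``Can be chosen equivariantly'' therefore does not yield the statement. To close the gap one must use that the construction in \cite{ahtk4} makes the canonical choice, namely the unique solution of each homological equation with no component in the kernel of the homological operator: since that operator commutes with the pullback by $\sz$ (the frequencies satisfy $\omega_k^0 = \omega_{N-k}^0$) and the data are $\sz$-invariant, the kernel-free solution is automatically $\sz$-invariant, so this particular $\Phi$ \emph{is} equivariant --- or else one falls back on the paper's direct verification of the explicit formulas. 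You flag this point yourself as the outstanding obstacle, so the defect is one of formulation rather than of strategy, but as stated the nonlinear step proves existence of an equivariant normalizer, not equivariance of $\Psi$.
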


%It follows from Lemma  \ref{xkxnkproof} that if $(x_k, y_k)_{1 \leq k \leq N}$ are the coordinates on $\mathcal{Z}$ of Theorem \ref{bnffputheoremeven}, we can choose as coordinates on Fix$(S)$ the first $N'$ pairs, $(x_k, y_k)_{1 \leq k \leq N'}$.%, and the Poisson bracket on Fix$(S)$ is given by the restriction of the Poisson bracket on $\mathcal{Z}$ to Fix$(S)$. In this way, the inclusion Fix$(S) \to \mathcal{Z}$ %$(x_k, y_k)_{1 \leq k \leq N'} \mapsto ((x_k, y_k)_{1 \leq k \leq N'}, 0, (\tilde{x_k}, \tilde{y_k})_{1 \leq k \leq N'})$ with $\left( \begin{array}{c} \tilde{x_{N-k}} \\ \tilde{y_{N-k}} \end{array} \right) := -\left( \begin{array}{cc} c_{4k} & -s_{4k} \\ s_{4k} & c_{4k} \end{array} \right) \left( \begin{array}{c} x_k \\ y_k \end{array} \right)$
% is a canonical embedding.

\begin{cor} \label{fixsikinminusk}
On Fix$(S_\mathcal{Z})$, for any $1 \leq k \leq \frac{N}{2}$,
\begin{equation} \label{ikeqinminusk}
  I_k = I_{N-k}
\end{equation}
and
\begin{equation} \label{jjmmeqii}
  J_k J_{\frac{N}{2}-k} - M_k M_{\frac{N}{2}-k} = I_k I_{\frac{N}{2}-k}.
\end{equation}
Moreover
\begin{equation} \label{in2eqzero}
  I_{\frac{N}{2}} = 0.
\end{equation}
\end{cor}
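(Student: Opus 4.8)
The plan is to work entirely in the complex coordinates $(\z_k)$ of (\ref{complexdef}), in which the defining equations (\ref{fixscondzeta}) of Fix$(S_\mathcal{Z})$ become purely multiplicative. Two elementary dictionary formulas drive everything. From (\ref{complexdef}) and (\ref{actiondef}) one has $I_k = |\z_k|^2$, and a direct expansion of (\ref{jkmkdef}) gives
\[
  J_k + i M_k = \z_k \overline{\z_{N-k}}.
\]
I would first record these two identities, as they convert all three assertions into short multiplicative computations.

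Next I would solve (\ref{fixscondzeta}) for $\z_{N-k}$, obtaining $\z_{N-k} = -e^{-4\pi i k/N}\z_k$ on Fix$(S_\mathcal{Z})$. Taking absolute values and using $I_k = |\z_k|^2$ immediately yields $I_{N-k} = I_k$, which is (\ref{ikeqinminusk}). The identity (\ref{in2eqzero}) requires no further work: Lemma \ref{xkxnkproof} already asserts $(x_{N/2}, y_{N/2}) = (0,0)$ on Fix$(S_\mathcal{Z})$, whence $I_{N/2} = \frac{1}{2}(x_{N/2}^2 + y_{N/2}^2) = 0$.

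For the remaining identity (\ref{jjmmeqii}) the key observation is that $J_k J_{\frac{N}{2}-k} - M_k M_{\frac{N}{2}-k}$ is exactly the real part of the product $(J_k + iM_k)(J_{\frac{N}{2}-k} + iM_{\frac{N}{2}-k})$. Applying the dictionary formula twice, this product equals $\z_k \overline{\z_{N-k}}\,\z_{\frac{N}{2}-k}\overline{\z_{\frac{N}{2}+k}}$. I would then substitute the Fix$(S_\mathcal{Z})$ relations $\z_{N-k} = -e^{-4\pi i k/N}\z_k$ and, applying (\ref{fixscondzeta}) at index $\frac{N}{2}-k$, $\z_{\frac{N}{2}+k} = -e^{4\pi i k/N}\z_{\frac{N}{2}-k}$. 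After conjugation, the two phase factors $e^{\pm 4\pi i k/N}$ together with the two minus signs cancel, leaving $|\z_k|^2 |\z_{\frac{N}{2}-k}|^2 = I_k I_{\frac{N}{2}-k}$, which is a nonnegative real number. Taking real parts gives (\ref{jjmmeqii}); as a bonus, the vanishing imaginary part yields $J_k M_{\frac{N}{2}-k} + M_k J_{\frac{N}{2}-k} = 0$.

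The only delicate point is the bookkeeping of the phase factors: one must invoke (\ref{fixscondzeta}) at the correct index $\frac{N}{2}-k$ and track that $e^{-4\pi i(\frac{N}{2} - k)/N} = e^{4\pi i k/N}$, the factor $e^{-2\pi i}$ being trivial. Once the exponents are matched, the cancellation is automatic and the computation is routine. Before finishing I would verify that all indices appearing in the argument, namely $k$, $N-k$, $\frac{N}{2}-k$, and $\frac{N}{2}+k$, lie in the admissible range $1 \leq \cdot \leq N-1$, so that (\ref{complexdef}) and (\ref{fixscondzeta}) apply verbatim.
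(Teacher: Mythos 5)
Your proposal is correct and follows essentially the same route as the paper: both work in the complex coordinates, use $I_k = |\z_k|^2$ together with the Fix$(S_\mathcal{Z})$ relation $\z_{N-k} = -e^{-4\pi i k/N}\z_k$, and reduce (\ref{jjmmeqii}) to a phase cancellation. The only (cosmetic) difference is that the paper records $J_k = -c_{4k} I_k$ and $M_k = -s_{4k} I_k$ separately and invokes the cosine addition formula, whereas you bundle them as $J_k + iM_k = \z_k \overline{\z_{N-k}}$ and cancel the phases inside one complex product.
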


\begin{proof}
In terms of the complex variables $(\z_k)_{1 \leq |k| \leq N-1}$, $I_k = \z_k \z_{-k}$ for any $\1N1$. Hence on Fix$(S_\mathcal{Z})$,
  \[ I_k = \z_k \z_{-k} = (-e^{4 \pi i k / N} \z_{N-k})(-e^{-4 \pi i k / N} \z_{-(N-k)}) = \z_{N-k} \z_{-(N-k)} = I_{N-k},
\]
showing (\ref{ikeqinminusk}). The identity (\ref{in2eqzero}) follows from $\z_\frac{N}{2}|_{Fix(S_\mathcal{Z})} = 0$. To prove (\ref{jjmmeqii}), we first conclude from (\ref{fixscondzeta}) that on Fix$(S_\mathcal{Z})$, for any $\1N1$,
\begin{eqnarray*}
 J_k & = & -c_{4k} I_k, \\
 M_k & = & -s_{4k} I_k.
\end{eqnarray*}
Hence on Fix$(S_\mathcal{Z})$,
\begin{eqnarray*}
  J_k J_{\frac{N}{2}-k} - M_k M_{\frac{N}{2}-k} & = & I_k I_{\frac{N}{2}-k} (c_{4k} c_{4(\frac{N}{2}-k)} - s_{4k} s_{4(\frac{N}{2}-k)}) \\
  & = & I_k I_{\frac{N}{2}-k} (c_{4k}^2 + s_{4k}^2) \\
  & = & I_k I_{\frac{N}{2}-k}.
\end{eqnarray*}
This completes the proof of Corollary \ref{fixsikinminusk}.
\end{proof}

From the definitions (\ref{jkmkdef}), (\ref{rdef}), and (\ref{rn4def}) of the variables $I_k$, $J_k$, $M_k$, and of the expressions $R$ and $R_\frac{N}{4}$ one then obtains the following

\begin{cor} \label{dirvarcor}
On Fix$(S_\mathcal{Z})$,
\begin{displaymath}
R = 4 \sum_{1 \leq k < \frac{N}{4}} s_{2k} I_k I_{\frac{N}{2} - k} \quad \textrm{and} \quad R_\frac{N}{4} = \left\{ \begin{array}{ll}
I_\frac{N}{4}^2 & \quad \textrm{if } \frac{N}{4} \in \N \\
0 & \quad \textrm{otherwise.} \end{array} \right.
\end{displaymath}
\end{cor}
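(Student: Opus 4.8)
The plan is to derive both identities directly from Corollary \ref{fixsikinminusk}, which already records the behaviour of the Hopf variables $J_k$, $M_k$ on Fix$(S_\mathcal{Z})$. No new computation involving the coordinate change $\Psi$ is needed; everything reduces to substitution into the definitions (\ref{jkmkdef}), (\ref{rdef}), (\ref{rn4def}).

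First I would treat $R$. By definition (\ref{rdef}), $R(J,M)$ is the sum over $1 \leq k < \frac{N}{4}$ of the terms $4 \sin \frac{2 k \pi}{N} (J_k J_{\frac{N}{2}-k} - M_k M_{\frac{N}{2}-k})$, with $\sin \frac{2k\pi}{N} = s_{2k}$. Restricting to Fix$(S_\mathcal{Z})$ and substituting the identity (\ref{jjmmeqii}) of Corollary \ref{fixsikinminusk} term by term — that is, replacing $J_k J_{\frac{N}{2}-k} - M_k M_{\frac{N}{2}-k}$ by $I_k I_{\frac{N}{2}-k}$ for each $k$ in the range $1 \leq k < \frac{N}{4}$ — immediately yields $R = 4 \sum_{1 \leq k < \frac{N}{4}} s_{2k} I_k I_{\frac{N}{2}-k}$, as claimed.

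Next I would treat $R_{\frac{N}{4}}$, which only enters when $\frac{N}{4} \in \N$. The key observation is that the expression $J_{\frac{N}{4}}^2 - M_{\frac{N}{4}}^2$ in (\ref{rn4def}) is exactly the degenerate case $k = \frac{N}{4}$ of (\ref{jjmmeqii}): there $\frac{N}{2} - k = \frac{N}{4} = k$, so (\ref{jjmmeqii}) reads $J_{\frac{N}{4}}^2 - M_{\frac{N}{4}}^2 = I_{\frac{N}{4}}^2$ directly. Alternatively, and perhaps more transparently, one may invoke the intermediate relations $J_k = -c_{4k} I_k$ and $M_k = -s_{4k} I_k$ established within the proof of Corollary \ref{fixsikinminusk}, evaluated at $k = \frac{N}{4}$: then $4k = N$, so $c_{4k} = \cos \pi = -1$ and $s_{4k} = \sin \pi = 0$, giving $J_{\frac{N}{4}} = I_{\frac{N}{4}}$ and $M_{\frac{N}{4}} = 0$, whence $R_{\frac{N}{4}} = J_{\frac{N}{4}}^2 - M_{\frac{N}{4}}^2 = I_{\frac{N}{4}}^2$.

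Since both assertions reduce to substituting relations already proved in Corollary \ref{fixsikinminusk}, there is essentially no genuine obstacle here. The only point demanding a moment's attention is recognizing that the $R_{\frac{N}{4}}$ contribution is merely the degenerate index $k = \frac{N}{4}$ of the very identity (\ref{jjmmeqii}) used for $R$, rather than a separate computation; the real work has already been absorbed into the preceding corollary.
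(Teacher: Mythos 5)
Your proposal is correct and follows essentially the same route as the paper, which obtains the corollary by directly substituting the identities of Corollary \ref{fixsikinminusk} into the definitions (\ref{rdef}) and (\ref{rn4def}). Your observation that the $R_{\frac{N}{4}}$ case is just the degenerate index $k=\frac{N}{4}$ of (\ref{jjmmeqii}) (equivalently, that $J_{\frac{N}{4}}=I_{\frac{N}{4}}$ and $M_{\frac{N}{4}}=0$ since $c_{4k}=-1$, $s_{4k}=0$ there) matches the paper's implicit reasoning exactly.
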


It follows from Corollary \ref{dirvarcor} that on Fix$(S_\mathcal{Z})$, the expression (\ref{evenfpuformula}) is in Birkhoff normal form up to order $4$. This allows us to prove Theorem \ref{bnfdirichlettheorem}.

\begin{proof}[Proof of Theorem \ref{bnfdirichlettheorem}]
We start with the resonant normal form (\ref{evenfpuformula}) for even chains, $\frac{N P^2}{2} + \Hab(I) - R_{\a,\b}(J,M) + O(|(x,y)|^5)$, where $\Hab(I)$ and $R_{\a,\b}(J,M)$ are given by (\ref{bnfintrotheorem}) and (\ref{rabdef}), respectively. Using the identity $I_k = I_{N-k}$, the terms in the decomposition (\ref{habintrepr}) of $\Hab$, when restricted to Fix$(S_\mathcal{Z})$, are given by
\begin{equation} \label{h2fixs}
H^{(2)}(I) = 4 \sum_{k=1}^{N'} s_k \, I_k,
\end{equation}
\begin{equation} \label{h4fixs}
H^{(4)}_{\a,\b}(I) = \frac{1}{N} \sum_{k=1}^{N'} d_k^+ I_k^2 + \frac{4(\b - \a^2)}{2N} \sum_{1 \leq k,l \leq N' \atop k \neq l} s_k s_l I_k I_l, %= \frac{1}{N} \sum_{k=1}^{N'} d_k^+ I_k^2 + \frac{2(\b \!\!-\!\! \a^2)}{N} \sum_{1 \leq k,l < N' \atop k \neq l} s_k s_l I_k I_l,
\end{equation}
and
\begin{equation} \label{ikinkfixs}
\frac{1}{2N} \sum_{k=1}^{\frac{N}{2}-1} d_k^- I_k I_{N-k} = \frac{1}{2N} \sum_{k=1}^{N'} d_k^- I_k^2.
\end{equation}
From Corollary \ref{dirvarcor}, we conclude that on Fix$(S_\mathcal{Z})$,
\begin{eqnarray}
  -R_{\a,\b}(J,M) & = & -\frac{\b - \a^2}{4N} \left( R(J,M) + R_\frac{N}{4}(J,M) \right) \nonumber\\
 & = & -\frac{\b - \a^2}{4N} \left( 4 \sum_{1 \leq k < \frac{N}{4}} s_{2k} I_k I_{\frac{N}{2} - k} \underbrace{ + \quad I_\frac{N}{4}^2}_{\textrm{only if }\frac{N}{4} \in \N} \right). \label{rabfixs}
\end{eqnarray}
Formula (\ref{bnfdirichletformula}) is then obtained by adding up (\ref{h2fixs})-(\ref{rabfixs}), noting that $d_k^+ + \frac{d_k^-}{2} = \frac{1}{2}(\a^2 + 3(\b-\a^2)s_k^2)$, and replacing $I_k$ by its pullback $\frac{1}{2} I_k$ with respect to the parametrization $\thmz$ of Fix$(S_\mathcal{Z})$ introduced above.
\end{proof}

It remains to prove Theorem \ref{bnfdirprop}. We first consider the case $\a=0$.

\begin{prop} \label{bchaindir}
Assume that $\a = 0$ in (\ref{potentialdef}). Then the following holds:
  \begin{itemize}
  \item[(i)] The Birkhoff normal form of $H_V$ with Dirichlet boundary conditions up to order $4$ is given by $\frac{(N'+1) P^2}{2} + \H0bd(I)$ where
    \begin{eqnarray}
        \H0bd(I) & = & 2 \sum_{k=1}^{N'} s_k I_k + \frac{\b}{16(N'+1)} \Bigg( \sum_{k=1}^{N'} 3 s_k^2 I_k^2 \; \underbrace{ + \frac{1}{2} I_\frac{N'+1}{2}^2}_{\textrm{only if }\frac{N'+1}{2} \in \N} \nonumber\\
        && + 4 \!\! \sum_{l \neq m \atop 1 \leq l,m \leq N'} \!\! s_l s_m I_l I_m - \sum_{k=1}^{N'} s_{2k} I_k I_{N'+1-k} \Bigg). \label{bnfdirbchain}
    \end{eqnarray}
  \item[(ii)] For any $\b \neq 0$, $\H0bd(I)$ is nondegenerate at $I=0$.
  \end{itemize}
\end{prop}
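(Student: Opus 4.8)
For part (i) the plan is purely to specialize Theorem \ref{bnfdirichlettheorem}. Setting $\a = 0$ in (\ref{bnfdirichletformula}) turns the coefficient $\a^2 + 3(\b-\a^2)s_k^2$ of $I_k^2$ into $3\b s_k^2$ and replaces every occurrence of $\b - \a^2$ by $\b$. Since $N = 2N'+2$ gives $\frac{N}{4} = \frac{N'+1}{2}$, the exceptional term becomes $\frac{\b}{32(N'+1)} I_{\frac{N'+1}{2}}^2$, present exactly when $\frac{N'+1}{2} \in \N$. Factoring $\frac{\b}{16(N'+1)}$ out of all fourth-order terms then reproduces (\ref{bnfdirbchain}) verbatim, so (i) requires only this bookkeeping.

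For part (ii) the key observation is that only the quadratic-in-$I$ part of $\H0bd$ contributes to the Hessian at $I=0$, and that this part carries the single overall scalar $\frac{\b}{16(N'+1)}$. Hence I would write $\Qbd = \frac{\b}{16(N'+1)} B$, where $B$ is a fixed symmetric $N' \times N'$ matrix \emph{independent} of $\b$, so that $\det \Qbd = \bigl(\frac{\b}{16(N'+1)}\bigr)^{N'} \det B$. Nondegeneracy of $\Qbd$ for every $\b \neq 0$ is therefore equivalent to the single statement $\det B \neq 0$ (this incidentally matches the degree-$N'$ and multiplicity-$N'$ assertions about $\det \Qbd$ recorded in Theorem \ref{bnfdirprop}(ii)).

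To prove $\det B \neq 0$ I would exploit the involution $k \mapsto N'+1-k$ on $\{1, \ldots, N'\}$ together with the elementary identities $s_{N'+1-k} = c_k$, $s_{2k} = 2 s_k c_k$, and $s_{2(N'+1-k)} = s_{2k}$, all of which follow from $\frac{(N'+1)\pi}{N} = \frac{\pi}{2}$. Writing $s = (s_k)_{1 \leq k \leq N'}$ and $D = \mathrm{diag}(s_k^2)$, these identities let one decompose $B = 8\, s\, s^T + E$, where $E$ collects the diagonal correction $-2D$ together with the anti-diagonal coupling produced by the term $-\sum_k s_{2k} I_k I_{N'+1-k}$. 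The point is that $E$ is block diagonal with respect to the pairing $k \leftrightarrow N'+1-k$: on each pair $\{k, N'+1-k\}$ it is the $2 \times 2$ matrix with diagonal entries $-2 s_k^2$ and $-2 c_k^2$ and off-diagonal entry $-4 s_k c_k$, whose determinant is $-12\, s_k^2 c_k^2$; since $0 < \frac{k\pi}{N} < \frac{\pi}{2}$ for $1 \leq k \leq N'$, both $s_k$ and $c_k$ are nonzero, so every such block is invertible. When $N'$ is odd the fixed index $k_0 = \frac{N'+1}{2}$ contributes a nonzero $1 \times 1$ block (equal to $-2$). Thus $E$ is invertible.

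Finally, as $8\, s\, s^T$ is a rank-one perturbation of $E$, I would apply the matrix determinant lemma, $\det B = \det E \cdot \bigl(1 + 8\, s^T E^{-1} s\bigr)$, and evaluate $s^T E^{-1} s$ block by block: a short computation shows each $2 \times 2$ block contributes $-\tfrac13$ and the middle block, if present, contributes $-\tfrac14$. Hence $1 + 8\, s^T E^{-1} s$ is strictly negative — for instance $\frac{3 - 4N'}{3} < 0$ when $N'$ is even — so $\det B \neq 0$ and $\Qbd$ is nondegenerate for all $\b \neq 0$. The main obstacle is purely organizational rather than conceptual: assembling $B$ with the correct numerical factors (the doublings arising from symmetrizing the off-diagonal sums, and the separate treatment of the self-paired middle index), after which the clean block structure, and hence invertibility, falls directly out of the three trigonometric identities above.
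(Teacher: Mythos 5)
Your proposal is correct, and while part (i) coincides with the paper's proof (both simply evaluate (\ref{bnfdirichletformula}) at $\a=0$ and factor out $\frac{\b}{16(N'+1)}$), your part (ii) takes a genuinely different route. The paper writes $\Qbd = \frac{2\b}{16(N'+1)}\,\Delta^{N'} P^D \Delta^{N'}$ with $\Delta^{N'}=\textrm{diag}(s_k)$, so that $P^D$ is an \emph{integer} matrix (diagonal entries $3$, antidiagonal entries $2$, all other entries $4$, middle diagonal entry $2$ when $N'$ is odd), and then settles nonvanishing by pure arithmetic: for $N'$ even the off-diagonal entries are all even, so $\det P^D \equiv 3^{N'} \equiv 1 \pmod 2$, and for $N'$ odd the same idea gives $\det P^D \equiv 2 \pmod 4$; either way $\det P^D \neq 0$. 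You instead keep the sines and split the ($\b$-independent) Hessian as the rank-one piece $8\,s\,s^T$ plus a matrix $E$ that is block diagonal under the pairing $k \leftrightarrow N'+1-k$, and your computations check out: the $2\times2$ blocks $\left( \begin{array}{cc} -2s_k^2 & -4s_kc_k \\ -4s_kc_k & -2c_k^2 \end{array} \right)$ have determinant $-12 s_k^2 c_k^2 \neq 0$, the middle $1\times1$ block is $-2$, each pair contributes $-\frac{1}{3}$ and the middle index $-\frac{1}{4}$ to $s^T E^{-1} s$, so the matrix determinant lemma gives $\det B = \det E\,(1+8\,s^TE^{-1}s)$ with $1+8\,s^TE^{-1}s = \frac{3-4N'}{3}$ (resp.\ $-1-\frac{4(N'-1)}{3}$ for $N'$ odd), which is strictly negative. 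Both arguments rest on the same identities $s_{N'+1-k}=c_k$, $s_{2k}=2s_kc_k$, $s_{(N'+1)/2}^2=\frac{1}{2}$; the real difference is the key lemma. The paper's parity trick is shorter but purely qualitative: it yields only $\det P^D \neq 0$, and the paper must compute the full spectrum of $P^D$ separately (Lemma \ref{signbchaindir} and Appendix \ref{pdeigenvcomp}) to obtain index information for Theorem \ref{bnfdirprop}. Your rank-one-update argument is slightly longer but quantitative: it produces an explicit signed formula for $\det\Qbd$, so nondegeneracy comes with the sign of the determinant for free, though it still does not by itself recover the eigenvalue multiplicities that the paper's appendix provides.
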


\begin{proof}
The Birkhoff normal form (\ref{bnfdirbchain}) of $H_V$ with Dirichlet boundary conditions is given by the formula (\ref{bnfdirichletformula}) evaluated at $\a=0$. To investigate the Hessian of $\Qbd$ of $\H0bd(I)$ at $I=0$, we write
\begin{equation} \label{q0bddecomp}
 \Qbd = \frac{2\b}{16(N'+1)} \Delta^{N'} P^D \Delta^{N'},
\end{equation}
where $\Delta^{N'} = \textrm{diag} \, \left( \sin\frac{k\pi}{2N'+2} \right)_{1 \leq k \leq N'}$ and $P^D$ is the $N' \times N'$-matrix which for $N'$ even resp. odd is of the form
\begin{displaymath}
\underbrace{\left( \begin{array}{cccccccccc}
3 & 4 & \ldots &&&&& \ldots & 4 & 2 \\
4 & 3 & 4 & \ldots &&& \ldots & 4 & 2 & 4 \\
\vdots && \ddots &&&&& \iddots && \vdots \\
&&& 3 & 4 & 4 & 2 &&& \\
&&& 4 & 3 & 2 & 4 &&& \\
&&& 4 & 2 & 3 & 4 &&& \\
&&& 2 & 4 & 4 & 3 &&& \\
\vdots && \iddots &&&&& \ddots && \vdots \\
4 & 2 & 4 & \ldots &&& \ldots & 4 & 3 & 4 \\
2 & 4 & \ldots &&&&& \ldots & 4 & 3
\end{array} \right)}_{(N' \; \textrm{even})}, \quad \underbrace{\left( \begin{array}{ccccccccc}
3 & 4 & \ldots &&&& \ldots & 4 & 2 \\
4 & 3 & 4 & \ldots && \ldots & 4 & 2 & 4 \\
\vdots && \ddots &&&& \iddots && \vdots \\
 &&& 3 & 4 & 2 &&& \\
 &&& 4 & 2 & 4 &&&  \\
 &&& 2 & 4 & 3 &&& \\
\vdots && \iddots &&&& \ddots && \vdots \\
4 & 2 & 4 & \ldots && \ldots & 4 & 3 & 4 \\
2 & 4 & \ldots &&&& \ldots & 4 & 3
\end{array} \right)}_{(N' \; \textrm{odd})},
\end{displaymath}
where we used that $s_{2k} = 2 s_k c_k = 2 s_k s_{N'+1-k}$ and, if $\frac{N'+1}{2} \in \N$, $s_{\frac{N'+1}{2}}^2 = \frac{1}{2}$. It follows that
        \[ \det \Qbd = \left( \frac{2\b}{16(N'+1)} \right)^{N-1} \cdot \det P^D \cdot \prod_{k=1}^{N'} \sin^2 \frac{k\pi}{2N'+2}.
\]
In order to see that $P^D$ is nonsingular, %consider the Leibniz formula $det P^D = \sum_{\sigma \in S_{N'}} \textrm{sign}(\sigma) \prod_{i=1}^{N'} P^D_{i, \sigma(i)}$. 
observe that $det P^D \in \Z$. % and single out the summand induced by the identical permutation, i.e. the product of the diagonal elements of $P^D$. If $N'$ is even, this summand is the only one in the entire sum which is an odd number, whereas if $N'$ is odd, it is the only summand which is not equivalent $0$ mod $4$. In both cases, this implies that $\det P^D \neq 0$.
For $N'$ even we show that $\det P^D \equiv 1$ mod $2$. Note that in this case the diagonal of $P^D$ consists of $3$'s only. Therefore $\det P^D \equiv 3^{N'} \, \textrm{mod } 2 \equiv 1 \, \textrm{mod } 2$. If $N'$ is odd, the same argument shows that $\det P \equiv 2$ mod $4$.
%The determinant of $P^D$ is given by
%       \[ \det P^D = \left\{ \begin{array}{ll}
%(-3)^\frac{N'-2}{2} \, (4N'-3) & \quad (N' \; \textrm{even}), \\
%(-3)^\frac{N'-3}{2} \, (-8N'+2) & \quad (N' \; \textrm{odd}).
%\end{array} \right.
%\]
Hence, if $\b \neq 0$, $\det \Qbd \neq 0$, and the nondegeneracy of the Hessian of $\H0bd(I)$ at $I=0$ follows.
\end{proof}

\begin{lemma} \label{signbchaindir}
If $\b < 0$, then $\Qbd$ has $\ulcorner \frac{N'+1}{2} \urcorner$ negative eigenvalues, whereas if $\b > 0$, then $\Qbd$ has $\llcorner \frac{N'-1}{2} \lrcorner$ negative eigenvalues. In particular, for any $\b \neq 0$, $\Qbd$ is indefinite (and $\H0bd$ is therefore not convex).
\end{lemma}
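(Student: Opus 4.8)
The plan is to read off the inertia of $\Qbd$ directly from the factorization established in the proof of Proposition \ref{bchaindir}, namely $\Qbd = \frac{2\b}{16(N'+1)} \Delta^{N'} P^D \Delta^{N'}$, where $\Delta^{N'}$ is the diagonal matrix whose entries $\sin\frac{k\pi}{2N'+2}$ ($1 \leq k \leq N'$) are strictly positive. Since $\Delta^{N'}$ is an invertible congruence, Sylvester's law of inertia shows that $\Qbd$ and $\frac{2\b}{16(N'+1)} P^D$ have the same numbers of positive and of negative eigenvalues. As $P^D$ is nonsingular (Proposition \ref{bchaindir}), its $N'$ real eigenvalues split into $p$ positive and $n$ negative ones with $p+n=N'$. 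The scalar $\frac{2\b}{16(N'+1)}$ is positive for $\b>0$ and negative for $\b<0$, so the number of negative eigenvalues of $\Qbd$ equals $n$ when $\b>0$ and equals $p$ when $\b<0$. Everything thus reduces to computing the inertia $(p,n)$ of the integer matrix $P^D$.

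Next I would exhibit $P^D$ in closed algebraic form. Writing $E$ for the $N'\times N'$ all-ones matrix, $F$ for the reversal matrix ($F_{ij}=1$ iff $i+j=N'+1$), and $\mathbf{1}$ for the all-ones vector, one checks from the displayed entries of $P^D$ that
\begin{displaymath}
  P^D = 4E - 2F - I \quad (N' \text{ even}), \qquad P^D = 4E - 2F - I + e_c e_c^{T} \quad (N' \text{ odd}),
\end{displaymath}
where $e_c$ is the standard basis vector at the central index $c=\frac{N'+1}{2}$; the extra rank-one term $e_c e_c^{T}$ accounts for the $\frac{1}{2}I_{(N'+1)/2}^2$ summand in (\ref{bnfdirbchain}). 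The decisive point is that $E=\mathbf{1}\mathbf{1}^{T}$ and $F$ commute (indeed $EF=FE=E$), so they are simultaneously diagonalized by the splitting $\R^{N'}=V_+\oplus V_-$ into the $F$-symmetric and $F$-antisymmetric subspaces, of dimensions $\ulcorner\frac{N'}{2}\urcorner$ and $\llcorner\frac{N'}{2}\lrcorner$. On $V_-$ one has $Fv=-v$, $Ev=0$ (antisymmetric vectors sum to zero) and $e_c^{T}v=0$, so $P^D|_{V_-}=+I$, contributing $\dim V_-$ positive eigenvalues.

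On $V_+$ one has $Fv=v$, so $P^D|_{V_+}=-3I+4\mathbf{1}\mathbf{1}^{T}$ (plus $e_c e_c^{T}$ when $N'$ is odd), a rank-one (resp. rank-two) positive-semidefinite perturbation of $-3I$. For $N'$ even the perturbation has range $\mathrm{span}(\mathbf{1})$, giving one eigenvalue $4N'-3>0$ and $\dim V_+ -1$ eigenvalues equal to $-3$. For $N'$ odd the range is $W=\mathrm{span}(\mathbf{1},e_c)$; off $W$ the eigenvalue is again $-3$, while on $W$ the inertia equals that of the $2\times 2$ matrix $B_{ij}=\langle w_i,P^D w_j\rangle$ in the basis $(\mathbf{1},e_c)$. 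Using $\langle\mathbf{1},\mathbf{1}\rangle=N'$, $\langle\mathbf{1},e_c\rangle=1$, $\langle e_c,e_c\rangle=1$ one finds $\det B=-2(4N'-1)(N'-1)<0$, so $W$ contributes exactly one positive and one negative eigenvalue. Collecting the counts in both parities yields $p=\dim V_-+1=\ulcorner\frac{N'+1}{2}\urcorner$ positive and $n=\dim V_+-1=\llcorner\frac{N'-1}{2}\lrcorner$ negative eigenvalues of $P^D$. Feeding this into the first paragraph gives the asserted eigenvalue counts of $\Qbd$, and since $p\geq 2$ and $n\geq 1$ for $N'\geq 3$, the matrix $\Qbd$ is indefinite for every $\b\neq 0$, whence $\H0bd$ is not convex.

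The computations I would relegate to routine are the passage from the tabulated entries of $P^D$ to the form $4E-2F-I(+e_ce_c^{T})$ and the evaluation of $\det B$. The only genuinely delicate point is the odd case: there the central correction $e_c e_c^{T}$ upgrades the perturbation of $-3I$ on $V_+$ from rank one to rank two, so that landing the negative count on $\llcorner\frac{N'-1}{2}\lrcorner$ hinges on $B$ being \emph{indefinite} rather than merely producing a single sign change; the determinant computation is precisely what secures this. The final matching of the floor and ceiling expressions against the two parities is then a short bookkeeping check.
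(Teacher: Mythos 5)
Your proposal is correct, and while it shares its skeleton with the paper's argument, the core computation is done by a genuinely different route. Both proofs reduce the lemma, via the factorization $\Qbd = \frac{2\b}{16(N'+1)}\,\Delta^{N'} P^D \Delta^{N'}$, to determining the inertia of $P^D$; you invoke Sylvester's law of inertia for the congruence by $\Delta^{N'}$, whereas the paper runs a homotopy $\Qbd(t)$ through symmetric nonsingular matrices — the same principle, so this difference is minor. The real divergence is in how the inertia of $P^D$ is found. The paper devotes Appendix \ref{pdeigenvcomp} (Lemma \ref{eigenvpdtheorem}) to computing the full characteristic polynomial of $P^D$ by recursive column expansions, obtaining every eigenvalue explicitly ($4N'-3$, $1$, $-3$ for $N'$ even; $1$, $-3$ and two roots of a quadratic for $N'$ odd) and then counting signs. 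You instead exploit the closed form $P^D = 4E - 2F - I\,(+\,e_c e_c^{T})$ and the reversal symmetry: the splitting $\R^{N'} = V_+ \oplus V_-$ diagonalizes everything except a rank-one (even case) or rank-two (odd case) piece, and the odd case is settled by the sign of a $2\times 2$ determinant. Your numbers check out against the paper's spectrum: $V_-$ carries the eigenvalue $1$ with multiplicity $\llcorner N'/2 \lrcorner$, the complement of the perturbation's range in $V_+$ carries $-3$, and $\det B = 2\bigl((4N'-3)N'+1\bigr) - (4N'-2)^2 = -2(4N'-1)(N'-1) < 0$ correctly forces one positive and one negative eigenvalue on $W$, reproducing the counts $p = \ulcorner \frac{N'+1}{2} \urcorner$, $n = \llcorner \frac{N'-1}{2} \lrcorner$. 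As for what each approach buys: the paper's computation yields the exact eigenvalues (a stronger statement, recorded as a separate lemma), while yours is shorter, bypasses the appendix entirely, extracts only the inertia actually needed, makes the role of the reversal symmetry transparent, and incidentally reproves the nonsingularity of $P^D$ without the mod-$2$ determinant argument of Proposition \ref{bchaindir}.
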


\begin{proof}
We want to use the decomposition (\ref{q0bddecomp}) of $\Qbd$ to show that $\Qbd$ can be deformed continuously to $\frac{2\b}{16(N'+1)} P^D$: Consider for $0 \leq t \leq 1$
\begin{displaymath}
  \Qbd(t) := \frac{2\b}{16(N'+1)} (t \, \Delta^{N'} + (1-t) \, \textrm{Id}) \; P^D \; (t \, \Delta^{N'} + (1-t) \, \textrm{Id}).
\end{displaymath}
As $t \, \Delta^{N'} + (1-t) \, \textrm{Id}$ is positive definite for any $0 \leq t \leq 1$ and $P^D$ is regular and symmetric, $\Qbd(t)$ is a symmetric regular $N' \times N'$-matrix for any $0 \leq t \leq 1$. For $t=0$, $\Qbd(0) = \frac{2\b}{16(N'+1)} P^D$, whereas for $t=1$, $\Qbd(1) = \Qbd$. Therefore, index$(\Qbd)$ (i.e. the number of negative eigenvalues of $\Qbd$) coincides with index$(\frac{2\b}{16(N'+1)} P^D)$. To describe the spectrum of $P^D$, we distinguish between $N'$ even and odd.

If $N'$ is even, the eigenvalues of $P^D$ are $4N'-3$ (with multiplicity one), $1$ (with multiplicity $\frac{N'}{2}$), and $-3$ (with multiplicity $\frac{N'}{2}-1$), hence $P^D$ has $\frac{N'}{2}-1$ negative eigenvalues. If $N'$ is odd, the eigenvalues of $P^D$ are $1$ (with multiplicity $\frac{N'-1}{2}$), $-3$ (with multiplicity $\frac{N'-3}{2}$), and $\frac{1}{2}(4N'-5) \left( 1 \pm \sqrt{1 + \frac{8(4N'-1)}{(4N'-5)^2}} \right)$ (each with multiplicity one), hence $P^D$ has $\frac{N'-1}{2}$ negative eigenvalues. These facts are verified in Appendix \ref{pdeigenvcomp}. The claim of the lemma now follows.
\end{proof}

We now turn to the case $\a \neq 0$.
\begin{prop} \label{dirchaingenproperties}
Assume that $\a \neq 0$ in (\ref{potentialdef}). Then, for $\a$ fixed, $\det \Qabd$ is a polynomial in $\b$ of degree $N'$ and has $N'$ real zeroes (counted with multiplicities). When denoted by $\b_k = \b_k(\a)$ ($1 \leq k \leq N'$) and listed in increasing order, they satisfy
        \[ \b_1 \leq \ldots \leq \b_{\ulcorner \frac{N'+1}{2} \urcorner} < \a^2 < \b_{\ulcorner \frac{N'+3}{2} \urcorner} \leq \ldots \leq \b_{N'}.
\]
Moreover
\begin{displaymath}
  \textrm{index} \, (\Qabd) = \left\{  \begin{array}{ll}
\ulcorner \frac{N'+1}{2} \urcorner & \quad \textrm{for } \b < \b_1 \\
0 & \quad \textrm{for } \b_{\ulcorner \frac{N'+1}{2} \urcorner} < \b < \b_{\ulcorner \frac{N'+3}{2} \urcorner} \\
\llcorner \frac{N'-1}{2} \lrcorner & \quad \textrm{for } \b > \b_{N'}
\end{array} \right.
\end{displaymath}
\end{prop}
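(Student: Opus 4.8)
The plan is to exploit the fact that, for fixed $\a$, the Hessian $\Qabd$ depends \emph{affinely} on $\b$, and to reduce everything to the spectral data of the $\a=0$ case already established in Proposition \ref{bchaindir} and Lemma \ref{signbchaindir}. Writing $\mu = \b - \a^2$ and inspecting the quadratic (in $I$) part of the normal form $\Habd$ in (\ref{bnfdirichletformula}), one sees that the $\a^2$-contribution is simply $\frac{\a^2}{16(N'+1)} \sum_{k=1}^{N'} I_k^2$, whose Hessian is the positive multiple of the identity $\frac{2\a^2}{16(N'+1)}\,\textrm{Id}$, while the entire $\mu$-contribution is $\a$-independent. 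Hence one obtains the decomposition
\[
  \Qabd = \frac{2\a^2}{16(N'+1)}\,\textrm{Id} + (\b - \a^2)\,B,
\]
where $B$ is a fixed symmetric matrix. Comparing the $\mu$-part of (\ref{bnfdirichletformula}) with (\ref{bnfdirbchain}) term by term — the diagonal $3 s_k^2$-entries, the exceptional $\frac{1}{2} I_\frac{N}{4}^2$-term, the off-diagonal $4 s_l s_m$-terms, and the $-s_{2k} I_k I_{N'+1-k}$-terms all coincide — shows that $B$ is precisely the matrix determined by $\Qbd = \b\,B$.

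Because the first summand is a \emph{scalar} multiple of the identity, $\Qabd$ and $B$ are simultaneously diagonalizable, so if $\nu_1, \ldots, \nu_{N'}$ denote the eigenvalues of $B$, then the eigenvalues of $\Qabd$ are
\[
  \l_j(\b) = \frac{2\a^2}{16(N'+1)} + (\b - \a^2)\,\nu_j, \qquad 1 \leq j \leq N'.
\]
From Proposition \ref{bchaindir}(ii), $\Qbd = \b B$ is nondegenerate for $\b \neq 0$, so every $\nu_j \neq 0$; and from Lemma \ref{signbchaindir} applied with $\b > 0$, the matrix $B$ has exactly $\llcorner \frac{N'-1}{2} \lrcorner$ negative and hence $\ulcorner \frac{N'+1}{2} \urcorner$ positive eigenvalues. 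Since each $\l_j$ is a nonconstant affine function of $\b$, $\det \Qabd = \prod_j \l_j(\b)$ is a polynomial of degree exactly $N'$, with the single real zero $\b_j = \a^2 - \frac{2\a^2}{16(N'+1)\,\nu_j}$ for each $j$.

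A zero lies below $\a^2$ precisely when $\nu_j > 0$ and above $\a^2$ precisely when $\nu_j < 0$; counting therefore gives $\ulcorner \frac{N'+1}{2} \urcorner$ zeros strictly below $\a^2$ and $\llcorner \frac{N'-1}{2} \lrcorner$ strictly above, which, after listing in increasing order, is exactly the claimed interlacing about $\a^2$ (the strict inequalities hold since no $\b_j$ can equal $\a^2$). The index statements then follow by counting the negative factors $\l_j$. For $\b$ below the smallest zero $\b_1$, one checks that $\l_j(\b) < 0$ iff $\nu_j > 0$ (for $\nu_j < 0$ the slope is negative and $\l_j \to +\infty$ as $\b \to -\infty$), giving index $\ulcorner \frac{N'+1}{2} \urcorner$; symmetrically, for $\b > \b_{N'}$ a factor is negative iff $\nu_j < 0$, giving index $\llcorner \frac{N'-1}{2} \lrcorner$. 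On the open interval $(\b_{\ulcorner \frac{N'+1}{2} \urcorner}, \b_{\ulcorner \frac{N'+3}{2} \urcorner})$, which contains $\a^2$ and no zero of $\det \Qabd$, the index is constant and equals its value at $\b = \a^2$, where $\Qabd = \frac{2\a^2}{16(N'+1)}\,\textrm{Id}$ is positive definite, hence $0$. I expect the only delicate point to be the term-by-term bookkeeping identifying $B$ with the $\a=0$ Hessian, in particular the exceptional diagonal contribution at $k = \frac{N}{4} = \frac{N'+1}{2}$ (present only when $N'$ is odd); once this identification is confirmed, the entire spectral reduction to Proposition \ref{bchaindir} and Lemma \ref{signbchaindir} is immediate.
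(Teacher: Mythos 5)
Your proposal is correct, and the key structural claim it rests on does hold: in (\ref{bnfdirichletformula}) the only $\a^2$-dependence sits in the diagonal term $\frac{\a^2}{16(N'+1)}\sum_k I_k^2$, so $\Qabd = \frac{2\a^2}{16(N'+1)}\,\textrm{Id} + (\b-\a^2)B$ with $B$ the fixed symmetric matrix satisfying $\Qbd = \b B$ (the exceptional diagonal entry at $k=\frac{N'+1}{2}$, present exactly when $N'$ is odd, matches on both sides since $\frac{1}{32(N'+1)} = \frac{1}{16(N'+1)}\cdot\frac12$). However, your route is genuinely different from the paper's. The paper argues softly: it gets the degree from $r_{N'}=\det(\Qd)\neq 0$, computes the index of $\Qabd$ for $|\b|$ large (where it agrees with that of $\Qbd$, by Lemma \ref{signbchaindir}) and at the Toda point $\b=\a^2$ (where $\Qtoda$ is a positive multiple of the identity, hence index $0$), and then counts: each change of index forces a real zero of $\det(\Qabd)$, so there are at least $\ulcorner\frac{N'+1}{2}\urcorner$ zeroes in $(-\infty,\a^2)$ and at least $\llcorner\frac{N'-1}{2}\lrcorner$ in $(\a^2,\infty)$, which saturates the degree bound $N'$. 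You instead exploit the pencil structure: since the $\a^2$-part is scalar, $\Qabd$ and $B$ diagonalize simultaneously, $\det\Qabd$ factors into the affine functions $\l_j(\b)$, and everything — reality of the zeroes, the closed-form expression $\b_j = \a^2\bigl(1-\frac{1}{8(N'+1)\nu_j}\bigr)$, their strict separation by $\a^2$, and the three index values — is read off directly. Your argument is stronger and more explicit (it even shows the zeroes scale like $\a^2$ and that their multiplicities equal the eigenvalue multiplicities of $B$), at the cost of the term-by-term bookkeeping identifying $B$; the paper's argument needs only the index at three locations plus continuity of eigenvalues and avoids that bookkeeping, but it delivers only the counting statement and leaves the constancy of the index on the complementary intervals implicit. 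Both proofs consume the same inputs, namely Proposition \ref{bchaindir}, Lemma \ref{signbchaindir}, and the special form of $\Qabd$ at $\b=\a^2$.
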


\begin{proof}
Fix $\a \in \R \setminus \{ 0 \}$ and consider the map $\b \mapsto \det (\Qabd)$. It follows from (\ref{bnfdirichletformula}) that $\det (\Qabd)$ is a polynomial in $\b$ of degree at most $N'$,
\begin{displaymath}
  \det (\Qabd) = \sum_{j=0}^{N'} r_j \b^j,
\end{displaymath}
where $r_0 = \det(\Qad)$ and $r_{N'} = \det(\Qd)$. By Proposition \ref{bchaindir}, $\det (\Qd) \neq 0$, hence the degree of the polynomial $\det (\Qabd)$ is $N'$. We claim that $\det(\Qabd)$ has $N'$ real zeroes (counted with multiplicities). For $|\b|$ large enough, index$(\Qabd)$ is equal to index$(\Qbd)$. By Lemma \ref{signbchaindir}, index$(\Qbd)$ is $\llcorner \frac{N'-1}{2} \lrcorner$ for $\b > 0$ and $\ulcorner \frac{N'+1}{2} \urcorner$ for $\b < 0$. Hence there exists $R > 0$ such that index$(\Qabd) = \llcorner \frac{N'-1}{2} \lrcorner$ for any $\b > R$ and index$(\Qabd) = \ulcorner \frac{N'+1}{2} \urcorner$ for any $\b < -R$. For $\b = \a^2$, $\Qtoda$ is a positive multiple of the identity matrix, hence index$(\Qtoda) = 0$. It then follows that, when counted with multiplicities, index$(\Qabd)$ must change at least $\ulcorner \frac{N'+1}{2} \urcorner$ times in the open interval $(-\infty, \a^2)$ and at least $\llcorner \frac{N'-1}{2} \lrcorner$ times in $(\a^2, \infty)$. Since a change of index$(\Qabd)$ induces a real zero of $\det (\Qabd)$, our consideration shows that $\b \mapsto \det (\Qabd)$ has $N'$ real zeroes. Further we have $\b_{\ulcorner \frac{N'+1}{2} \urcorner}(\a) < \a^2 < \b_{\ulcorner \frac{N'+3}{2} \urcorner}(\a)$. This proves the proposition.
\end{proof}

\begin{proof}[Proof of Theorem \ref{bnfdirprop}]
Part (i) is proved by Proposition \ref{dirchaingenproperties}, whereas (ii) follows from Propostion \ref{bchaindir} and Lemma \ref{signbchaindir}.
\end{proof}

\appendix

%\section{Critical points of $\mathcal{M}_{\g}$ of rank $0$}

%\input{classification.tex}

%\section{Proof of Lemma \ref{xkxnkproof}} \label{xkxnksection}

%\input{fixsproof.tex}

\section{Spectrum of the matrix $P^D$} \label{pdeigenvcomp}

Here we compute for any integer $N' \geq 3$ the eigenvalues of the $N' \times N'$ matrix $P^D$, given by
\begin{displaymath}
\underbrace{\left( \begin{array}{cccccccccc}
3 & 4 & \ldots &&&&& \ldots & 4 & 2 \\
4 & 3 & 4 & \ldots &&& \ldots & 4 & 2 & 4 \\
\vdots && \ddots &&&&& \iddots && \vdots \\
&&& 3 & 4 & 4 & 2 &&& \\
&&& 4 & 3 & 2 & 4 &&& \\
&&& 4 & 2 & 3 & 4 &&& \\
&&& 2 & 4 & 4 & 3 &&& \\
\vdots && \iddots &&&&& \ddots && \vdots \\
4 & 2 & 4 & \ldots &&& \ldots & 4 & 3 & 4 \\
2 & 4 & \ldots &&&&& \ldots & 4 & 3
\end{array} \right)}_{(N' \; \textrm{even})}, \quad \underbrace{\left( \begin{array}{ccccccccc}
3 & 4 & \ldots &&&& \ldots & 4 & 2 \\
4 & 3 & 4 & \ldots && \ldots & 4 & 2 & 4 \\
\vdots && \ddots &&&& \iddots && \vdots \\
 &&& 3 & 4 & 2 &&& \\
 &&& 4 & 2 & 4 &&&  \\
 &&& 2 & 4 & 3 &&& \\
\vdots && \iddots &&&& \ddots && \vdots \\
4 & 2 & 4 & \ldots && \ldots & 4 & 3 & 4 \\
2 & 4 & \ldots &&&& \ldots & 4 & 3
\end{array} \right)}_{(N' \; \textrm{odd})}.
\end{displaymath}

\begin{lemma} \label{eigenvpdtheorem}
If $N'$ is even, the eigenvalues of $P^D$ are $4N'-3$ (with multiplicity one), $1$ (with multiplicity $\frac{N'}{2}$), and $-3$ (with multiplicity $\frac{N'}{2}-1$). If $N'$ is odd, the eigenvalues of $P^D$ are $1$ (with multiplicity $\frac{N'-1}{2}$), $-3$ (with multiplicity $\frac{N'-3}{2}$), and $\frac{1}{2}(4N'-5) \left( 1 \pm \sqrt{1 + \frac{8(4N'-1)}{(4N'-5)^2}} \right)$ (each with multiplicity one).
\end{lemma}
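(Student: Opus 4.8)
The plan is to diagonalize $P^D$ by exploiting its two structural symmetries: its entries are constant off the diagonal and anti-diagonal, and it is invariant under the reversal $i \mapsto N'+1-i$. Writing $\mathbf{1} \in \R^{N'}$ for the all-ones vector and $E$ for the flip matrix with $E_{ij}=1$ iff $i+j=N'+1$, one reads off directly from the displayed entries that $P^D = 4\,\mathbf{1}\mathbf{1}^{T} - \textrm{Id} - 2E$ when $N'$ is even, and $P^D = 4\,\mathbf{1}\mathbf{1}^{T} - \textrm{Id} - 2E + e_ce_c^{T}$ when $N'$ is odd, where $c=(N'+1)/2$ is the central index and $e_c$ the corresponding standard basis vector; the extra rank-one term records that the central entry, where diagonal and anti-diagonal coincide, equals $2$ rather than $3$. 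Since $E^{2}=\textrm{Id}$ and $E$ commutes with each summand (using $E\mathbf{1}=\mathbf{1}$ and $Ee_c=e_c$), it commutes with $P^D$, so I would split $\R^{N'}=V_{+}\oplus V_{-}$ into the $(\pm1)$-eigenspaces of $E$, i.e. the symmetric vectors ($v_i=v_{N'+1-i}$, dimension $\lceil N'/2\rceil$) and the antisymmetric ones ($v_i=-v_{N'+1-i}$, dimension $\lfloor N'/2\rfloor$), and treat $P^D$ on each separately.

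On $V_{-}$ the restriction is immediate: an antisymmetric vector has vanishing coordinate sum, so $\mathbf{1}\mathbf{1}^{T}v=0$; it satisfies $Ev=-v$; and in the odd case its central coordinate is zero, so $e_ce_c^{T}v=0$ as well. Hence $P^Dv=-v+2v=v$ for every $v\in V_{-}$, so that $V_{-}$ is exactly the eigenspace for the eigenvalue $1$, of multiplicity $\lfloor N'/2\rfloor$, namely $N'/2$ for $N'$ even and $(N'-1)/2$ for $N'$ odd.

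On $V_{+}$ one has $Ev=v$. Since $\mathbf{1}$ and (in the odd case) $e_c$ lie in $V_{+}$ and are fixed by $E$, the subspace $W:=\textrm{span}(\mathbf{1})$ for $N'$ even, resp. $W:=\textrm{span}(\mathbf{1},e_c)$ for $N'$ odd, is invariant under $P^D$. On the orthogonal complement of $W$ inside $V_{+}$ all rank-one pieces annihilate, so $P^Dw=-w-2w=-3w$ there, giving the eigenvalue $-3$ with multiplicity $\dim V_{+}-\dim W$, i.e. $N'/2-1$ for $N'$ even and $(N'-3)/2$ for $N'$ odd. The remaining eigenvalues are those of $P^D|_{W}$. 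For $N'$ even one computes $P^D\mathbf{1}=(4N'-3)\mathbf{1}$, giving the simple eigenvalue $4N'-3$ and completing that case. For $N'$ odd, using $\mathbf{1}^{T}\mathbf{1}=N'$ and $\mathbf{1}^{T}e_c=1$, one finds $P^D\mathbf{1}=(4N'-3)\mathbf{1}+e_c$ and $P^De_c=4\,\mathbf{1}-2e_c$, so in the basis $\{\mathbf{1},e_c\}$ the operator $P^D|_{W}$ has matrix $\left(\begin{array}{cc}4N'-3 & 4\\ 1 & -2\end{array}\right)$, with trace $4N'-5$ and determinant $-2(4N'-1)$. Its eigenvalues are the roots of $\lambda^{2}-(4N'-5)\lambda-2(4N'-1)=0$, which are precisely $\frac{1}{2}(4N'-5)\left(1\pm\sqrt{1+8(4N'-1)/(4N'-5)^{2}}\right)$, as claimed.

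The argument is entirely elementary, so the only real demands are in the bookkeeping. First, the matrix identity for $P^D$ must faithfully encode the anomalous central entry in the odd case: omitting or misplacing the $e_ce_c^{T}$ correction would corrupt both the multiplicity of $-3$ and the pair of irrational eigenvalues. Second, for $N'$ odd the invariant plane $W$ is spanned by the non-orthonormal pair $\{\mathbf{1},e_c\}$, and I would emphasize that the eigenvalues of an operator on an invariant subspace are read off from its matrix in \emph{any} basis, so no orthonormalization is needed; matching the resulting trace and determinant with the stated closed form is then automatic. Finally, the dimension counts add to $N'$ and the four listed values are pairwise distinct for $N'\ge3$, which confirms that the multiplicities are exactly those asserted.
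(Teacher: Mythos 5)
Your proof is correct, and it takes a genuinely different route from the paper's. The paper computes the characteristic polynomial directly: it writes $P^D - \lambda\,\textrm{Id} = L^{(N')} + 4 \cdot 1_{N' \times N'}$ with $L^{(N')}$ supported on the diagonal and antidiagonal, exploits the rank-one structure of $1_{N' \times N'}$ so that the column expansion of the determinant retains only terms with at most one all-four column, evaluates $\det L^{(N')}$ and the modified determinants $\det L_j^{(N')}$ by two-step recursions in $N'$, and finally factors the resulting polynomial in the shifted variable $\mu = -1-\lambda$. You instead diagonalize: the identity $P^D = 4\,\mathbf{1}\mathbf{1}^{T} - \textrm{Id} - 2E$ (plus $e_ce_c^{T}$ in the odd case, correctly encoding the anomalous central entry $2$) checks out entry by entry, the involution $E$ does commute with $P^D$, and the splitting into symmetric and antisymmetric vectors lets you read off the eigenvalues $1$ and $-3$ on explicit eigenspaces where the rank-one terms vanish, reducing everything else to the invariant subspace $W$ of dimension one (even case) or two (odd case). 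Your $2\times 2$ matrix $\left(\begin{array}{cc} 4N'-3 & 4 \\ 1 & -2 \end{array}\right)$ has trace $4N'-5$ and determinant $-2(4N'-1)$, and its characteristic polynomial $\lambda^2 - (4N'-5)\lambda - 2(4N'-1)$ agrees with the paper's factor $\mu^2 + (4N'-3)\mu - (4N'+2)$ under $\lambda = -1-\mu$, so the two computations are consistent. What your approach buys is conceptual transparency: explicit eigenvectors, no determinant recursion, and a structural explanation of the spectrum (one eigenvalue per rank-one correction plus the two ``bulk'' values $1$ and $-3$); the closing distinctness check you include is indeed necessary to pin the multiplicities exactly, and it holds for $N' \geq 3$. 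What the paper's recursion buys is a self-contained calculation of the full characteristic polynomial that never requires guessing or verifying a closed-form matrix identity for $P^D$.
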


\begin{proof}
Throughout this proof, antidiag$(a_1, \ldots, a_{N'})$ denotes the ``antidiagonal'' $N' \times N'$-matrix $M$ with $M_{kl} = a_l$ if $k+l=N'+1$ and $M_{kl} = 0$ otherwise.

First consider the case where $N'$ is even. We write $P^D$ in the form
        \[ P^D = \textrm{diag}(-1, \ldots, -1) + \textrm{antidiag}(-2, \ldots, -2) + 4 \cdot 1_{N' \times N'}
\]
and, with $\mu := -1-\l$,
        \[ P^D - \l \textrm{Id} = \underbrace{\textrm{diag}(\mu, \ldots, \mu) + \textrm{antidiag}(-2, \ldots, -2)}_{=: L^{(N')}} + 4 \cdot 1_{N' \times N'}.
\]
Here $1_{N' \times N'}$ denotes the $N' \times N'$-matrix whose entries are all equal to $1$.

We compute $\det(P^D - \l \textrm{Id}) = \det(L^{(N')} + 4 \cdot 1_{N' \times N'})$ by column expansion. Note that in the column expansion of the determinant only those terms contribute which are determinants of matrices containing at most one column consisting of entries all equal to four. We obtain
\begin{equation} \label{detpdexp}
  \det(P^D - \l \textrm{Id}) = \det(L^{(N')}) + \sum_{j=1}^{N'} \det(L_j^{(N')}),
\end{equation}
where $L_j^{(N')}$ is defined as the matrix $L^{(N')}$ with the $j$-th column replaced by the column $4 \cdot 1_{N' \times 1}$. By expansion with respect to the first column and then the last column, the determinant of $L^{(N')}$ can be computed recursively,
        \[ \det(L^{(N')}) = (\mu^2 - 2^2) \det(L^{(N'-2)})
\]
Since $\det(L^{(2)}) = \mu^2 - 4$, it follows by induction that
\begin{equation} \label{lmudet}
        \det(L^{(N')}) = (\mu^2 - 4)^{\frac{N'}{2}}.
\end{equation}
To compute $\det(L_1^{(N')})$, we expand the determinant in the same way and obtain the identity $\det(L_1^{(N')}) = 4(\mu + 2) \det(L^{(N'-2)})$, from which it follows that
\begin{equation} \label{lmu1det}
        \det(L_1^{(N')}) = 4(\mu + 2)(\mu^2 - 4)^{\frac{N'}{2} - 1}.
\end{equation}
Similarly one gets $\det(L_2^{(N')}) = (\mu^2 - 4) \det(L_1^{(N'-2)})$, and thus
\begin{equation} \label{lmu2det}
        \det(L_2^{(N')}) = \det(L_1^{(N')}) = 4(\mu + 2)(\mu^2 - 4)^{\frac{N'}{2} - 1}.
\end{equation}
For any $1 < j < \frac{N'}{2}$, this procedure leads to $\det(L_j^{(N')}) = (\mu^2 - 4) \det(L_{j-1}^{(N'-2)})$ and hence
\begin{equation} \label{lmujdet}
        \det(L_j^{(N')}) = \det(L_1^{(N')}) = 4(\mu + 2)(\mu^2 - 4)^{\frac{N'}{2} - 1}.
\end{equation}
For $j > \frac{N'}{2}$, note that $\det L_j^{(N')} = \det L_{N-j}^{(N')}$, since $L_j^{(N')}$ and $L_{N-j}^{(N')}$ can be transformed into each other by exchanging the $j$'th and the $(N-j+1)$'th columns and then the $j$'th and the $(N-j+1)$'th rows. By (\ref{lmudet})-(\ref{lmujdet}), we obtain
\begin{eqnarray*}
  \det(P^D - \l \textrm{Id}) & = & (\mu^2 - 4)^{\frac{N'}{2} - 1} \cdot ((\mu^2 - 4) + N' \cdot 4(\mu + 2)) \\
  & = & (\mu^2 - 4)^{\frac{N'}{2} - 1} (\mu + 2) (\mu - 2 + 4N').
\end{eqnarray*}
Hence, if $N'$ is \emph{even}, the zeroes of $\det(P^D - \l \textrm{Id})$ are $\mu = 2$ (with multiplicity $\frac{N'}{2} - 1$), $\mu = -2$ (with multiplicity $\frac{N'}{2}$), and $\mu = -4N' + 2$ (with multiplicity $1$). Transforming back to $\l = -1-\mu$, we obtain the claimed eigenvalues.

It remains to consider the case where $N'$ is \emph{odd}. Again, we write
        \[ P^D = \textrm{diag}(-1, \ldots, -1,\overbrace{0}^{(\frac{N'+1}{2})},-1, \ldots, -1) + \textrm{antidiag}(-2, \ldots, -2) + 4 \cdot 1_{N' \times N'}.
\]
With $\mu = -1-\l$ we get
  \[ P^D - \l \textrm{Id} = L^{(N')} + 4 \cdot 1_{N' \times N'}
\]
with
  \[ L^{(N')} = \left( \begin{array}{ccccc}
        \mu & 0 & \ldots & 0 & -2 \\
        0 & \ddots && \iddots & 0 \\
         \vdots && \mu-1 && \vdots \\
        0 & \iddots && \ddots & 0 \\
        -2 & 0 & \ldots & 0 & \mu
        \end{array} \right).
\]

As above, we obtain the expansion (\ref{detpdexp}) for the determinant of $P^D - \l \textrm{Id}$. We expand $\det(L^{(N')})$ with respect to the first column and then the last column, yielding the recursion formula
        \[ \det(L^{(N')}) = (\mu^2 - 4) \det(L^{(N'-2)}),
\]
which together with $\det(L^{(1)}) = \mu - 1$ leads to
\begin{equation} \label{lmudetodd}
        \det(L^{(N')}) = (\mu^2 - 4)^{\frac{N'-1}{2}} (\mu - 1).
\end{equation}
For $\det(L_1^{(N')})$, we obtain the identity $\det(L_1^{(N')}) = 4(\mu + 2) \det(L^{(N'-2)})$ and hence
\begin{equation} \label{lmu1detodd}
        \det(L_1^{(N')}) = 4(\mu + 2)(\mu^2 - 4)^{\frac{N'-3}{2}}(\mu - 1).
\end{equation}
More generally, for any $1 < j < \frac{N'}{2}$, we have
\begin{equation} \label{lmujdetodd}
        \det L_{N-j}^{(N')} = \det(L_j^{(N')}) = \det(L_1^{(N')}).% = 4(\mu + 2)(\mu^2 - 4)^{\frac{N'-3}{2}}(\mu - 1).
\end{equation}
It remains to compute $\det L_{\frac{N'+1}{2}}^{(N')}$. Expanding $\det L_{\frac{N'+1}{2}}^{(N')}$ by the first column and then the last column, we obtain the recursion relation
        \[ \det L_{\frac{N'+1}{2}}^{(N')} = (\mu^2 - 4) \det L_{\frac{(N'-2)+1}{2}}^{(N'-2)}.
\]
Together with $\det L_2^{(3)} = \det \left( \begin{array}{ccc} \mu & 4 & -2 \\ 0 & 4 & 0 \\ -2 & 4 & \mu \end{array} \right) = 4(\mu^2 - 4)$, this implies
\begin{equation} \label{lmuspecialodd}
        \det L_{\frac{N'+1}{2}}^{(N')} = 4 (\mu^2 - 4)^{\frac{N'-1}{2}}.
\end{equation}
Hence, combining (\ref{lmudetodd})-(\ref{lmuspecialodd}) we obtain
\begin{eqnarray*}
  \det(P^D - \l \textrm{Id}) & = & (\mu^2 \!\!-\!\! 4)^{\frac{N'\!\!-\!\!3}{2}} \cdot \left( (\mu^2 \!\!-\!\! 4)(\mu \!\!-\!\! 1) + (N'\!\!-\!\!1) \cdot 4(\mu \!\!+\!\! 2)(\mu \!\!-\!\! 1) + 4 (\mu^2 \!\!-\!\! 4) \right) \\
  & = & (\mu^2 - 4)^{\frac{N'-3}{2}} (\mu + 2) (\mu^2 + (4N'-3) \mu - (4N'+2)).
\end{eqnarray*}
Hence, if $N'$ is odd, the zeroes of $\det(P^D - \l \textrm{Id})$ are $\mu = 2$ (with multiplicity $\frac{N'-3}{2}$), $\mu = -2$ (with multiplicity $\frac{N'-1}{2}$), and
        \[ \mu = -\frac{1}{2}(4N'-3) \pm \frac{1}{2} \sqrt{16 N'^2 - 8N' + 17}
\]
(each with multiplicity $1$). Transforming back to $\l = -1-\mu$, we obtain the claimed formulas for the eigenvalues in the case where $N'$ is odd.
%       \[ -1 - \frac{1}{2} \left( -(4N' \!-\! 3) \pm \sqrt{16 N'^2 \!-\! 8N' \!+\! 17} \right) = \frac{1}{2}(4N' \!-\! 5) \left( 1 \pm \sqrt{1 + \frac{8(4N' \!-\! 1)}{(4N'-5)^2}} \right).
%\]
This completes the proof of Lemma \ref{eigenvpdtheorem}.
\end{proof}

\vspace{.4cm}

\textsc{Institut f\"ur Mathematik, Universit\"at Z\"urich, Winterthurerstrasse 190, CH-8057 Z\"urich, Switzerland} \\
\emph{E-mail address:} \texttt{andreas.henrici@math.unizh.ch}

\vspace{.4cm}

\textsc{Institut f\"ur Mathematik, Universit\"at Z\"urich, Winterthurerstrasse 190, CH-8057 Z\"urich, Switzerland} \\
\emph{E-mail address:} \texttt{thomas.kappeler@math.unizh.ch}


\begin{thebibliography}{99}

%\input{biblio.tex}

\bibitem{bapo1}
\textsc{D. Bambusi \& A. Ponno}, Korteweg-de Vries equation and energy sharing in Fermi-Pasta-Ulam. \emph{CHAOS} \textbf{15} (2005), 015107 (5p).

\bibitem{bapo2}
\textsc{D. Bambusi \& A. Ponno}, On Metastability in FPU. \emph{Comm. Math. Phys.} \textbf{264} (2006), 539-561.


\bibitem{beiz}
\textsc{G. P. Berman \& F. M. Izrailev}, The Fermi-Pasta-Ulam problem: $50$ years of progress. \emph{CHAOS} \textbf{15} (2005), 015104. %\emph{arXiv:nlin.CD/0

\bibitem{cuba}
\textsc{R. H. Cushman \& L. M. Bates}, \emph{Global aspects of classical integrable systems}. Birkh\"auser, Basel, 1997.

\bibitem{fpu}
\textsc{E. Fermi, J. Pasta \& S. Ulam}, Studies of non linear problems. \emph{Los Alamos Rpt.} \textbf{LA-1940} (1955). In: \emph{Collected Papers of Enrico Fermi}. University of Chicago Press, Chicago, 1965, Volume II, 978-988. Theory, Methods and Applications, 2nd ed., Marcel Dekker, New York, 2000.

\bibitem{ahtk3}
\textsc{A. Henrici \& T. Kappeler}, Birkhoff normal form for the periodic Toda lattice. To appear in Contemp. Math.%\texttt{arXiv:nlin.SI/0609045}.

\bibitem{ahtk4}
\textsc{A. Henrici \& T. Kappeler}, Results on normal forms for FPU chains. To appear in Comm. Math. Phys. %\texttt{arXiv:nlin.SI/0611063}.

\bibitem{nishida}
\textsc{T. Nishida}, A note on an existence of conditionally periodic oscillation in a one-dimensional lattice. \emph{Mem. Fac. Engrg. Kyoto Univ.} \textbf{33} (1971), 27-34.

\bibitem{poeschelkam}
\textsc{J. P\"oschel}, Integrability of Hamiltonian Systems on Cantor Sets.
\emph{Comm. Pure Appl. Math.} \textbf{35} (1982), 653-695.

\bibitem{poeschelnekh2}
\textsc{J. P\"oschel}, On Nekhoroshev's Estimate at an Elliptic Equilibrium.
\emph{Int. Math. Res. Not.} \textbf{4} (1999), 203-215.

\bibitem{rink01}
\textsc{B. Rink}, Symmetry and resonance in periodic FPU chains. \emph{Comm. Math. Phys.} \textbf{218} (2001), 665-685.


\bibitem{rink02}
\textsc{B. Rink}, Direction reversing travelling waves in the Fermi-Pasta-Ulam chain. \emph{J. Nonlinear Science} \textbf{12} (2002), 479-504.

\bibitem{rink06}
\textsc{B. Rink}, Proof of Nishida's conjecture on anharmonic lattices. \emph{Comm. Math. Phys.} \textbf{261} (2006), 613-627.

\bibitem{toda}
\textsc{M. Toda}, \emph{Theory of Nonlinear Lattices}, 2nd enl. ed., Springer Series in Solid-State Sciences \textbf{20},
Springer, Berlin, 1989.

\bibitem{weissert}
\textsc{T. P. Weissert}, \emph{The genesis of simulation in dynamics: pursuing the Fermi-Pasta-Ulam problem}, Springer, New York, 1997.




\end{thebibliography}
\end{document}